\newtheorem{theorem}{Theorem}
\newtheorem{corol}{Corollary}
\newtheorem{lemma}[theorem]{Lemma}
\newtheorem{assumption}{Assumption}
\newtheorem{definition}{Definition}
\DeclareRobustCommand{\officialeuro}{%
  \BeginAccSupp{%
    method=hex,
    unicode,
    ActualText=20AC,
  }%
    \ifmmode\expandafter\text\fi
    {%
      \fontencoding{U}\fontfamily{eurosym}\selectfont e%
    }%
  \EndAccSupp{}%
}
\theoremstyle{plain}
\newtheorem{prop}{\protect\propositionname}
\DeclareMathOperator*{\argmax}{arg\,max}
\theoremstyle{plain}
\providecommand{\propositionname}{Proposition}
\providecommand{\corollaryname}{Corollary}
\newif\ifshowchanges
\definecolor{chgcolor}{RGB}{200,0,0}
\title{Social preferences or moral concerns: \\ What drives rejections in the Ultimatum game?\thanks{We are grateful to Ingela Alger, Ernesto María Gavassa-Pérez, Emin Karagözoğlu, Enrico Mattia-Salonia, Esteban Muñoz-Sobrado and Juan Sebastián Pereyra for their valuable discussions. We thank Pablo Brañas-Garza and Antonio Espín for kindly sharing their data with us. Pau Juan-Bartroli acknowledges funding from the European Research Council (ERC) under the European Union's Horizon 2020 research and innovation programme (grant agreement No 789111 - ERC EvolvingEconomics).}}
\author{Pau Juan-Bartroli\thanks{\color{blue}pau.juanbartroli@tse-fr.eu} \\
	\textit{Toulouse School of Economics}
	\and 
	José Ignacio Rivero-Wildemauwe\thanks{\color{blue}joseignacio.rivero@ucu.edu.uy} \\
	\textit{Universidad Católica del Uruguay}}
\date{}
\begin{document}

\maketitle

\begin{abstract}
Rejections of positive offers in the Ultimatum Game have been attributed to different motivations. We show that a model combining social preferences and moral concerns provides a unifying explanation for these rejections while accounting for additional evidence. Under the preferences considered, a positive degree of spite is a necessary and sufficient condition for rejecting positive offers. This indicates that social preferences, rather than moral concerns, drive rejection behavior. This does not imply that moral concerns do not matter. We show that rejection thresholds increase with individuals’ moral concerns, suggesting that morality acts as an amplifier of social preferences. Using data from \cite{van2023estimating}, we estimate individuals’ social preferences and moral concerns using a finite mixture approach. Consistent with previous evidence, we identify two types of individuals who reject positive offers in the Ultimatum Game, but that differ in their Dictator Game behavior.
\vspace{0.15cm}

{\small \noindent\textbf{JEL codes}: C78, D64, D81, D91

\noindent\textbf{Keywords:} Bargaining, Social preferences, Morality, Ultimatum Game}
\end{abstract}

\newpage

\section{Introduction}

In relevant economic interactions, individuals bargain over how to divide resources. Examples include a buyer and seller negotiating the price of an item, an employer and employee bargaining over wages, and team members determining how to allocate responsibilities. The simplest model to represent these interactions is the Ultimatum Game (UG) (\Citealp{guth1982experimental}). In this game, the \textit{proposer} offers a division of an endowment to the \textit{responder}. If the \textit{responder} accepts this offer, the division is implemented. If the \textit{proposer} rejects this offer, both individuals receive nothing. If both individuals maximize their material payoff, and this is common knowledge, then the unique Subgame Perfect Nash Equilibrium is one where the \textit{responder} accepts any positive offer and the \textit{proposer} offers the lowest possible positive amount. However, the evidence contradicts this prediction, as responders frequently reject positive offers (\Citealp{camerer2003behavioral}; \Citealp{guth2014more}). As discussed below, several studies  propose different explanations of these rejections \citep{levine1998modeling, fehr1999theory, xiao2005emotion, ccelen2017blame, karagozouglu2018time, aina2020frustration}. \par 

In this paper, we show that combining social preferences with moral concerns can account for these behaviors while also explaining other observed patterns. Our analysis also provides insights into how social preferences interact with moral concerns. Understanding how different motives interact in shaping behavior remains an underexplored area of research \citep{gavassa2022moral}. However, recent evidence suggests that prosociality is often driven by a general moral preference to ``do the right thing'' \citep{capraro2018right}, and that deontological motives coexist with outcome-based concerns \citep{chen2022social}. By integrating both dimensions, our framework captures interactions that single-motive models fail to explain.
\par 
Importantly, we do not treat social preferences and moral concerns in isolation. While both concerns have substantial empirical support on their own \citep[e.g.][]{miettinen2020revealed}, evolutionary analysis suggests that it is their \emph{combination} that is favored by natural selection, especially in small-stakes  interactions \citep{alger2020evolution}. This prediction is consistent with recent experimental evidence 
\citep{van2023estimating,alger2024doing}. To the best of our knowledge, this is the first theoretical paper to examine bargaining behavior under the joint presence of social preferences and moral concerns. We show that this interaction is crucial to explain the evidence.

\par

Before discussing our main results, we describe individuals' utility function. Individuals exhibit social preferences when they attach some weight to their opponent's material payoff, which may vary depending on whether the opponent's material payoff exceeds or falls below the individual's own payoff \citep{becker1976altruism,fehr1999theory}.\footnote{In our setting, aversion to disadvantageous inequality is equivalent to spite (i.e., attaching a negative weight to others' material payoff). This term is also referred to as \textit{envy} (\citealp{mui1995economics}; \citealp{levine1998modeling}).} Moral concerns are modeled through universalization reasoning, where individuals act \textit{as if} they assign some weight to the scenario in which others adopt the same strategy as themselves.\footnote{This reasoning can be interpreted in several ways: the individual (i) considers what would happen if all other individuals followed his own course of action; (ii) believes that others are generally similar and will therefore make the same choices (similar to the \textit{false consensus effect}, \citealp{ross1977falseconsensus,dawes1989statistical,butler2015trust}); or (iii) thinks his action influences the likelihood that others will select it as well (similar to \textit{magical thinking}, \citealp{Shafir1992,Daley2017}).}$^,$\footnote{We acknowledge that morality cannot be reduced to universalization reasoning. We focus on this formulation due to its prominence in both theoretical and experimental work. \cite{alger2013homo} and \cite{alger2020evolution} establish its evolutionary foundations, showing that universalization reasoning can emerge as an adaptive trait that promotes cooperation and group cohesion. For alternative moral concepts in economics, see \cite{harsanyi1955cardinal},  \cite{sen1977rational}, \cite{roth2007repugnance}, \cite{roemer2010kantian}, \cite{smith2019humanomics}, \cite{gavassa2022moral} and \cite{drouvelis2024redistribution}.} Under this reasoning, individuals suffer a disutility when selecting a rejection threshold above their own offer. This occurs because in a hypothetical scenario where an individual is matched with someone mimicking their strategy, the resulting payoff is zero—as offers would be rejected.

\par 
Our analysis relies on two features that depart from more standard treatments of the UG (see Section \ref{UG} for more details). First, instead of focusing on computing the game's Nash equilibria, which we relegate to Appendix \ref{AppB}, we consider the strategy that maximizes an individual's utility given his (potentially incorrect) beliefs. We adopt this approach as the former results in (i) multiplicity of equilibria, (ii) no rejections in equilibrium, and (iii) the need to assume that all individuals hold correct beliefs.\footnote{Departing from equilibrium analysis aligns with the approach in \cite{aina2020frustration}, which considers a \textit{subjective rationality} condition in which individuals maximize their utility given their beliefs and preferences.} Second, we consider the UG with \textit{role uncertainty}, where individuals select their offers and rejection thresholds without knowing their realized role and are assigned to each role with equal probability.\footnote{This approach is common in studies examining universalization reasoning in asymmetric games (e.g., \citealp{alger2013homo}; \citealp{munoz2022taxing};  \citealp{salonia2023afoundation}; \citealp{van2023estimating}; \citealp{wildemauwe2023trade}).} This assumption is essential for computing the universalization counterfactual—i.e., ``what if others acted like me?''—and enables us to directly study the relationship between individuals' own offers and their rejection thresholds (see Corollary \ref{C2}).

\par \vspace{0.15cm}
\noindent \textbf{Results:} Our main result characterizes individuals' optimal strategy in the UG as a function of their degrees of spite and morality. We show that the resulting strategy partitions the parameter space into three regions (Proposition \ref{P1}). The optimal strategy gives three main insights. First, the rejections of positive offers in the UG can be rationalized by spite. Specifically, a positive degree of spite is both a necessary and sufficient condition for selecting a positive rejection threshold. In contrast, a positive degree of morality is not even a necessary condition for doing so. This relates to studies showing the spiteful nature of rejections in the UG (\citealp{falk2005driving}; \citealp{herrmann2008antisocial}; \citealp{branas2014fair}).\footnote{ This literature cast doubt on the explanation that individuals’ (prosocial) preferences for fairness motivate the rejection of low offers in the UG (\Citealp{fehr2003nature}; \Citealp{gintis2003explaining}). According to this explanation, individuals who make high transfers in the DG are more likely to select high rejection thresholds.} We argue that the prosocial and spiteful high-punishers documented in \cite{branas2014fair} can be reconciled by considering varying levels of spite and universalization reasoning (see Section \ref{OUS}). Specifically, prosocial punishers exhibit high levels of universalization reasoning and low but positive levels of spite, whereas spiteful punishers exhibit high levels of spite and moderate levels of universalization reasoning (see Table \ref{tab:paujose_core}).

Second, for sufficiently small levels of spite, individuals' offers and rejection thresholds increase in their degree of morality. Thus, although positive rejection thresholds are driven by spite, this does not imply that individuals' degree of morality is irrelevant. Universalization reasoning acts as an \textit{amplifier} of individuals' social preferences, which is a novel observation in the literature. This is consistent with \cite{kimbrough2016norms} and \cite{capraro2022moral} who document a positive correlation between rejection thresholds and measures of morality. Third, our theory predicts a relationship between individuals' offers and their rejection thresholds. If an individual assigns a sufficiently large weight to universalization reasoning, he will select an offer above his rejection threshold. This relates to the results in \cite{ccelen2017blame} and \cite{candelo2019proposer}. \footnote{\cite{ccelen2017blame} document that individuals are less likely to reject offers that are higher than what they would offer as proposers. \cite{candelo2019proposer} find that offers and (hypothetical) rejection thresholds are positively correlated. These authors interpret these findings as evidence of reciprocity and the \textit{false consensus effect}, respectively. We argue that universalization reasoning also accounts for these observations.}
\par \vspace{0.15cm}
\noindent \textbf{Out-of-sample predictions:} While our theoretical model predicts the optimal strategy, identifying which individuals select high rejection thresholds remains an open question. This is because individuals’ rejection thresholds can result from different combinations of parameters. To address this, we use data from \cite{van2023estimating} to compute predicted behavior in both the Dictator Game (DG) and UG, at both the individual and aggregate levels. This approach allows us to examine which types of individuals select positive rejection thresholds in the UG, given their DG behavior, and to infer their underlying preferences.\footnote{We use the DG as it abstracts away from strategic considerations isolating individuals’ motives more clearly. Additionally, it enables us to determine whether high rejection thresholds are selected by individuals who make large transfers in the DG, as predicted by models where punishment is driven by prosocial motives.} In the three-type model, two types of individuals select positive rejection thresholds: those who transfer nothing in the DG and those who transfer a substantial share of the endowment. These types differ in their social preferences but exhibit similar levels of universalization reasoning.\footnote{Our out-of-sample predictions are quantitatively lower than those reported in previous studies (e.g., \citealp{branas2014fair}), with the discrepancy being particularly pronounced for rejection thresholds. We interpret these predictions primarily as a qualitative exercise, given the challenges of comparing estimates from experimental settings that differ from our theoretical framework.} The predicted behavior aligns with the key insights of our model: (i) rejections in the UG are driven by spite, and (ii) universalization reasoning amplifies social preferences. \par \vspace{0.15cm}
\noindent \textbf{Related literature}: This paper contributes to the theoretical and experimental literature on rejection thresholds in the UG (see \cite{camerer2003behavioral} and \cite{guth2014more} for reviews). Previous studies have attributed such rejections to various motivations, including spite (\Citealp{levine1998modeling}), reciprocity (\citealp{ccelen2017blame}), inequity aversion (\Citealp{fehr1999theory}), fairness (\Citealp{karagozouglu2018time}), anger (\Citealp{xiao2005emotion}) and frustration (\Citealp{aina2020frustration}). We contribute to this literature by showing that adding universalization reasoning to a model of social preferences explains several results in a unifying way, including some not accounted for by previous models. More precisely, we explain (i) the rejection of positive offers by subjects who behave either selfishly or prosocially in the DG, (ii) the correlation between rejection thresholds and measures of morality (\citealp{kimbrough2016norms}; \citealp{capraro2022moral}), (iii) why individuals are more likely to reject offers if those are lower than the offer they would have selected as proposers (\citealp{ccelen2017blame}), and (iv) the positive correlation between offers and (hypothetical) rejection thresholds (\citealp{candelo2019proposer}). \par

Finally, this paper contributes to the growing literature using universalization reasoning to explain behavior in several settings. This literature builds on \cite{alger2013homo} and \cite{alger2016evolution} that show the evolutionary stability of universalization reasoning.\footnote{See \cite{alger2023evolutionarily} for a review of the evolutionary foundations of universalization reasoning. For its axiomatic foundations, see \cite{salonia2023afoundation}. For another model of universalization reasoning, see \cite{brekke2003economic}. An alternative way of introducing morality is through the \textit{Kantian equilibrium}, which modifies the equilibrium concept (\Citealp{roemer2010kantian}; \Citealp{roemer2015kantian}). See \cite{dizarlar2023kantian} for the study of bargaining interactions using the Kantian equilibrium.} This reasoning has been used in theoretical studies to rationalize voting (\Citealp{alger2022homo}), paying taxes (\Citealp{munoz2022taxing}), climate change concerns (\Citealp{eichner2021climate}) and behavior in teams (\Citealp{sarkisian2017team}). \cite{alger2017strategic} analyze universalization reasoning and altruism separately across several games. In addition, the mechanism has been successfully tested in laboratory settings (\Citealp{miettinen2020revealed}; \Citealp{juanbartroli2023norms}; \citealp{van2023estimating};  \Citealp{alger2024doing}). Importantly, \cite{van2023estimating} structurally estimate a model combining universalization reasoning and social preferences and find that this model is better at explaining behavior than one considering only one of these concerns separately. This evidence provides an important justification for studying the predictions of a model combining both social preferences and universalization reasoning.\footnote{\cite{alger2024doing} show that making role uncertainty salient makes subjects behave more prosocially. They also structurally estimate the intensity of social preferences and universalization reasoning, finding positive and significant values for the latter.} 

Closer to our work, \cite{juan2024moral}, \cite{rivero2023moral}, and \cite{wildemauwe2023trade} study theoretically the effect of universalization reasoning in bargaining interactions.\footnote{\cite{alger2012homo} characterize the UG's Nash equilibria when individuals have universalization reasoning. We extend their results by computing the Nash equilibria with individuals who have both universalization and social concerns (see Appendix \ref{AppB}). } \cite{juan2024moral} study how the equilibrium of a Divide-the-Dollar game changes with universalization reasoning and compare it with other preferences. \cite{rivero2023moral} and \cite{wildemauwe2023trade} analyze the equilibria of bilateral trade games that feature different information asymmetries in sequential and simultaneous settings with individuals exhibiting universalization reasoning. Besides considering the UG and by allowing for potentially incorrect beliefs, we differ from these studies by examining the interaction between universalization reasoning and social preferences. We show that this is crucial to understanding individuals' behavior and deriving novel theoretical predictions. \par \vspace{0.15cm}
\noindent \textbf{Outline:} The remainder of the paper is organized as follows. Section \ref{S2} introduces the theoretical framework. Section \ref{sec:theory} presents the theoretical results. Section \ref{OUS} considers out-of-sample results. Section \ref{S5} concludes. All the proofs are in Appendix \ref{AppA}.

\section{Theoretical Framework} \label{S2}
\subsection{The Ultimatum Game} \label{UG}
The UG is composed of two players, the \textit{proposer} and the \textit{responder}. The \textit{proposer} offers a division of a pie $x_1 \in [0, w]$ of known size $w > 0$. 
If the \textit{responder} accepts the proposed division, it is implemented: the \textit{proposer} earns $w - x_1$, and the \textit{responder} earns $x_1$. If the \textit{responder} rejects the division, both individuals earn 0. Following previous literature, we assume that responders choose a rejection threshold $x_2 \in [0, w]$ such that any offer below $x_2$ is rejected, and any offer equal to or above $x_2$ is accepted.

To introduce universalization reasoning, we consider the ex-ante symmetric version of the UG, where individuals choose their offer and rejection threshold without knowing their role in the interaction. This defines a symmetric interaction in which \textit{Nature} assigns each player a role with equal probability, and then the players learn their respective roles. A strategy is then a tuple $x = (x_1, x_2) \in [0, w]^2$ where $x_1$ and $x_2$ represent the offer and the rejection threshold selected by the individual in \textit{proposer} and \textit{responder} roles, respectively. The extensive form of the game is presented in Figure \ref{fig:game}.

\begin{figure}[h]
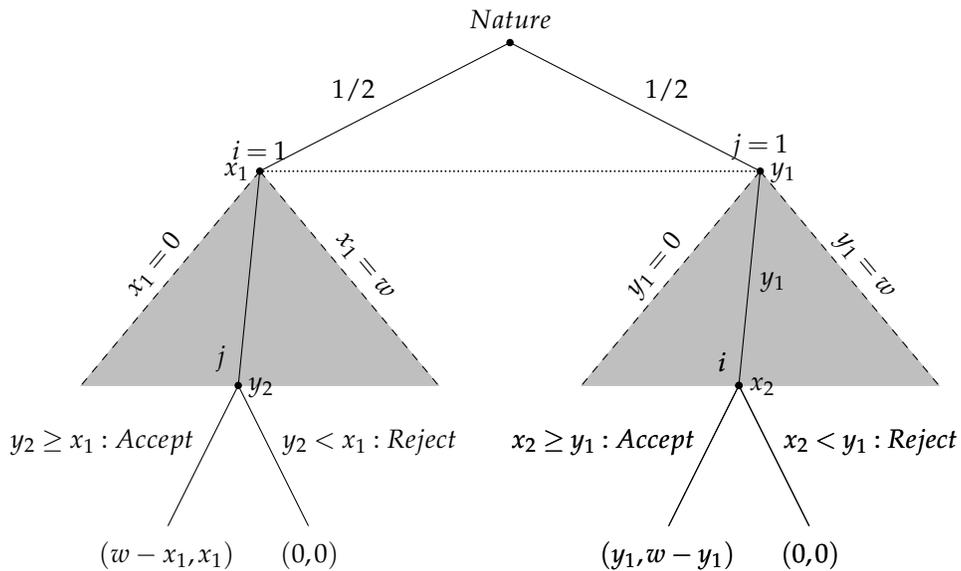

    \caption{Ultimatum Game behind the veil of ignorance}
    \label{fig:game}
    \begin{center}
        \begin{istgame}[scale=1.9,font=\footnotesize] 
            \setistmathTF111 
            \xtdistance{9mm}{35mm} 
    \cntmdistance{15mm}{25mm}{3mm}

            \istroot(0){Nature}
                \istb{1/2}[al] 
                \istb{1/2}[ar]
               
            \endist

            \cntmistb[draw=black,dashed]{x_1=0}[above,sloped]{x_1=w}[above,sloped]
                   \istrootcntm(A)(0-1)
                                      \istb{}[r] 
                   \istbm 
                                      \endist 
       \xtOwner(A){$i=1$}[above]           
       \xtPayoff*(0-1){x_1}[l]            

            \xtdistance{10mm}{10mm} 
            \istroot(B)(A-1)<120>{j}

                \istb{y_2\geq x_1: Accept}[al]{(w-x_1,x_1)} 
                \istb{y_2< x_1: Reject}[ar]{(0,0)} 
            \endist
                        \xtPayoff*(A-1){y_2}[r]

            \cntmistb[draw=black,dashed]{y_1=0}[above,sloped]{y_1=w}[above,sloped]
                   \istrootcntm(C)(0-2)
                   \istb{y_1}[r] 
                   \istbm 
                   \endist 
                    \xtOwner(C){$j=1$}[above]               
       \xtPayoff*(0-2){y_1}[r] 
                   \xtdistance{10mm}{10mm} 
                   
                   \istroot(D)(C-1)<120>{i} 
                   \istb{x_2\geq y_1: Accept}[al]{(y_1,w-y_1)} 
                   \istb{x_2 < y_1: Reject}[ar]{(0,0)} 
                   \endist
            \endist 
\xtPayoff*(C-1){x_2}[r]

\xtInfoset(A)(C){}
        
        \end{istgame}
    \end{center}          
\end{figure}

Examining the ex-ante symmetric version of the game does not rule out applying our results to settings without role uncertainty. Even a proposer who knows they are not a responder may recognize that roles were randomly assigned and could have been reversed. Consequently, individuals who are certain of their role might still contemplate how choices would differ if roles were reversed and others chose their actions. In this sense, role uncertainty makes universalization reasoning more salient relative to social preferences (see \citealp{alger2024doing}).\footnote{ Examining the ex-ante symmetric version of the game admits two interpretations. First, it can be thought of as a mental representation that individuals perform when there is no role uncertainty. For example, the proposer, aware that the roles have been randomly selected, considers how choices would differ if roles were reversed and others chose their actions (\citealp{rawls1971theory}; \citealp{binmore1994game}). Second, it can be interpreted literally, with choices implemented probabilistically and participants unaware of their roles, as in experiments with role uncertainty (\citealp{andreoni2009social}; \citealp{iriberri2011role}). Many real-world markets share this feature, where individuals alternate between roles—for instance, sometimes as buyers and sometimes as sellers—from informal exchanges such as flea markets to online marketplaces.}

\par 
Finally, we do not impose that individuals hold correct beliefs about their opponents' strategies. This is motivated by the one-shot, anonymous nature of the interaction, which makes it difficult for individuals to form accurate beliefs about others’ actions.\footnote{It is widely documented that individuals’ beliefs are heterogeneous and often inaccurate, even in the simplest experimental settings (\citealp{bellemare2008measuring}; \citealp{iriberri2013elicited}).} In our case, individual $i$ holds beliefs $f_{i}(y_1)$ and $f_{i}(y_2)$ on the distributions of offers and rejection thresholds in the population, with $F^{y_1}_{i}(\cdot)$ and $F^{y_2}_{i}(\cdot)$ denoting the corresponding cumulative distribution functions. Both $f_{i}(y_1)$ and $f_{i}(y_2)$ are assumed to be continuous and smooth functions, which may differ from the true distributions $f(y_1)$ and $f(y_2)$. In Appendix B, we extend \cite{alger2012homo} to compute the Nash equilibrium of the game when individuals have social preferences and universalization reasoning. The key distinction between computing the strategy that maximizes individuals' utility given their beliefs and deriving the Nash equilibrium of the game is that, in the former, individuals are not required to hold correct beliefs. This assumption allows for rejections of some positive offers as a best response and simplifies the comparative static analysis.
\subsection{Preferences}
Consider an interaction between two individuals and let \textit{X} denote the common set of pure strategies, assumed to be non-empty, convex, and compact. Let $\pi: [0, w]^2 \rightarrow \mathbb{R}$ be the individuals' material payoff function, which represents how they evaluate monetary outcomes. Specifically, $\pi(x_i, x_j) \equiv v({\widetilde\pi(x_i, x_j)})$, where $\widetilde\pi(x_i, x_j)$ is individual $i's$ monetary payoff under strategy profile $(x_i, x_j)$, and $v: \mathbb{R} \rightarrow \mathbb{R}$ captures how individuals evaluate monetary payments. We assume that individuals' utility functions combine social preferences and universalization reasoning (\Citealp{alger2020evolution}; \citealp{van2023estimating}). More concretely, individual $i$'s utility function is:
\begin{eqnarray} \label{1} 
u_{i}(x_i, x_j) &=&
(1 - \kappa_i)\pi(x_i, x_j)
\\
&-&  \alpha_i \max[\pi(x_j, x_i) - \pi(x_i, x_j), 0] \nonumber \\
&-& \beta_i \max[\pi(x_i, x_j) - \pi(x_j, x_i), 0] \nonumber \\
&+& \kappa_i \pi(x_i, x_i), \nonumber
\end{eqnarray}
where $x_i$ (resp. $x_j$) is the strategy of individual $i$ (resp. $j$), and $\pi(x_i, x_j)$ (resp. $\pi(x_j, x_i)$) is the material payoff of individual $i$ (resp. $j$) under the strategy profile $(x_i, x_j)$.

The utility function has three parameters. Two of these parameters capture inequity aversion (as in \Citealp{fehr1999theory}). The parameter $\alpha_i$ represents $i$'s (dis)utility from disadvantageous inequality (i.e., obtaining a payoff lower than individual $j$). Similarly, $\beta_i$ represents individual $i$'s (dis)utility from advantageous inequality (i.e., obtaining a payoff higher than individual $j$). Finally, interpreting the last term in (\ref{1}) as a representation of Kant's categorical imperative (\Citealp{kant1785categorical}), “act only on the maxim that you would at the same time will to be a universal law”, we refer to $\kappa_i \in [0, 1]$ as $i$'s degree of morality. In this line, $\pi(x_i, x_i)$ measures individual $i$'s material payoff if individual $j$ used the same strategy as individual $i$, while $\kappa_i$ measures the importance he attaches to this counterfactual scenario.\footnote{Alternatively, the parameter $\kappa_i$ can be interpreted in two ways: (i) as the probability that an individual is paired with someone who behaves like them, similar to the \textit{false consensus effect} \citep{ross1977falseconsensus,dawes1989statistical,butler2015trust}; or (ii) as a proxy for \textit{magical thinking}, wherein an individual believes that choosing $x$ makes it more likely that others choose $x$ as well \citep{Shafir1992,Daley2017}.}

Utility function (\ref{1}) nests several preferences as special cases. For example, setting $\alpha_i = \beta_i = \kappa_i = 0$ represents self-interest. Altruistic preferences (\Citealp{becker1976altruism}) can be achieved by setting $\kappa_i = 0$ and $\alpha_i = -\beta_i$ for a $\beta_i \in (0, 1)$. Inequity aversion preferences (\Citealp{fehr1999theory}) can be attained by setting $\kappa_i = 0$ and $\alpha_i \geq \beta_i > 0$. Spiteful preferences (\Citealp{levine1998modeling}) can be realized by setting $\kappa_i = 0$ and $\alpha_i = -\beta_i$ for a $\beta_i \in (-1, 0)$. Lastly, \textit{homo moralis} preferences (\Citealp{alger2013homo}) can be obtained by setting $\alpha_i = \beta_i = 0$ and $\kappa_i \in (0, 1]$. Note that (\ref{1}) does not directly consider reciprocity motives (\citealp{rabin1993incorporating}; \citealp{dufwenberg2004theory}). Although reciprocity is relevant in the UG, we are interested in studying the relationship between behavior in the UG and the DG, for which reciprocity is silent. \par

In this setting, \cite{van2023estimating} is particularly relevant, as the authors conduct a lab experiment to structurally estimate  (\ref{1}). They find that the estimated parameters at the aggregate level satisfy $\hat{\alpha} > 0$, $\hat{\beta} = 0$ and $\hat{\kappa} > 0$, suggesting that the representative agent dislikes disadvantageous inequality, is indifferent with respect to advantageous inequality, and has a positive degree of morality.\footnote{It is worth noting  that \cite{van2023estimating} find $\beta_i \neq 0$ and $\alpha_i < 0$ for some individuals when considering estimates at the individual level. In addition, \cite{alger2024doing} find $\beta \neq 0$ at the aggregate level.}  This motivates our first assumption. 
\begin{assumption} \label{Types}
   For any $i$, $\beta_i = 0$, $\alpha_i \geq 0$ and $\kappa_i \in [0, 1]$. 
\end{assumption}
Thus, we assume that individuals are unaffected by advantageous inequality but may be concerned with disadvantageous inequality and assign some weight to universalization reasoning. Our results also extend when $\beta_i \neq 0$, and $\alpha_i < 0$. However, allowing for these cases complicates the exposition of the results without providing additional insight. Footnote (\ref{foot2}) clarifies how the optimal strategy changes when $\beta_i \neq 0$, while footnote (\ref{foot1}) addresses the case where $\alpha_i < 0$.

\section{Theoretical Results}\label{sec:theory}

In this section, we derive the optimal strategy of an individual whose utility function combines social preferences and moral concerns, as in equation (\ref{1}). Section \ref{subsec:prelim} introduces Lemmas \ref{L1} and \ref{L2}, which facilitate this characterization. Section \ref{subsec:characterization} characterizes the optimal strategy. Section \ref{subsec:relation} presents several corollaries and discusses how they contribute to rationalizing evidence from the experimental literature.

\subsection{Preliminary Analysis}\label{subsec:prelim}
We begin by introducing Assumptions \ref{beliefs} and \ref{concave}, which facilitate the characterization of the optimal strategy. Assumption \ref{beliefs} restricts individuals' beliefs.

\begin{assumption} \label{beliefs}
 For any $i$, $F^{y_2}_{i}(\frac{w}{2}) = F^{y_1}_{i}(\frac{w}{2}) = 1$.   
\end{assumption}

Assumption \ref{beliefs} implies that all individuals believe others' offers and rejection thresholds do not exceed $\frac{w}{2}$, consistent with previous evidence  (\Citealp{camerer2003behavioral}; \Citealp{guth2014more}). In practice, this assumption restricts individuals' optimal strategies to offers and rejection thresholds not exceeding $\frac{w}{2}$. Therefore, spite (i.e., behindness aversion) influences individuals' utility only in the \textit{responder} role, as proposers always receive a payoff at least equal to that of responders. 

Assumption \ref{concave} imposes structure on the function individuals use to evaluate monetary payoffs resulting from the interaction.
\begin{assumption} \label{concave}
Function $v$ is increasing, strictly concave and differentiable.
\end{assumption}

Assumption \ref{concave} implies that individuals exhibit decreasing marginal utility of money and are risk-averse behind the veil of ignorance.\footnote{This assumption is also imposed in \cite{alger2013homo} when analyzing the DG and UG with individuals who employ universalization reasoning. Our results remain robust when assuming a linear function, $v(x) = x$. However, as explained in footnote (\ref{Explanation}), this assumption is less convenient when analyzing the DG.} Under Assumptions \ref{beliefs} and \ref{concave}, and normalizing $v(0) = 0$, individual $i$ maximizes the following utility function: 
\begin{eqnarray} \label{eq: (5)} 
{u}_{i}(x_1, x_2)&=&
(1 - \kappa_i)v(w - x_1)F^{i}_{y_2}(x_1) \\
&+& \int_{x_2}^{\frac{w}{2}} [(1 - \kappa_i)v(y_1) - \alpha_i (v(w - y_1) - v(y_1))]dF_{i}^{y_1} \nonumber \\
&+&  \kappa_i \mathbf{1}_{\{x_1 \geq x_2\}}[v(w - x_1) + v(x_1)].
\nonumber 
\end{eqnarray} \par 
\begin{assumption} \label{A4}
 Function $u_i$ is is strictly concave on each region $x_1 < x_2$ and $x_1 > x_2$, continuous on the whole domain and differentiable (except possibly on the diagonal $x_1 = x_2$).     
\end{assumption}

\noindent Assumption \ref{A4} is a technical assumption that makes the problem concave and well-defined.\footnote{Concavity is assured if the belief distributions $F_{y_1}^i$ and $F_{y_2}^i$ are absolutely continuous on $[0, w/2]$ with densities $f_1, f_2$ that do not increase too fast. Formally, there exist finite constants $b_1, b_2 > 0$ such that
$\sup_{x \in [0,w/2]} \big| f_1'(x) \big| \le b_1
\quad$ and $\sup_{x \in [0,w/2]} \big| f_2'(x) \big| \le b_2.$ The benchmark uniform case $F = \mathcal{U}[0, w/2]$ satisfies these conditions with $b_1 = b_2 = 0$. No specific numerical values of $b_1, b_2$ are required for the results. \label{Convavity}
} Individual $i$'s utility comprises three terms:\footnote{When $\beta_i > 0$, (\ref{eq: (5)}) would have an additional term equal to $-\beta \frac{1}{2}[v(w - x_1) - v(x_1)]F^{i}_{y_2}(x_1)$ capturing the disutility of being ahead of the other player when acting as the \textit{proposer}. This would make $x_s$, $\widetilde{x}_{1}(\kappa_i)$, and $\hat{x}(\kappa_i, \alpha_i)$ from Definitions \ref{D1}, \ref{D2}, and \ref{D4} below to depend positively on $\beta_i$, but not affect the qualitative patterns described in the main text.\label{foot2}} 
\begin{itemize}[topsep=0pt, partopsep=0pt, parsep=0pt, itemsep=0pt]
    \item The first term represents individual $i$'s payoff when offering $x_1$ as \textit{proposer} given his expected distribution of rejection thresholds. Increasing $x_1$ decreases individual $i's$ payoff when his offer is accepted but increases the likelihood of this occurring. \vspace{0.15cm}
    \item The second term represents $i$'s (net) payoff when setting a rejection threshold of $x_2$ as \textit{responder} given his expected distribution of offers. Increasing $x_2$ decreases the likelihood of accepting an offer and increases the lowest offer accepted.\vspace{0.15cm}
    \item The third term captures individual $i$'s hypothetical payoff if others adopted the same strategy as himself, as implied by the universalization reasoning. That is, the payoff individual $i$ would receive if matched with someone choosing strategy $x = (x_1, x_2)$. The universalization concern implies that individual $i$ is worse off when (i) rejecting an offer he would propose himself (i.e., $x_1 < x_2$), and (ii) offering a division below the equal split (i.e., $x_1 < \frac{w}{2}$). 
\end{itemize} 
To characterize individuals' optimal strategy $x^* = ({x_1}^*, {x_2}^*)$, it is useful to define the following four objects. 
\begin{definition} \label{D1}
(\textit{Selfish Offer}): $x_{s} \equiv \argmax_{x \in [0, w]}  \frac{1}{2}v(w - x_1)F^{i}_{y_2}(x_1)$. \end{definition} 
\begin{definition} \label{D2}
(\textit{}{Constrained optimal offer}): $\widetilde{x}_{1}(\kappa_i) \equiv \argmax_{x_1 \in [0, w]} \frac{1}{2}(1 - \kappa_i) v(w - x_1)F^{i}_{y_2}(x_1) + \frac{1}{2}\kappa_i[v(w - x_1) + v(x_1)]$.
\end{definition}
\begin{definition} \label{D3}
(\textit{Constrained optimal rejection threshold}): $\underline{x_2}(\kappa_i, \alpha_i) \equiv  \argmax_{x_2 \in [0, w]} \int_{x_2}^{\frac{w}{2}} [(1 - \kappa_i)v(y_1) - \alpha_i (v(w - y_1) - v(y_1))]dF_{i}^{y_1}$. 
\end{definition}
\begin{definition} \label{D4}
    (\textit{Symmetric optimal strategy}): $\hat{x}(\kappa_i, \alpha_i) \equiv \argmax_{x \in [0, w]}u_{i}(x, x)$.
\end{definition}

The \textit{Selfish offer} is the offer that maximizes individual $i$'s material payoff, given their beliefs about others' rejection thresholds. This is the offer selected by someone with $\kappa_i = \alpha_i = 0$. The \textit{Constrained optimal offer} and the \textit{Constrained optimal rejection threshold} are the offer and the rejection threshold that maximizes individual $i$'s utility, conditional on the offer exceeding his rejection threshold. 

Finally, the \textit{Symmetric optimal strategy} is the strategy that maximizes individual $i$'s utility, subject to the constraint that the offer and rejection threshold are equal. \par 

To simplify notation, we omit the parameters of the above-defined objects and refer to them simply as $x_s$, $\widetilde{x}_{1}$, $\underline{x_2}$, and $\hat{x}$. Lemma \ref{L1} characterizes the optimal strategy $x^*$ as a function of $x_s$, $\widetilde{x}_{1}$, $\underline{x_2}$ and $\hat{x}$, showing that it can be divided into three regions.
\begin{lemma} \label{L1}
    Let $x^* = ({x_1}^*, {x_2}^*)$ denote the strategy $(x_1, x_2) \in [0, w]^2$ that maximizes (\ref{eq: (5)}). Then, 
\begin{itemize}
    \item If $\widetilde{x}_{1} \geq \underline{x_2}$, then $x_1^* = \widetilde{x}_{1}$  and $x_2^* = \underline{x_2}$.
    \item If $\widetilde{x}_{1} < \underline{x_2}$,  then
    \begin{itemize}
    \item If $u(x_s, \underline{x_2}) \geq u(\hat{x}, \hat{x})$, $x_1^* = x_s$  and $x_2^* = \underline{x_2}$.
    \item If $u(x_s, \underline{x_2}) < u(\hat{x}, \hat{x})$, $x_1^* = x_2^* = \hat{x} \in [\widetilde{x}_{1}, \underline{x_2}]$.
    \end{itemize}
\end{itemize} 
\end{lemma}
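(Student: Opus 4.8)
The plan is to exploit the separable structure that the indicator $\mathbf 1_{\{x_1\ge x_2\}}$ imposes on (\ref{eq: (5)}). Write $\phi(x_1)\equiv(1-\kappa_i)v(w-x_1)F^{i}_{y_2}(x_1)+\kappa_i[v(w-x_1)+v(x_1)]$ for the ``proposer-plus-universalization'' component, $g(x_1)\equiv(1-\kappa_i)v(w-x_1)F^{i}_{y_2}(x_1)$ for the proposer component alone, and $h(x_2)$ for the responder integral. The domain then splits into the closed region $A=\{x_1\ge x_2\}$, where $u_i=\phi(x_1)+h(x_2)$, and the open region $B=\{x_1<x_2\}$, where $u_i=g(x_1)+h(x_2)$. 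In both regions the objective is additively separable in $x_1$ and $x_2$, so the relevant maximizers are exactly the objects of Definitions \ref{D1}--\ref{D4}: $x_s=\argmax g$, $\widetilde x_1=\argmax\phi$, $\underline{x_2}=\argmax h$, and, on the diagonal $x_1=x_2=x$, $\hat x=\argmax u_i(x,x)$. By Assumption \ref{A4} each is the unique maximizer of a strictly concave function. A preliminary fact I would record is $x_s\le\widetilde x_1$: since $v$ is strictly concave, $x_1\mapsto\kappa_i[v(w-x_1)+v(x_1)]$ has derivative $\kappa_i[v'(x_1)-v'(w-x_1)]>0$ for $x_1<w/2$, so adding it to $g$ shifts the peak weakly right; with Assumption \ref{beliefs} forcing $x_s,\widetilde x_1\le w/2$, this gives $x_s\le\widetilde x_1$.

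For the first bullet ($\widetilde x_1\ge\underline x_2$), the unconstrained maximizer $(\widetilde x_1,\underline x_2)$ of the region-$A$ objective already satisfies $x_1\ge x_2$, hence is feasible and attains the region-$A$ maximum $\phi(\widetilde x_1)+h(\underline x_2)$. To conclude it is global, I would dominate region $B$: for any $(x_1,x_2)$ with $x_1<x_2$, separability gives $g(x_1)+h(x_2)\le g(x_s)+h(\underline x_2)\le\phi(x_s)+h(\underline x_2)\le\phi(\widetilde x_1)+h(\underline x_2)$, where the middle inequality uses $\phi\ge g$ (the universalization term is nonnegative once $v(0)=0$). Since the region-$B$ supremum is no larger than the attained region-$A$ value, the optimum is $x_1^*=\widetilde x_1$, $x_2^*=\underline x_2$.

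For the second bullet ($\widetilde x_1<\underline x_2$), the unconstrained region-$A$ maximizer $(\widetilde x_1,\underline x_2)$ now violates $x_1\ge x_2$. Because the feasible set $A$ is convex and $\phi(x_1)+h(x_2)$ is concave with its free optimum infeasible, the constrained maximum sits on the active boundary $x_1=x_2$, i.e.\ on the diagonal, where it equals $u_i(\hat x,\hat x)$ by Definition \ref{D4}. To locate $\hat x$ I would use the diagonal first-order condition: $\tfrac{d}{dx}u_i(x,x)=\phi'(x)+h'(x)$ is strictly positive at $x=\widetilde x_1$ (there $\phi'=0$ but $h'>0$ since $\widetilde x_1<\underline x_2$) and strictly negative at $x=\underline x_2$ (there $h'=0$ but $\phi'<0$), so strict concavity and the intermediate value theorem place $\hat x\in[\widetilde x_1,\underline x_2]$. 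Meanwhile the region-$B$ maximum is attained at $(x_s,\underline x_2)$, feasible because $x_s\le\widetilde x_1<\underline x_2$, with value $u_i(x_s,\underline x_2)$. The global optimum is the larger of the two region maxima $u_i(\hat x,\hat x)$ and $u_i(x_s,\underline x_2)$, which yields exactly the two stated sub-cases.

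The main obstacle I anticipate is the jump discontinuity of $u_i$ across the diagonal created by the universalization indicator: one cannot set a single gradient to zero globally, and must instead argue region by region and compare the two attained maxima, noting that region $B$ is open (so its supremum need not be attained there, which is harmless since it is either dominated or compared against an attained value). The one genuinely substantive step is showing that when $\widetilde x_1<\underline x_2$ the region-$A$ optimum collapses onto the diagonal; this is where the concavity in Assumption \ref{A4} is essential, via the standard fact that a concave function maximized over a convex set with the free optimum infeasible attains its constrained maximum on the binding constraint.
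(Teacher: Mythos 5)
Your proposal is correct and takes essentially the same approach as the paper: a case split on $\widetilde{x}_{1}$ versus $\underline{x_2}$, exploiting the additive separability of the objective on the two indicator regions, collapsing the region-$\{x_1\ge x_2\}$ optimum onto the diagonal when $\widetilde{x}_{1}<\underline{x_2}$ (the paper does this by a direct improvement argument rather than your convex-analysis boundary fact, but the substance is the same), and finally comparing $u(x_s,\underline{x_2})$ with $u(\hat{x},\hat{x})$. If anything, your write-up is more complete than the paper's in two spots it leaves implicit: the explicit domination of the $x_1<x_2$ region in the first case via $\phi\ge g$, and the preliminary fact $x_s\le\widetilde{x}_{1}<\underline{x_2}$ guaranteeing that $(x_s,\underline{x_2})$ is actually feasible in the open region.
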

\begin{proof}
All proofs are in Appendix \ref{AppA}.
\end{proof}
Intuitively, the individual computes $\widetilde{x}_{1}$ and $\underline{x_2}$ separately. If $\widetilde{x}_{1} \geq \underline{x_2}$, then, by Definitions \ref{D2} and \ref{D3}, it is optimal to set $x_1^* = \widetilde{x}_{1}$ and $x_2^* = \underline{x_2}$. When $\widetilde{x}_{1} < \underline{x_2}$, setting $x_1^* = \widetilde{x}_{1}$ and $x_2^* = \underline{x_2}$ is not optimal given the disutility the individual suffers when setting $x_1^* < x_2^*$. In that case, the individual selects either (i) $x_1^* = x_2^* = \hat{x}$ or (ii) $x_1^* = x_s$ and $x_2^* = \underline{x_2}$ (with $x_1^* < x_2^*$). Lemma \ref{L2} defines two functions $\tilde{\alpha}(\kappa_i)$ and $\tilde{\kappa}(\alpha_i)$, which characterize the relationship between $\widetilde{x}_{1}$ and $\underline{x_2}$, and $u(x_s, \underline{x_2})$ and $u(\hat{x}, \hat{x})$, respectively.
\begin{lemma} \textit{} \label{L2}
\begin{itemize}[topsep=0pt, partopsep=0pt, parsep=0pt, itemsep=0pt] 
    \item For any $\kappa_i \in [0, 1)$, if $\alpha_i = \tilde{\alpha}(\kappa_i) \equiv (1 - \kappa_i)\frac{v(\tilde{x}(\kappa_i))}{v(w - \tilde{x}(\kappa_i))- v(\tilde{x}(\kappa_i))}$, then $\widetilde{x}_{1} = \underline{x_2}$. \vspace{0.15cm}
    \item For any $\alpha_i > 0$,  there exists  
$\tilde{\kappa}(\alpha_i)$, such that, if $\kappa_i = \tilde{\kappa}(\alpha_i)$, then $u(\hat{x}, \hat{x}) = u(x_s, \underline{x_2})$.
\end{itemize}

\end{lemma}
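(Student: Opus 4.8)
The plan is to handle the two bullets by different methods: the first is an exact first-order-condition identity, whereas the second is an existence statement that I would obtain from continuity and the intermediate value theorem.

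For the first bullet, I would start from the responder objective in Definition \ref{D3}, namely $h(x_2) = \int_{x_2}^{w/2}\big[(1-\kappa_i)v(y_1) - \alpha_i(v(w-y_1)-v(y_1))\big]\,dF_i^{y_1}$. The fundamental theorem of calculus gives $h'(x_2) = -\big[(1-\kappa_i)v(x_2) - \alpha_i(v(w-x_2)-v(x_2))\big]f_i^{y_1}(x_2)$, and by the strict concavity guaranteed in Assumption \ref{A4} the interior maximizer $\underline{x_2}$ is the unique zero of the bracket, i.e. $(1-\kappa_i)v(\underline{x_2}) = \alpha_i\big(v(w-\underline{x_2}) - v(\underline{x_2})\big)$. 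The one real step is then to substitute $\alpha_i = \tilde\alpha(\kappa_i)$ and check that $x_2 = \widetilde{x}_1$ solves this equation: the right-hand side becomes $(1-\kappa_i)\frac{v(\widetilde{x}_1)}{v(w-\widetilde{x}_1)-v(\widetilde{x}_1)}\big(v(w-\widetilde{x}_1)-v(\widetilde{x}_1)\big) = (1-\kappa_i)v(\widetilde{x}_1)$, which matches the left-hand side. Since $\widetilde{x}_1 < w/2$ for $\kappa_i<1$ (so the denominator of $\tilde\alpha$ is positive and the point is interior), and the zero is unique by concavity, I conclude $\underline{x_2} = \widetilde{x}_1$.

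For the second bullet, I would fix $\alpha_i>0$ and define $\Phi(\kappa_i) \equiv u(\hat x,\hat x) - u(x_s,\underline{x_2})$, with every object read off \ref{eq: (5)} and Definitions \ref{D1}, \ref{D3}, \ref{D4}, on the branch $x_s < \underline{x_2}$ relevant to Lemma \ref{L1} (so that the universalization indicator vanishes for the second argument). First I would show $\Phi$ is continuous on $[0,1]$: on the diagonal the indicator $\mathbf{1}_{\{x_1\geq x_2\}}$ is identically one, so $u(x,x)$ is jointly continuous in $(\kappa_i,x)$ and Berge's maximum theorem (using compactness of $[0,w]$ and continuity of $v$, $F_i^{y_1}$, $F_i^{y_2}$) makes $\kappa_i \mapsto u(\hat x,\hat x)$ continuous, and likewise $\underline{x_2}(\kappa_i,\alpha_i)$ and the indicator-free $u(x_s,\underline{x_2})$ are continuous. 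I would then sign the endpoints. At $\kappa_i=0$ the utility in \ref{eq: (5)} is additively separable in $(x_1,x_2)$, so $(x_s,\underline{x_2})$ is the unconstrained maximizer and weakly dominates the feasible diagonal point $(\hat x,\hat x)$, giving $\Phi(0)\le 0$. At $\kappa_i=1$ the responder term reduces to $\int_{x_2}^{w/2}\big[-\alpha_i(v(w-y_1)-v(y_1))\big]\,dF_i^{y_1}$, whose maximizer is $\underline{x_2}=w/2$, so $u(x_s,\underline{x_2})=0$, whereas $u(\hat x,\hat x)\ge u(\tfrac{w}{2},\tfrac{w}{2})=2v(\tfrac{w}{2})>0$; hence $\Phi(1)>0$. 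The intermediate value theorem then yields $\tilde\kappa(\alpha_i)\in[0,1]$ with $\Phi(\tilde\kappa(\alpha_i))=0$.

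The main obstacle is the continuity of $\kappa_i \mapsto u(x_s,\underline{x_2})$, since the indicator in \ref{eq: (5)} can jump were $\underline{x_2}$ to cross $x_s$. I would neutralize this by confirming that throughout the relevant branch one has $x_s \le \widetilde{x}_1 < \underline{x_2}$, so the indicator stays at zero; this uses the monotonicity of $\underline{x_2}(\kappa_i,\alpha_i)$ in $\kappa_i$ implied by the first-order condition derived in the first bullet. A secondary point is justifying the $\kappa_i=1$ endpoint rigorously—that the responder objective is strictly increasing up to $w/2$, so $\underline{x_2}(1,\alpha_i)=w/2$ and the boundary value is exactly zero—which follows because $v(w-y_1)>v(y_1)$ for $y_1<w/2$ under Assumption \ref{concave}.
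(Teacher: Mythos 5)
Your first bullet follows essentially the paper's own argument: the paper likewise characterizes $\underline{x_2}$ by the vanishing of the integrand of the responder objective, rewrites $\widetilde{x}_{1} = \underline{x_2}$ as $v(\widetilde{x}_{1})(1 + \alpha_i - \kappa_i) - \alpha_i v(w - \widetilde{x}_{1}) = 0$, and solves for $\alpha_i$, so there is nothing to add there. On the second bullet you take a genuinely different route. The paper never invokes the intermediate value theorem on a difference function; instead it shows via the envelope theorem that $u(x_s, \underline{x_2})$ is strictly decreasing in $\kappa_i$, argues (via $\partial \hat{x}/\partial \kappa_i \geq 0$) that $u(\hat{x}, \hat{x})$ is convex in $\kappa_i$, and combines these shape restrictions with the same two endpoint comparisons you use (weak dominance of $(x_s,\underline{x_2})$ at $\kappa_i = 0$ by separability, strict reversal at $\kappa_i = 1$) to obtain a \emph{single crossing}. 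The payoff of that heavier machinery is that the paper proves more than the lemma literally states: a unique threshold $\tilde{\kappa}(\alpha_i)$ with $u(x_s,\underline{x_2}) > u(\hat{x},\hat{x})$ for $\kappa_i < \tilde{\kappa}(\alpha_i)$ and the reverse inequality above it. That threshold structure is exactly what Proposition \ref{P1} and Corollary \ref{C2} consume when they partition the parameter space by $\kappa_i \lessgtr \tilde{\kappa}(\alpha_i)$. Your continuity-plus-IVT argument (Berge for $u(\hat{x},\hat{x})$, explicit endpoint signs) is more elementary and, as far as it goes, arguably tighter than the paper's convexity claim, but it delivers only bare existence of a crossing; if your proof were substituted into the paper, the subsequent results would need an additional monotonicity or single-crossing step.

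One caveat on your continuity fix. Monotonicity of $\underline{x_2}$ in $\kappa_i$ guarantees $x_s < \underline{x_2}$ for \emph{all} $\kappa_i$ only if the inequality already holds at $\kappa_i = 0$, which is the case precisely when $\alpha_i > \overline{\alpha}$. For $0 < \alpha_i \leq \overline{\alpha}$ the indicator in the true utility at $(x_s, \underline{x_2})$ equals $1$ for small $\kappa_i$, so your $\Phi$ would have an \emph{upward} jump where $\underline{x_2}$ crosses $x_s$, and an upward jump can defeat the IVT (the function may pass from negative to positive without ever vanishing). The statement must then be read—as the paper implicitly does, since its proof writes $u(x_s, \underline{x_2})$ without the universalization term throughout—with the indicator-free expression for $u(x_s,\underline{x_2})$; under that convention your argument goes through for every $\alpha_i > 0$. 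You should state this convention explicitly rather than leave it as a branch restriction, since the branch you invoke does not exist for small positive $\alpha_i$.
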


Two key remarks are worth emphasizing. First, there exists a unique threshold $\overline{\alpha} \equiv \frac{v(x_s)}{v(w - x_s) - v(x_s)} > 0$ such that $\tilde{\alpha}(0) = \overline{\alpha}$ and $\tilde{\kappa}(\overline{\alpha}) = 0$. Therefore, in the ($\alpha$, $\kappa$) space, $\tilde{\alpha}(\kappa_i)$ and $\tilde{\kappa}(\alpha_i)$ intersect at ($\overline{\alpha}$, $0$) (see Figure \ref{F1}). Second, as shown in Proposition \ref{P1}, $\tilde{\alpha}(\kappa_i)$  is relevant for computing the optimal strategy only when $\alpha > \overline{\alpha}$, whereas $\tilde{\kappa}(\alpha_i)$ is relevant only when $\alpha < \overline{\alpha}$. In these regions, $\tilde{\alpha}(\kappa_i)$ is generally decreasing in $\kappa_i$, while $\tilde{\kappa}(\alpha_i)$ is generally increasing in $\alpha_i$.\footnote{Although we do not show these comparative static analysis for every possible specification, we find evidence supporting them in most reasonable functional forms.}  

\subsection{The optimal strategy}\label{subsec:characterization}

\begin{prop} \label{P1}
Let $\overline{\alpha} \equiv \frac{v(x_s)}{v(w - x_s) - v(x_s)} > 0$. Let $x^* = ({x_1}^*, {x_2}^*)$ denote the strategy $(x_1, x_2) \in [0, w]^2$ that maximizes (\ref{eq: (5)}). Then, 
\begin{itemize}
\item If $\alpha_i \leq \tilde{\alpha}(\kappa_i)$, $x_1^* = \widetilde{x}_{1}$ and ${x_2}^* = \underline{x_2}$ (with $x_1^* \geq x_2^*$).
\item If either (i) $\alpha_i \in (\tilde{\alpha}(\kappa_i), \overline{\alpha})$ or (ii) $\alpha_i > \overline{\alpha}$ and $\kappa_i > \tilde{\kappa}(\alpha_i)$, $x_1^* = x_2^*  = \hat{x}$. 
\item If $\alpha_i > \overline{\alpha}$ and $\kappa_i \leq \tilde{\kappa}(\alpha_i)$, $x_1^* = x_s$ and ${x_2}^* = \underline{x_2}$ (with $x_1^* < x_2^*$).

\end{itemize}     
\end{prop}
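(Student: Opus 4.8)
The plan is to derive Proposition \ref{P1} as a bookkeeping consequence of Lemmas \ref{L1} and \ref{L2}, translating the two qualitative dichotomies of Lemma \ref{L1} (namely $\widetilde{x}_1 \gtrless \underline{x_2}$ and $u(\hat{x},\hat{x})\gtrless u(x_s,\underline{x_2})$) into the parameter inequalities in the statement. The only substantive inputs beyond the lemmas are two monotonicity facts and the relative geometry of the curves $\tilde{\alpha}(\kappa_i)$ and $\tilde{\kappa}(\alpha_i)$, which the remarks following Lemma \ref{L2} already pin down.

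First I would translate the dichotomy $\widetilde{x}_1 \gtrless \underline{x_2}$. Since $\widetilde{x}_1=\widetilde{x}_1(\kappa_i)$ (Definition \ref{D2}) does not involve $\alpha_i$, while $\underline{x_2}$ is characterized by the first-order condition that the responder integrand in Definition \ref{D3} vanish, i.e. $(1-\kappa_i)v(\underline{x_2})=\alpha_i\,[v(w-\underline{x_2})-v(\underline{x_2})]$, I would note that $\tfrac{v(x)}{v(w-x)-v(x)}$ is strictly increasing on $[0,w/2]$ (numerator up, positive denominator down), so $\underline{x_2}$ is strictly increasing in $\alpha_i$. Because Lemma \ref{L2} defines $\tilde{\alpha}(\kappa_i)$ as exactly the value of $\alpha_i$ at which $\underline{x_2}=\widetilde{x}_1$, monotonicity gives $\widetilde{x}_1\geq \underline{x_2}\iff \alpha_i\leq \tilde{\alpha}(\kappa_i)$. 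Feeding this into the first branch of Lemma \ref{L1} yields the first bullet of the proposition, with $x_1^*=\widetilde{x}_1\geq \underline{x_2}=x_2^*$.

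Next I would handle the region $\alpha_i>\tilde{\alpha}(\kappa_i)$ (i.e. $\widetilde{x}_1<\underline{x_2}$), where Lemma \ref{L1} compares $u(\hat{x},\hat{x})$ with $u(x_s,\underline{x_2})$. Writing $\Delta\equiv u(\hat{x},\hat{x})-u(x_s,\underline{x_2})$, I would apply the envelope theorem: $\hat{x}$ is optimal for $u(\cdot,\cdot)$ on the diagonal and $\underline{x_2}$ is optimal for the responder subproblem (hence for $u(x_s,\cdot)$ on $x_2>x_s$), while $x_s$ is fixed and $x_s\le\widetilde{x}_1<\underline{x_2}$ makes the universalization indicator vanish at $(x_s,\underline{x_2})$. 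Differentiating in $\alpha_i$ then kills all indirect terms and leaves $\partial\Delta/\partial\alpha_i=-\int_{\hat{x}}^{\underline{x_2}}[v(w-y)-v(y)]\,dF_{i}^{y_1}\leq 0$, since $\hat{x}\leq \underline{x_2}$ (Lemma \ref{L1}) and $v(w-y)>v(y)$ for $y<w/2$; together with $\Delta$ being increasing in $\kappa_i$ (consistent with $\tilde{\kappa}$ increasing, per the remark after Lemma \ref{L2}), this makes the zero locus of $\Delta$ the single curve $\kappa_i=\tilde{\kappa}(\alpha_i)$ supplied by Lemma \ref{L2}. Hence $u(\hat{x},\hat{x})\gtrless u(x_s,\underline{x_2})\iff \kappa_i\gtrless\tilde{\kappa}(\alpha_i)$, and the two sub-branches of Lemma \ref{L1} become the $(\hat{x},\hat{x})$ and $(x_s,\underline{x_2})$ outcomes.

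Finally I would stitch the pieces at $\overline{\alpha}$ using the geometry from the remarks: the two curves meet only at $(\overline{\alpha},0)$, with $\tilde{\alpha}$ decreasing from $\tilde{\alpha}(0)=\overline{\alpha}$ and $\tilde{\kappa}$ increasing from $\tilde{\kappa}(\overline{\alpha})=0$. For $\alpha_i>\overline{\alpha}$ one always has $\alpha_i>\tilde{\alpha}(\kappa_i)$, so the $\tilde{\kappa}$-comparison governs, splitting into part (ii) of the middle bullet ($\kappa_i>\tilde{\kappa}(\alpha_i)$) and the third bullet ($\kappa_i\leq\tilde{\kappa}(\alpha_i)$). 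For $\tilde{\alpha}(\kappa_i)<\alpha_i<\overline{\alpha}$ I would show the third region is empty: since $\tilde{\kappa}$ is increasing, $\tilde{\kappa}(\alpha_i)<\tilde{\kappa}(\overline{\alpha})=0\leq\kappa_i$, so $\kappa_i>\tilde{\kappa}(\alpha_i)$ and $\Delta>0$, placing these points in part (i) of the middle bullet. Assembling the three cases gives the stated partition. I expect the main obstacle to be the monotonicity of $\Delta$ in $\kappa_i$: the $\alpha_i$-direction is clean by the envelope computation above, but the $\kappa_i$-direction pits the gain $v(w-\hat{x})+v(\hat{x})$ from the newly active universalization term against the acceptance and integral corrections, so rather than re-deriving its sign I would lean on the single-crossing already encoded in Lemma \ref{L2} and its accompanying monotonicity remark.
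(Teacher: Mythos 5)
Your proposal is correct and takes essentially the same route as the paper: the paper's entire proof of Proposition \ref{P1} is the single line ``follows from combining Lemmas \ref{L1} and \ref{L2},'' which is exactly your bookkeeping strategy. The directional facts you supply to make the combination rigorous---$\underline{x_2}$ strictly increasing in $\alpha_i$ so that $\widetilde{x}_1 \gtrless \underline{x_2} \iff \alpha_i \lessgtr \tilde{\alpha}(\kappa_i)$, and the single-crossing of $u(\hat{x},\hat{x}) - u(x_s,\underline{x_2})$ in $\kappa_i$ at $\tilde{\kappa}(\alpha_i)$---are precisely what the paper establishes inside its proof of Lemma \ref{L2} rather than in the proposition's proof, so your added envelope computation in $\alpha_i$ and the boundary argument at $(\overline{\alpha},0)$ duplicate (correctly) work the paper locates elsewhere.
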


Proposition \ref{P1} shows that $x^*$ can be divided into three regions depending on the relationship between $x_1^*$ and $x_2^*$. Figure \ref{F1} illustrates these three regions for the case when $w = 10$, $v(x) = \frac{x^{(1 - \rho)}}{1 - \rho}$ with $\rho = 0.05$, and beliefs on offers and rejection thresholds follow a beta distribution normalized in $[0,\frac{w}{2}]$ 
(see Figure \ref{Fig8}). In Region 1, the optimal strategy is $x_1^* = \widetilde{x}_{1}$ and ${x_2}^* = \underline{x_2}$, and occurs when $\alpha_i \leq \tilde{\alpha}(\kappa_i)$. In Region 2, the optimal strategy is $x_1^* = {x_2}^* = \hat{x}$, and occurs when either (i) $\alpha_i \in (\tilde{\alpha}(\kappa_i), \overline{\alpha})$ or (ii) $\alpha_i > \overline{\alpha}$ and $\kappa_i > \tilde{\kappa}(\alpha_i)$. Finally, in Region 3, the optimal strategy is $x_1^* = x_s$ and ${x_2}^* = \underline{x_2}$, and occurs when $\alpha_i > \overline{\alpha}$ and $\kappa_i < \tilde{\kappa}(\alpha_i)$.

\begin{figure}[H]
    \begin{center}
\includegraphics[width= 0.85\textwidth]{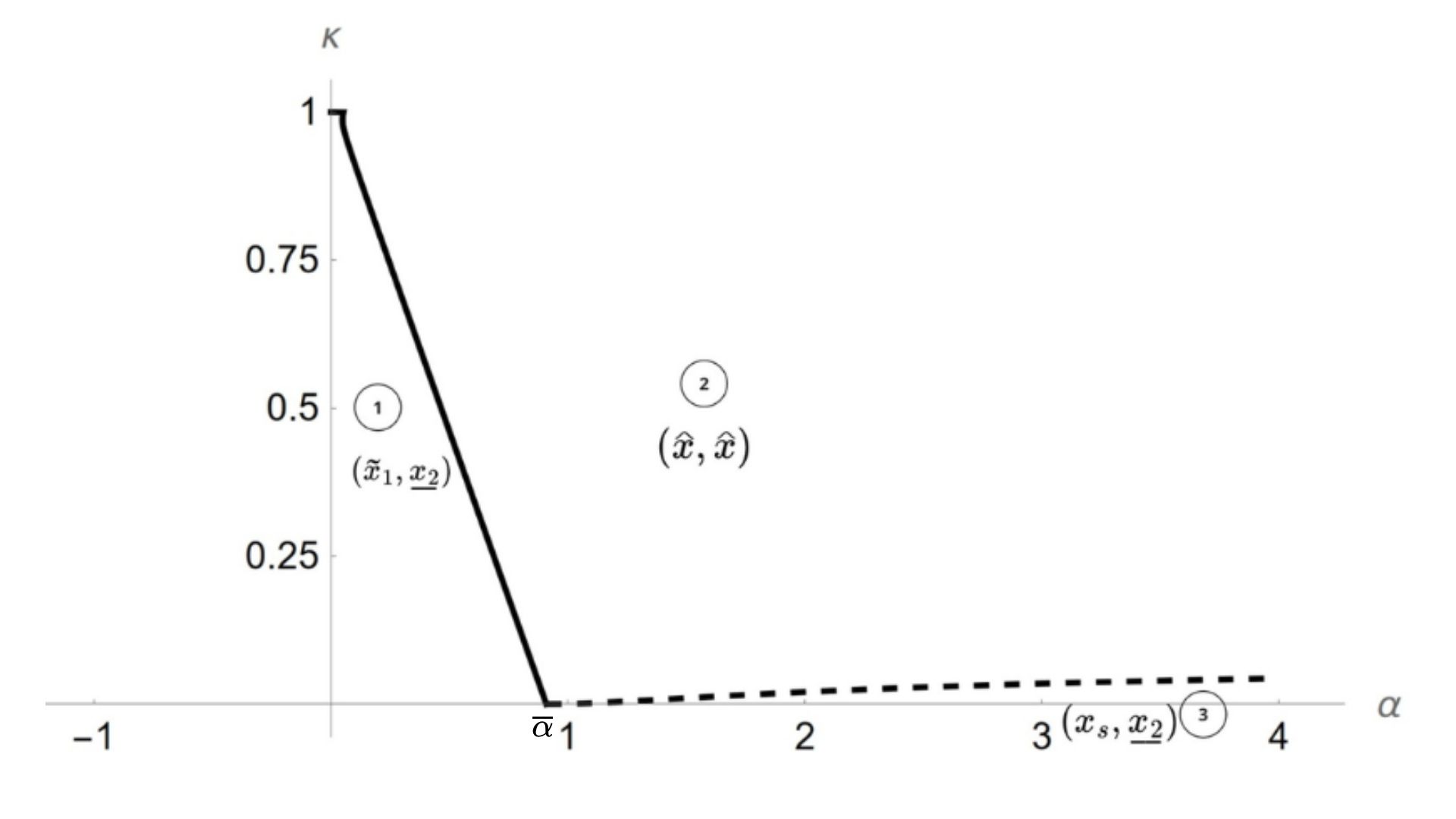}
    \end{center}\vspace{-0.4ex}
    \caption{Optimal strategy when $w = 10$, $v(x) = \frac{x^{(1 - \rho)}}{1 - \rho}$ with $\rho = 0.05$, and $F^{i}_{y_2}(x) = F^{i}_{y_1}(x) = F_{[0,w/2]}(2, 4)$.}
    \label{F1}
\end{figure}
\noindent Before proceeding with detailing the strategy of an individual with $\alpha_i > 0$ and $\kappa_i > 0$, it is useful to briefly describe three polar cases: 
\begin{itemize}
    \item \textbf{Selfish individual} ($\alpha_i=\kappa_i=0$): This individual selects $x_1^* = x_{s}$ and ${x_2^*} = 0$. This implies that the individual never rejects a positive offer and selects the offer that maximizes his expected payoff given his beliefs about others’ rejection behavior. 
    \item \textbf{Spiteful individual} ($\alpha_i > 0$ and $\kappa_i=0$): This individual selects ${x_1^*} = x_s$ and ${x_2^*} = \underline{x_2} > 0$. This implies that the individual selects a strictly positive rejection threshold, which increases with $\alpha_i$, and the payoff-maximizing offer for any value of $\alpha_i$.
    \item \textbf{Homo moralis individual} ($\alpha_i = 0$ and $\kappa_i \in (0, 1)$): This individual selects $x_1^* = \widetilde{x}(\kappa_i)$ and ${x_2^*} = 0$. This individual never rejects a positive offer, while his offer increases with $\kappa_i$.
\end{itemize}

\par

\noindent When the individual combines both $\alpha_i > 0$ and $\kappa_i \in (0, 1]$, his strategy is as follows. If the individual is not sufficiently spiteful, i.e., $\alpha_i \leq \overline{\alpha}$, he faces a trade-off between (i) selecting the strategy with the offer and rejection threshold that maximizes constrained utility (i.e., $\widetilde{x}_{1}$ and $\underline{x_2}$) and (ii) adopting a symmetric strategy that accepts offers yielding a negative payoff (i.e., $x_2 <  \underline{x_2}$). The resolution of this trade-off depends on $\alpha_i$. When $\alpha_i < \tilde{\alpha}(\kappa_i)$, then $\widetilde{x}_{1} > \underline{x_2}$, which implies that setting $x_1 = \widetilde{x}_{1}$ and $x_2 = \underline{x_2}$ is optimal (see Definitions \ref{D2} and \ref{D3}). On the other hand, when $\alpha_i \in (\tilde{\alpha}(\kappa_i), \overline{\alpha})$, then $\widetilde{x}_{1} < \underline{x_2}$, which implies that setting $x_1 = \widetilde{x}_{1}$ and $x_2 = \underline{x_2}$ is not optimal, as individual $i$ suffers a disutility when rejecting offers above what she would have offered. In that case, individual $i$ prefers adopting a symmetric strategy with $x_1^* = x_2^* = \hat{x} \in (\widetilde{x}_{1}, \underline{x_2})$ where the additional disutility is absent.   \par 

When the individual is sufficiently spiteful (i.e., $\alpha_i > \overline{\alpha}$), he faces a trade-off between (i) adopting a symmetric strategy with $x_2 < \underline{x_2}$ and (ii) selecting a rejection threshold exceeding their offer in the \textit{proposer} role (i.e., $x_1 < x_2$). The resolution of this trade-off depends on $\kappa_i$. To see this, note that all strategies with $x_1 < x_2$ give a payoff of zero if the strategy is universalized (i.e., the third term in (\ref{eq: (5)}) is zero). When $\kappa_i$ is low, the individual selects the strategy that maximizes his (net) payoff (i.e., $x_1^* = x_s$ and $x_2^* = \underline{x_2}$) as the disutility of setting $x_1 < x_2$ is low.\footnote{Conditional on selecting a strategy with $x_1 < x_2$, the individual prefers to set $x_1 = x_s$ as this maximizes his material payoff when proposer.} Conversely, when $\kappa_i$ is high, the individual chooses an offer equal to their rejection threshold (i.e.,  $x_1^* = x_2^*$), accepting offers that yield a negative (net) payoff (i.e., ${x_1}^* = {x_2}^* \in [\widetilde{x}_{1}, \underline{x_2}]$).  \par

Although predictions in all three regions described in Proposition \ref{P1} are theoretically possible, the area corresponding to Region 3 appears quantitatively negligible under most reasonable specifications. Moreover, for reasonable values of $\alpha_i$ and $\kappa_i$, most individuals fall within Region 1. To illustrate this, we use the individual estimates of $\alpha_i$ and $\kappa_i$ from  \cite{van2023estimating} to classify subjects into the three regions of Figure \ref{F1}. Figure \ref{fig:dots} (in Appendix \ref{AppC}) illustrates that 95\% of subjects (91 out of 96) are situated in Region 1 while 5\% (5 out of 96) are in Region 2. This classification is quantitatively similar when using other functional forms of utility and beliefs. \par

\subsection{Relation with experimental results}\label{subsec:relation}
In this section, we relate the optimal strategy characterized in Proposition \ref{P1} to findings from the experimental literature. A key question is what defines individuals who reject positive offers. Corollary \ref{C1} shows that punishments observed in the UG have a spiteful nature.
\begin{corol}\label{C1}
For any $\kappa_i \in [0, 1)$, $\alpha_i \leq 0$ is a necessary and sufficient condition for ${x_2}^* = 0$.     
\end{corol}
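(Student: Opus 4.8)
The plan is to pin down $x_2^*$ using the three regions of Proposition \ref{P1} and to identify, in each region, exactly when the operative candidate for the rejection threshold — either $\underline{x_2}$ or $\hat{x}$ — equals zero. The object that controls everything is the integrand
\[
g(y_1) \;\equiv\; (1-\kappa_i)\,v(y_1) \;-\; \alpha_i\,[\,v(w-y_1) - v(y_1)\,],
\]
which appears in the second term of (\ref{eq: (5)}) and in Definition \ref{D3}. Since $y_1 \in [0, w/2]$ implies $w - y_1 \ge y_1$, Assumption \ref{concave} gives $v(w-y_1) - v(y_1) \ge 0$, and the normalization $v(0) = 0$ together with monotonicity gives $v(y_1) \ge 0$. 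Hence the sign of $g$ near the origin is governed solely by $\alpha_i$, with $g(0) = -\alpha_i v(w)$.

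For sufficiency ($\alpha_i \le 0 \Rightarrow x_2^* = 0$) I would argue directly from (\ref{eq: (5)}), so that the argument also covers $\alpha_i < 0$ (not included in Assumption \ref{Types}). When $\alpha_i \le 0$ and $\kappa_i < 1$, both terms of $g$ are nonnegative, so $g \ge 0$ on $[0, w/2]$ (and strictly positive for $y_1>0$); consequently the map $x_2 \mapsto \int_{x_2}^{w/2} g\, dF_i^{y_1}$ is strictly decreasing on the support and is maximized at $x_2 = 0$. The only other place $x_2$ enters (\ref{eq: (5)}) is through the indicator $\mathbf{1}_{\{x_1 \ge x_2\}}$, which at $x_2 = 0$ takes its maximal value for every $x_1 \ge 0$. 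Thus, holding $x_1$ fixed, the best response in $x_2$ is $x_2 = 0$, and therefore $x_2^* = 0$. (Equivalently, $\tilde\alpha(\kappa_i) \ge 0$, so $\alpha_i \le 0$ lands in Region 1 with $x_2^* = \underline{x_2} = 0$.)

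For necessity I show the contrapositive, $\alpha_i > 0 \Rightarrow x_2^* > 0$, region by region, now using Proposition \ref{P1} (for which Assumption \ref{Types} holds). By continuity $g < 0$ on some $[0,\delta)$. In Regions 1 and 3 we have $x_2^* = \underline{x_2}$; since raising $x_2$ from $0$ to $\delta$ strictly increases $\int_{x_2}^{w/2} g\, dF_i^{y_1}$ (it deletes a strictly negative contribution, using that beliefs place mass near $0$), $x_2 = 0$ cannot solve Definition \ref{D3}, so $\underline{x_2} > 0$. In Region 2 we have $x_2^* = \hat{x}$, where $\alpha_i > \tilde\alpha(\kappa_i) \ge 0$ forces $\alpha_i > 0$; to conclude $\hat{x} > 0$ I evaluate the one-sided derivative of $u_i(x,x)$ at $x = 0$, which equals
\[
(1-\kappa_i)\,v(w)\,f_i^{y_2}(0) \;+\; \alpha_i\,v(w)\,f_i^{y_1}(0) \;+\; \kappa_i\,[\,v'(0) - v'(w)\,].
\]
Each summand is nonnegative — the last by strict concavity of $v$ (Assumption \ref{concave}) — and the middle one is strictly positive when $\alpha_i > 0$, so $u_i(\cdot,\cdot)$ is strictly increasing at the diagonal origin and $\hat{x} > 0$. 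Combining the three regions yields $x_2^* > 0$ whenever $\alpha_i > 0$.

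The main obstacle is the Region 2 case, i.e. establishing $\hat{x} > 0$: unlike $\underline{x_2}$, the symmetric optimum in Definition \ref{D4} also carries the moral term $\kappa_i[v(w-x) + v(x)]$ and the proposer payoff, so ruling out the corner $x = 0$ requires the full one-sided first-order condition above rather than the sign of $g$ alone. That computation presumes the belief densities place mass in a neighborhood of $0$; if $f_i^{y_1}(0) = 0$ I would instead invoke the strictly negative mass of $g$ on $[0,\delta)$, and note that strict concavity of $u_i(\cdot,\cdot)$ on the diagonal (Assumption \ref{A4}) guarantees the one-sided sign pins down an interior maximizer. A secondary bookkeeping point is that sufficiency must be argued straight from (\ref{eq: (5)}) rather than from Proposition \ref{P1}, since the latter is stated only for $\alpha_i \ge 0$.
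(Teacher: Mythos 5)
Your proof is correct, and its skeleton matches the paper's: sufficiency via the same term-by-term decomposition of (\ref{eq: (5)}) (the integral term is maximized at $x_2 = 0$, the indicator term is maximal there, and the proposer term is unaffected), and necessity by running through the regions of Proposition \ref{P1}. The genuine point of departure is the symmetric region (Region 2). You establish $\hat{x} > 0$ by signing the one-sided derivative of $u_i(x,x)$ at $x = 0$, which obliges you to assume the belief densities put mass near zero and to add a fallback for $f_i^{y_1}(0) = 0$ (a fallback that itself would not bite if there is literally no belief mass near zero, since then there is no ``strictly negative mass of $g$'' to invoke). The paper avoids this corner calculus entirely with a dominance argument: if the optimum were symmetric with ${x_2}^* = 0$, it would have to be the strategy $(0,0)$, which is strictly dominated by $(x_s, 0)$ — a strictly higher expected proposer payoff and, by strict concavity of $v$, a weakly higher universalization payoff — a contradiction. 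That route is shorter and needs no condition on the densities at the corner, so it is the more robust of the two. On the other hand, your write-up is more complete in two respects the paper treats loosely: you actually prove $\underline{x_2} > 0$ when $\alpha_i > 0$ via the sign of $g$ near the origin (the paper merely asserts this from Definition \ref{D3}), and your sufficiency argument covers $\alpha_i < 0$ uniformly, whereas the paper's proof fixes $\alpha_i = 0$ and relegates the strictly negative case to a footnote.
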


Therefore, a necessary and sufficient condition for rejecting positive offers is having $\alpha_i > 0$.\footnote{Note that ${x_2}^* = 0$ for any $\alpha_i \leq 0$. Thus, the optimal strategy of an individual with $\alpha_i < 0$ coincides with that of an individual with $\alpha_i = 0$. However, when $\kappa_i = 1$ and $\alpha_i = 0$, the individual disregards material payoff and focuses solely on the implications of universalizing their strategy. In this case, ${x_1}^* = \frac{w}{2}$ and ${x_2}^* \in [0, \frac{w}{2}]$, implying that ${x_2}^*$ may be strictly positive. \label{foot1}} On the other hand, having $\kappa_i > 0$ is not a necessary condition for selecting a rejection threshold above zero. Intuitively, $\kappa_i$ acts as an amplifier of $\alpha_i$ and is therefore relevant only when $\alpha_i > 0$. 
Corollary \ref{C1} relates to the findings documented in  \cite{branas2014fair}, who identify two types of high-punishers: one prosocial type, characterized by high offers in the DG, and one spiteful type, characterized by zero transfers. These results are consistent with two types of individuals: one with high $\kappa_i$ and low (but positive) $\alpha_i$, and one with low $\kappa_i$ and high $\alpha_i$. Although both types select high rejection thresholds in the UG, the former transfers large amounts in the DG, whereas the latter transfers zero. We confirm this argument in Section \ref{OUS}, where we conduct an out-of-sample exercise with data from \cite{van2023estimating}. 

\par 
Although $\widetilde{x}_{1}$, $\hat{x}$, and $\underline{x_2}$ are continuous and increasing in $\kappa_i$, this is not always the case for $x_1^*$ and $x_2^*$. Intuitively, $x_1^*$ and $x_2^*$ increase with $\kappa_i$ when they fall within one of the three regions defined in Proposition \ref{P1}. However, $x_1^*$ (resp. $x_2^*$) jumps up (resp. down) when changing from Region 3 to Region 2 (see Figure \ref{F3}).Corollary \ref{C3} shows that having $\alpha_i \leq \overline{\alpha}$ is sufficient to have both $x_1^*$ and $x_2^*$ continuous and increasing in $\kappa_i$. 
 \par
\begin{corol} \label{C3}
   When $\alpha_i \in [0, \overline{\alpha})$, ${x_1}^*$ and ${x_2}^*$ are continuous in $\kappa_i$ with $\frac{\partial {x_1}^*}{\partial \kappa_i} \geq 0$ and $\frac{\partial {x_2}^*}{\partial \kappa_i} \geq 0$. 

\end{corol}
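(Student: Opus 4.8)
The plan is to read the claim off Proposition \ref{P1} and reduce it to a gluing argument. Fix $\alpha_i \in [0, \overline{\alpha})$ and regard $x_1^*$ and $x_2^*$ as functions of $\kappa_i$ on $[0,1]$. First I would eliminate Region 3: both Region 3 and case (ii) of Region 2 in Proposition \ref{P1} require $\alpha_i > \overline{\alpha}$, so for $\alpha_i < \overline{\alpha}$ the optimum is always either $(x_1^*, x_2^*) = (\widetilde{x}_{1}, \underline{x_2})$ (when $\alpha_i \le \tilde{\alpha}(\kappa_i)$) or $(x_1^*, x_2^*) = (\hat{x}, \hat{x})$ (when $\alpha_i \in (\tilde{\alpha}(\kappa_i), \overline{\alpha})$). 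Using $\tilde{\alpha}(0) = \overline{\alpha}$ together with the monotonicity of $\tilde{\alpha}$ from the remarks after Lemma \ref{L2}, and since $\alpha_i < \overline{\alpha} = \tilde{\alpha}(0)$, there is a single threshold $\kappa_b$ solving $\alpha_i = \tilde{\alpha}(\kappa_b)$ such that the individual lies in Region 1 on $[0,\kappa_b]$ and in Region 2 on $[\kappa_b,1]$ (if no crossing occurs the individual stays in Region 1 throughout and the claim is immediate).

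On the interior of each region the conclusion is already available: the paper notes that $\widetilde{x}_{1}$, $\underline{x_2}$ and $\hat{x}$ are each continuous and increasing in $\kappa_i$, so $x_1^*$ and $x_2^*$ are continuous and nondecreasing on $(0,\kappa_b)$ and on $(\kappa_b,1)$. The real work is continuity at $\kappa_b$, and this is where the key idea enters. By Lemma \ref{L2}, at $\kappa_b$ we have $\widetilde{x}_{1} = \underline{x_2} =: \bar{x}$, so the two regions already agree on $x_2^*$ through $\underline{x_2}$ and on $x_1^*$ through $\widetilde{x}_1$ up to the value of $\hat{x}$; it remains to show $\hat{x}(\kappa_b) = \bar{x}$. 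Writing $J_1(x)$ for the objective in Definition \ref{D2} (maximized at $\widetilde{x}_{1}$) and $J_2(x)$ for the objective in Definition \ref{D3} (maximized at $\underline{x_2}$), and evaluating (\ref{eq: (5)}) on the diagonal, where the indicator $\mathbf{1}_{\{x_1 \ge x_2\}}$ equals one, yields the decomposition
\[
u_i(x,x) = 2\,J_1(x) + J_2(x).
\]
At $\kappa_b$ both summands are maximized at the same point $\bar{x}$; since any point that simultaneously maximizes $J_1$ and $J_2$ maximizes their sum, $\bar{x}$ maximizes $u_i(\cdot,\cdot)$ on the diagonal, and by uniqueness of the maximizer (concavity of the diagonal problem, Definition \ref{D4}) we get $\hat{x}(\kappa_b) = \bar{x}$. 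Hence both pieces of $x_1^*$ and of $x_2^*$ agree at $\kappa_b$, so each is continuous on $[0,1]$.

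Finally I would glue: a function that is continuous on $[0,1]$ and increasing on each of the abutting intervals $[0,\kappa_b]$ and $[\kappa_b,1]$ is increasing on the whole interval, which gives $\partial x_1^*/\partial \kappa_i \ge 0$ and $\partial x_2^*/\partial \kappa_i \ge 0$. I expect Step~4 (the boundary) to be the main obstacle, and its subtlety is worth flagging: in Region 2 one has $\hat{x} \in (\widetilde{x}_{1}, \underline{x_2})$, so it looks as though $x_2^*$ drops from $\underline{x_2}$ to the strictly smaller $\hat{x}$ at the transition, threatening monotonicity of $x_2^*$. The decomposition resolves this by showing the two formulas coincide \emph{exactly at} $\kappa_b$; the apparent gap disappears because $\widetilde{x}_1$, $\underline{x_2}$ and $\hat{x}$ all collapse to $\bar{x}$ there, and for $\kappa_i > \kappa_b$ the relevant comparison is $\hat{x}(\kappa_i) \ge \hat{x}(\kappa_b) = \bar{x}$ rather than $\hat{x}(\kappa_i)$ versus the (no longer operative) $\underline{x_2}(\kappa_i)$. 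Should $\tilde{\alpha}$ fail to be monotone, the same boundary identity $\hat{x} = \widetilde{x}_{1} = \underline{x_2}$ holds at every crossing, so continuity plus an almost-everywhere nonnegative derivative still delivers global monotonicity.
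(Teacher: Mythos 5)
Your overall architecture is sound, but it has one genuine gap: the step you label as ``already available'' --- that $\widetilde{x}_{1}$ and $\underline{x_2}$ are continuous and increasing in $\kappa_i$ inside Region 1 --- is precisely the analytic content of the paper's proof, and you cannot legitimately cite it. The sentence in Section \ref{subsec:relation} you invoke (``Although $\widetilde{x}_{1}$, $\hat{x}$, and $\underline{x_2}$ are continuous and increasing in $\kappa_i$\dots'') is an unproven assertion in the main text whose only substantiation is the proof of Corollary \ref{C3} itself, so relying on it is circular. The paper's proof consists almost entirely of establishing exactly these facts: it implicitly differentiates the first-order conditions defining $\widetilde{x}_{1}$ and $\underline{x_2}$ and shows each derivative is a ratio of terms with determinate signs (numerator and denominator both negative for $\widetilde{x}_{1}$, the denominator's sign coming from Assumption \ref{A4}; both positive for $\underline{x_2}$). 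Only the monotonicity of $\hat{x}$ is citable, since $\frac{\partial \hat{x}}{\partial \kappa_i} \geq 0$ is derived in the proof of Lemma \ref{L2} by implicitly differentiating (\ref{11}). As written, your argument is therefore a conditional reduction --- ``if the three auxiliary objects are continuous and increasing, then so are $x_1^*$ and $x_2^*$'' --- not a proof of the corollary.

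That said, the part you do supply is correct and is, interestingly, the part the paper glosses over. The decomposition $u_i(x,x) = 2J_1(x) + J_2(x)$ checks out (the factors of $\tfrac{1}{2}$ in Definition \ref{D2}, against their absence in (\ref{eq: (5)}) and Definition \ref{D3}, make the coefficient $2$ exactly right), and the conclusion that $\hat{x} = \widetilde{x}_{1} = \underline{x_2}$ at any crossing $\alpha_i = \tilde{\alpha}(\kappa_b)$ follows cleanly, giving continuity of $x_1^*$ and $x_2^*$ across the region boundary. The paper's proof only establishes nonnegative derivatives region by region; it never verifies agreement at the boundary, which is needed to rule out a downward jump of $x_2^*$ from $\underline{x_2}$ to $\hat{x}$ at the transition (a function increasing on two abutting intervals need not be increasing globally without continuity at the junction). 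Your treatment of a possibly non-monotone $\tilde{\alpha}$ is also appropriate, since the paper only claims that monotonicity holds ``generally''. In short, the two arguments are complements: the paper proves the within-region comparative statics and omits the gluing; you prove the gluing and omit the comparative statics. A complete proof needs both.
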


As mentioned before, $\overline{\alpha}$ appears quantitatively large under most reasonable specifications, suggesting that this condition is met by the majority of individuals. Therefore, Corollary \ref{C3} aligns with experimental studies that show a positive correlation between rejection thresholds and measures of morality. \cite{kimbrough2016norms} document a positive correlation between individuals' rejection thresholds and their degree of norm-following.\footnote{\cite{juanbartroli2023norms} show that models of norm compliance can be reformulated as models of \textit{homo moralis} preferences, implying a positive correlation between an individual's degree of norm compliance ($\gamma_i$) and their degree of morality ($\kappa_i$). Furthermore, \cite{juanbartroli2023norms} documents this positive correlation.} Similarly, \cite{capraro2022moral} document a positive correlation between individuals' rejection thresholds and two measures of morality (based on a ``trade-off'' game and a ``moral foundations'' questionnaire).\footnote{This complementarity between morality and social preferences relates to the research in psychology showing that individuals with stronger moral identities are more likely to engage in costly prosocial behavior (e.g., \citealp{aquino2002self}).} 
\par 
Figure \ref{F3} illustrates how the optimal offer and rejection threshold vary with $\kappa_i$ when (i) $\alpha_i = 0$ (dashed blue), (ii) $\alpha_i = 0.5$ (dotted black), and (iii) $\alpha_i = 3$ (dashed red). These scenarios represent all possible cases: (i) the optimal strategy is always in Region 1 ($\alpha_i = 0$), (ii) it moves from Region 1 to Region 2 ($\alpha_i = 0.5$), and (iii) it moves from Region 3 to Region 2 ($\alpha_i = 3$). \par 
\begin{figure}[H]
  \centering
  \begin{minipage}[b]{0.45\textwidth}
    \includegraphics[width=\textwidth]{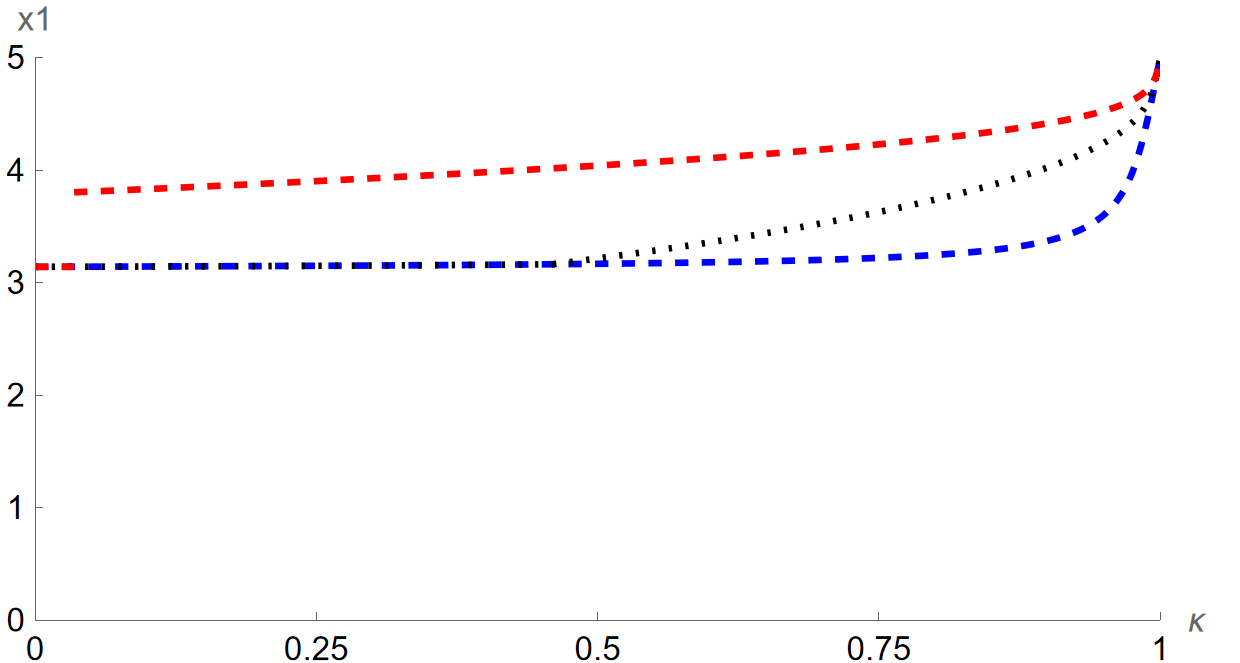}
    \captionsetup{labelformat=empty}
    \caption{Optimal offer }
    \addtocounter{figure}{-1} 
  \end{minipage}
  \hfill
  \begin{minipage}[b]{0.45\textwidth}
    \includegraphics[width=\textwidth]{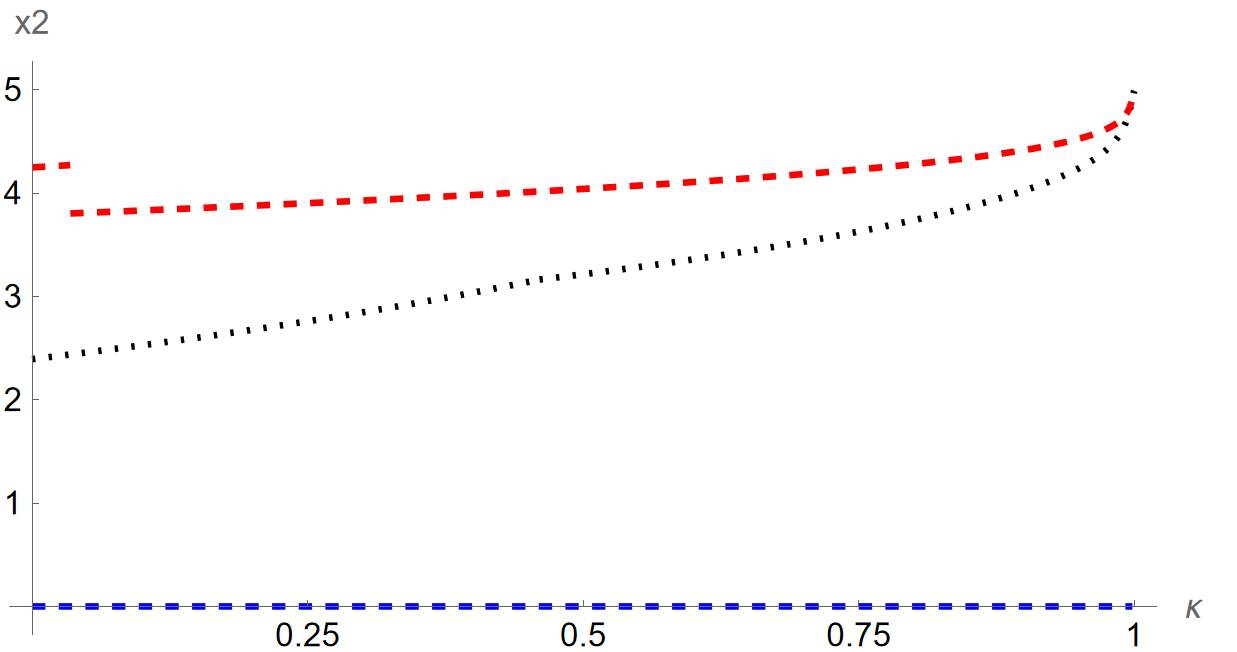}
    \captionsetup{labelformat=empty}
    \caption{Optimal rejection threshold }
    \addtocounter{figure}{-1} 
  \end{minipage}
  \caption{Optimal offer (left) and optimal rejection threshold (right) when $w = 10$, $v(x) = \frac{x^{(1 - \rho)}}{1 - \rho}$ with $\rho = 0.05$, and $F^{i}_{y_2}(x) = F^{i}_{y_1}(x) = F_{[0,w/2]}(2, 4)$. $\alpha_i = 0$ (dashed blue), $\alpha_i = 0.5$ (dotted black), and $\alpha_i = 3$ (dashed red).}
  \label{F3}
\end{figure}
When $\alpha_i = 0$, ${x_1}^* = \widetilde{x}_{1}$ and ${x_2}^* = 0$. Therefore, while ${x_1}^*$ is increasing in $\kappa_i$, ${x_2}^*$ is equal to zero for any $\kappa_i \in [0, 1]$. When $\alpha_i = 0.5$, the optimal strategy is characterized by two regions divided by $\kappa_i = 0.46$ (i.e., the value such that $\widetilde{x}_{1} = \underline{x_2}$). When $\kappa_i \in [0, 0.46)$, then ${x_1}^* = \widetilde{x}_{1}$ and ${x_2}^* = \underline{x_2}$, both increasing in $\kappa_i$. On the other hand, when $\kappa_i \in [0.46, 1]$, then ${x_1}^* = {x_2}^* = \hat{x}$, increasing in $\kappa_i$. Finally, when $\alpha_i = 3$, the optimal strategy is also characterized by two regions. When $\kappa_i \in [0, 0.03)$, then ${x_1}^* = x_s$ (constant in $\kappa_i$) and ${x_2}^* = \underline{x_2}$ (increasing in $\kappa_i$). When $\kappa_i \in [0.03, 1]$, then ${x_1}^* = {x_2}^* = \hat{x}$ (increasing in $\kappa_i$). Note that in $\kappa_i = 0.03$ (i.e., the value in which $u(x_s, \underline{x_2}) = u(\hat{x}, \hat{x})$), ${x_1}^*$ jumps up from $x_s$ to $\hat{x}$, while ${x_2}^*$ jumps down from $\underline{x_2}$ to $\hat{x}$. \par 
Several studies have examined the relationship between $x_1$ and $x_2$. Corollary \ref{C2} shows that if individuals attach a sufficiently large weight to universalization, then their offer is above their rejection threshold.
\begin{corol} \label{C2}
There exists $\tilde{\kappa}(\alpha_i) \geq 0$ such that for any $\kappa_i > \tilde{\kappa}(\alpha_i)$, ${x_1}^* \geq {x_2}^*$.      
\end{corol}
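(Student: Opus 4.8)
The plan is to read the claim directly off the region decomposition in Proposition \ref{P1}, since the only configuration in which the optimum has $x_1^* < x_2^*$ is Region 3. First I would record that in Regions 1 and 2 the conclusion holds automatically: in Region 1 Proposition \ref{P1} gives $x_1^* = \widetilde{x}_{1} \geq \underline{x_2} = x_2^*$, and in Region 2 it gives $x_1^* = x_2^* = \hat{x}$, hence equality. Thus $x_1^* < x_2^*$ can occur only when the optimum lies in Region 3, which by Proposition \ref{P1} requires \emph{both} $\alpha_i > \overline{\alpha}$ and $\kappa_i \leq \tilde{\kappa}(\alpha_i)$. The whole task then reduces to choosing a threshold that rules Region 3 out.

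Next I would define the threshold appearing in the statement by cases. When $\alpha_i \leq \overline{\alpha}$, Region 3 is unreachable for every $\kappa_i$, so I set $\tilde{\kappa}(\alpha_i) = 0$; any $\kappa_i > 0$ then leaves the individual in Region 1 or 2 and $x_1^* \geq x_2^*$ follows. When $\alpha_i > \overline{\alpha}$, I take $\tilde{\kappa}(\alpha_i)$ to be exactly the value supplied by Lemma \ref{L2}, at which $u(\hat{x}, \hat{x}) = u(x_s, \underline{x_2})$. For any $\kappa_i > \tilde{\kappa}(\alpha_i)$ the second branch (ii) of Proposition \ref{P1} applies, placing the optimum in Region 2 with $x_1^* = x_2^* = \hat{x}$, so again $x_1^* \geq x_2^*$. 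Combining the two cases yields a single $\tilde{\kappa}(\alpha_i)$ with the desired property, and the \emph{strict} inequality $\kappa_i > \tilde{\kappa}(\alpha_i)$ is precisely what excludes the boundary tie $\kappa_i = \tilde{\kappa}(\alpha_i)$, which Proposition \ref{P1} resolves in favor of Region 3.

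Because the analytic content is carried entirely by Proposition \ref{P1} and Lemma \ref{L2}, there is no genuinely hard step; the points requiring care are bookkeeping. The first is nonnegativity of the threshold, as the statement demands: the Lemma \ref{L2} object is only meaningful for $\alpha_i > \overline{\alpha}$ and would fall below $0$ if extended beneath $\overline{\alpha}$, so I truncate it at $0$ through the piecewise definition above, and on its proper range it is a value of the morality parameter $\kappa_i \in [0,1]$ (Assumption \ref{Types}), hence $\geq 0$. The second, which I would include to ensure the corollary is non-vacuous, is to verify $\tilde{\kappa}(\alpha_i) < 1$ so that admissible $\kappa_i \in (\tilde{\kappa}(\alpha_i), 1]$ exist: at $\kappa_i = 1$ the objective (\ref{eq: (5)}) collapses to $\mathbf{1}_{\{x_1 \geq x_2\}}[v(w - x_1) + v(x_1)]$, which by strict concavity of $v$ (Assumption \ref{concave}) is maximized at $x_1^* = \tfrac{w}{2} \geq x_2^*$, so the optimum at $\kappa_i = 1$ never lies in Region 3 and thus $\tilde{\kappa}(\alpha_i) < 1$. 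I expect this boundary-and-range accounting, rather than any inequality manipulation, to be the most error-prone part.
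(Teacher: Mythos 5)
Your proposal is correct and takes essentially the same route as the paper: the paper's proof likewise reads the claim off Proposition \ref{P1}, observing that ${x_2}^* > {x_1}^*$ occurs only in Region 3 (which requires $\alpha_i > \overline{\alpha}$ and $\kappa_i \leq \tilde{\kappa}(\alpha_i)$), so that $\kappa_i > \tilde{\kappa}(\alpha_i)$ forces ${x_1}^* \geq {x_2}^*$. Your explicit case distinction (setting $\tilde{\kappa}(\alpha_i) = 0$ when $\alpha_i \leq \overline{\alpha}$) and the non-vacuity check $\tilde{\kappa}(\alpha_i) < 1$ are careful refinements that the paper's two-line proof leaves implicit.
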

Corollary \ref{C2} can be interpreted as follows. When individuals attach a sufficiently high weight to universalization reasoning, their strategy does not have their offer below their rejection threshold. In other words, individuals avoid proposing offers they would reject in the \textit{responder} role. This finding aligns with \cite{ccelen2017blame}, who document that responders are more likely to reject offers lower than those they would have proposed in the \textit{proposer} role. Similarly, \cite{candelo2019proposer} finds a positive correlation between offers and hypothetical rejection thresholds, with the latter elicited in a post-experimental survey. \cite{ccelen2017blame} and \cite{candelo2019proposer} interpret these findings as evidence of reciprocity and the \textit{false consensus effect}, respectively. We argue that both observations can be explained through universalization reasoning.

\section{Types of punishers} \label{OUS}
In this section, we use data from \cite{van2023estimating} to better understand which individuals select high rejection thresholds in the UG. Although our theoretical model predicts individuals’ strategies, high rejection thresholds may arise from different combinations of parameters. In Section \ref{Pred}, we use estimates from \cite{van2023estimating} to compute predicted transfers in the DG and rejection thresholds in the UG at the individual level.\footnote{Besides the alternative sample restriction detailed in footnote (\ref{restriction}), the main difference from the estimation conducted in \cite{van2023estimating} is that we perform out-of-sample predictions for dictator and ultimatum games.} In Section \ref{predict}, we estimate a finite mixture model to characterize the behavior of different types. In our preferred specification, we identify two distinct types of individuals who select positive rejection thresholds: those who transfer approximately half of the endowment in the DG and those who transfer nothing.

We emphasize that we consider this exercise as qualitative rather than quantitative. This is motivated by two primary considerations. First, our theoretical model and the choice data from \cite{van2023estimating} differ in several dimensions (discussed below). This requires us to adopt further assumptions to make them directly comparable. Second, prior research highlights the challenges of making accurate quantitative out-of-sample predictions in contexts that differ from those used to estimate individual parameters (\citealp{erev1998predicting}).

\subsection{Predicted transfers and rejection thresholds} \label{Pred}

Before proceeding, a brief explanation of \cite{van2023estimating} is in order. 
The experiment was conducted at the CentERlab of Tilburg University with 136 subjects, primarily students.\footnote{The data from \cite{van2023estimating} used in this paper is available \href{https://dataverse.harvard.edu/dataset.xhtml?persistentId=doi:10.7910/DVN/PIQ572&version=1.0}{here}.} The experiment elicited subjects' behavior and beliefs across six sequential prisoner's dilemmas, six trust games, and six mini-ultimatum games. In each game, subjects made decisions for both roles behind the veil of ignorance. Additionally, subjects reported their beliefs about others' choices in each game. Each game involved two possible actions, with monetary payoffs varying according to the combination of players' choices.

In the mini-ultimatum games, the proposer chooses between an egalitarian distribution $(50,50)$ and an unequal split favoring themselves. If the proposer selects the unequal split, the responder decides whether to punish the proposer by opting for the egalitarian outcome $(10,10)$ or to accept the proposed unequal distribution. Across six versions of the game, the proposer’s unequal offers yielded $(60,40)$, $(65,35)$, $(70,30)$, $(75,25)$, $(80,20)$, and $(85,15)$, respectively. This design allows to compute each subject’s \textit{implicit rejection threshold}: the highest payoff in an unequal split that the subject still prefers to reject in favor of $(10,10)$ (more details below).

The authors use this data to estimate equation (\ref{1}) at various aggregation levels. For each level, they provide distinct sets of estimates based on differing assumptions regarding subjects' risk preferences.\footnote{Their individual-level estimates assume either risk-neutral or CRRA preferences, while their aggregate-level models assume either risk-neutrality or logarithmic utility.} They estimate their parameters at the experimental currency level (1 point $=$ $\EUR{0.17}$). Accordingly, we set $w = 58.8$ to represent an endowment of $\EUR{10}$.

Given Assumption \ref{concave} imposes risk aversion, we focus on the subjects from \cite{van2023estimating}'s \textit{core sample} whose individual CRRA utility function estimates satisfy $\alpha_i, \beta_i \in [-2, 2]$ and $\kappa_i \in [0, 1]$.\footnote{\cite{van2023estimating} define their \textit{core sample} as subjects whose individual risk-neutrality utility estimates satisfy $\alpha_i$, $\beta_i$, $\kappa_i$ $\in [-2, 2]$, resulting in a sample of 112 subjects. We introduce the additional restriction $\kappa_i \in [0, 1]$ and apply the same criteria under risk aversion. \label{restriction}} This results in a sample of 96 subjects, referred to as our \textit{test sample}.

Table \ref{tab:desc_stats_abcr} presents descriptive statistics for the individual estimates in our test sample (see Figures \ref{fig:alpha_hist_crra} to \ref{fig:kappa_hist_crra} for their distributions). On average, subjects in this sample are (i) spiteful when behind (mean $\alpha_i$ is 0.15), (ii) altruistic when ahead (mean $\beta_i$ is -0.16), and attach a positive weight towards universalization reasoning (mean $\kappa_i$ is 0.25). Substantial heterogeneity is observed in the estimated individual preference parameters. \vspace{0.2cm}
\begin{table}[H]
\centering
\begin{tabular}{lcccc}
\toprule
Variable & Min & Max & Mean & Median \\
\midrule
$\alpha$ & -0.87 & 1.07 & 0.15 & 0.15 \\
$\beta$ & -1.97 & 1.12 & -0.16 & -0.05 \\
$\kappa$ & 0.00 & 0.89 & 0.25 & 0.19 \\
\bottomrule
\end{tabular}
\caption{Descriptive statistics of individual estimates in the test sample ($N = 96$).}
\label{tab:desc_stats_abcr}
\end{table}

To compute individuals' predicted transfer and rejection threshold, we assume that they exhibit the preferences described in expression (\ref{1}), with $v(x) = ln(x+1)$.\footnote{Alternative specifications of $v(x)$ are problematic for several reasons. First, using $v(x) = x$ is not convenient, as transfers in the dictator game would be restricted to either $0$ or $\frac{w}{2}$. This prevents us from analyzing rejection thresholds for individuals who transfer intermediate amounts. Second, $v(x) = \frac{x^{1-\rho}}{1 - \rho}$ is also problematic, as it implies $v(0) < 0$. Although \cite{van2023estimating} use this specification, their games involve strictly positive monetary payoffs. \label{Explanation}

} The optimal transfer of individual $i$ in the DG is given by:
\begin{align}
x^*=&\argmax_{x \in [0, w]} \frac{1}{2}(1 - \kappa_i)\cdot v(w -  x) -\frac{1}{2} \alpha_i  \max \{v(x)-v(w-x),0\}  \\&- \frac{1}{2} \beta_i \max \{v(w-x)-v(x),0\} +\frac{1}{2} \kappa_i [v(w-x)+v(x)]. \nonumber
\end{align}

Figure \ref{fig:DG_histogram} shows the distribution of predicted transfers in the DG (see Table \ref{tab:descriptive_stats_donation_DG_x2bar} for the descriptive statistics). The average predicted transfer is approximately 16 \% of the endowment, lower than the 28.35\% in \cite{engel2011dictator} and the 40\% in \cite{branas2014fair}. Although the predicted transfers are lower than those reported by \cite{engel2011dictator}, the distribution is qualitatively similar (see Figure \ref{fig:DG_engel}). 

\begin{figure}[H]
    \centering
    \includegraphics[width=0.70\linewidth]{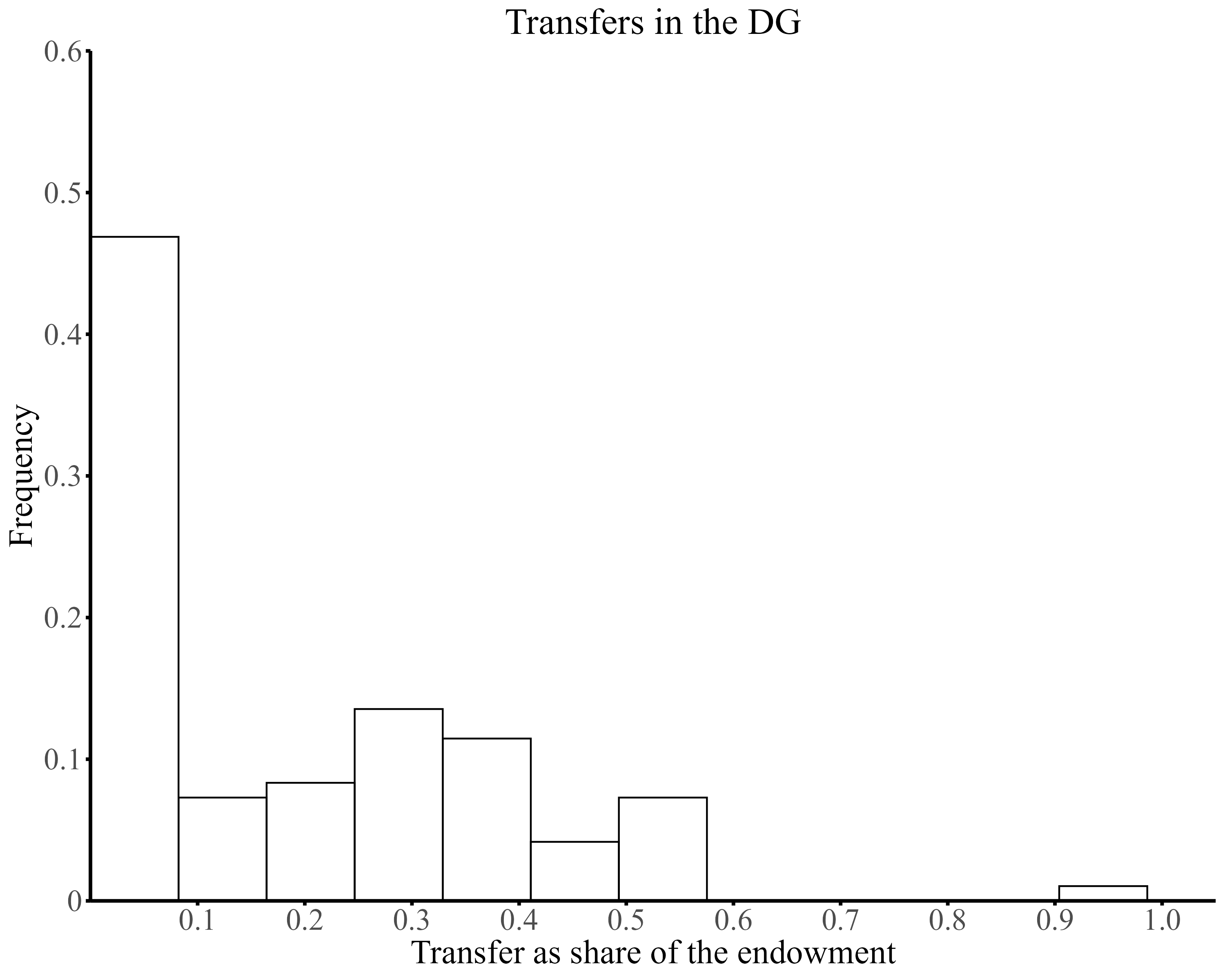}
    \caption{Predicted share transferred in the Dictator Game. Using \cite{van2023estimating} data, CRRA estimates, for $\alpha, \beta  \in [-2,2]$ and $\kappa \in [0,1]$. $N = 96$.}
    \label{fig:DG_histogram}
\end{figure}

To compute individuals' strategies in the UG, we assume that all individuals select ${x_1}^* = \frac{w}{2}$. Therefore, individuals' rejection thresholds are given by ${x_2}^* = \underline{x_2}$, as described in Definition \ref{D3}. This assumption is justified for three main reasons. First, there is a large concentration of offers in the UG around $\frac{w}{2}$.\footnote{For example, \cite{cochard2021} document a mean offer of 41.5\% in a meta-analysis with 96 studies of the UG.} Importantly, ${x_1}^* = \frac{w}{2}$ may be less influenced by individual preferences, as it can also reflect risk preferences, beliefs about others' rejection thresholds, or one's own hypothetical rejection threshold (see \citealp{candelo2019proposer}). Second, setting ${x_2}^* = \underline{x_2}$ is convenient as $\underline{x_2}$ does not depend on beliefs on others' offers. Third, as shown before, most individuals select a strategy with ${x_2}^* = \underline{x_2}$ under reasonable specifications (see Figure \ref{fig:dots}). Figure \ref{fig:histUG_w} presents the distribution of predicted rejection thresholds (see Table \ref{tab:descriptive_stats_donation_DG_x2bar} for the descriptive statistics).
\begin{figure}[H]
    \centering
\includegraphics[width=0.7\linewidth]{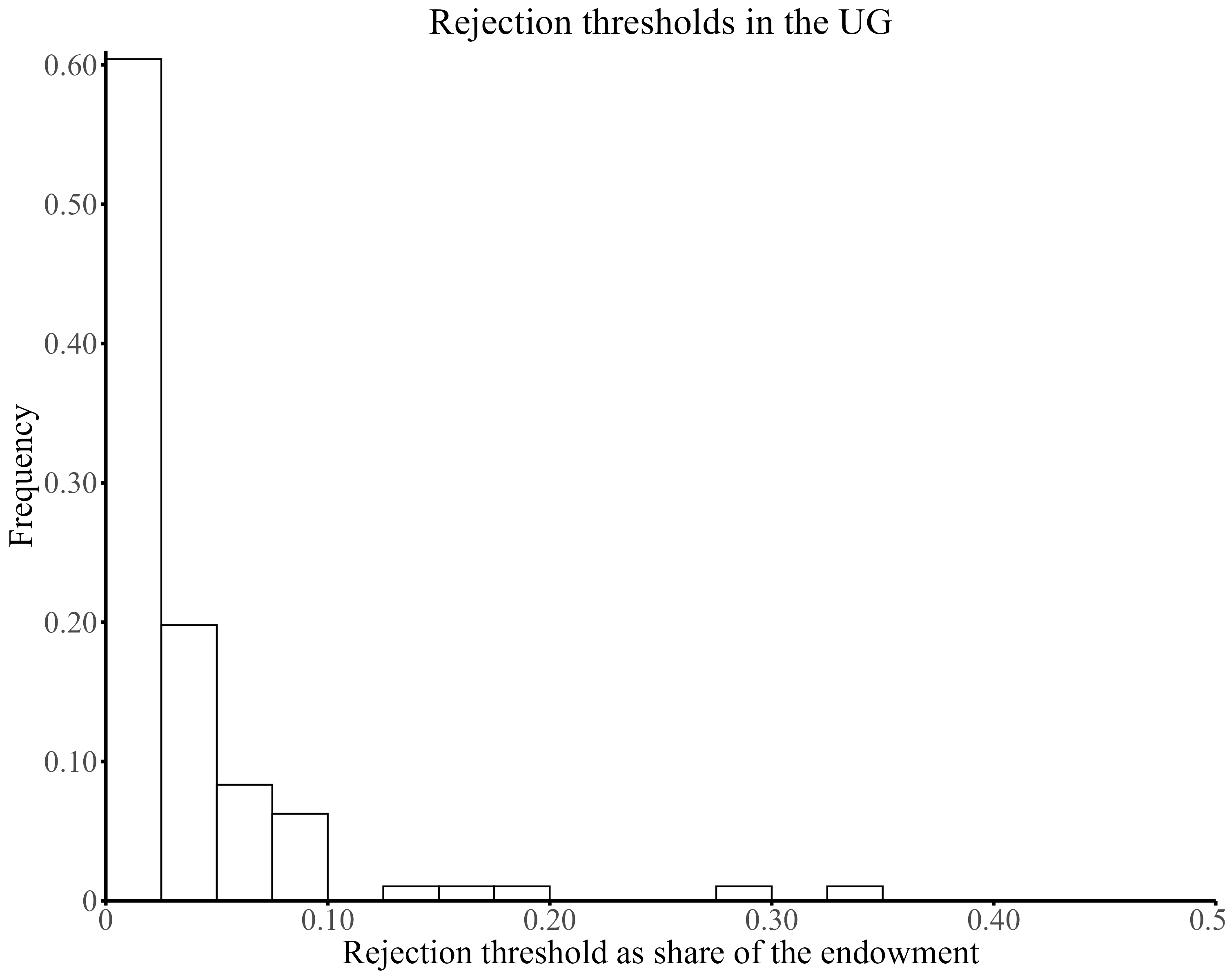}
    \caption{Constrained optimal rejection threshold $\underline{x_2}$ in the UG when $w = 58.8$ (10 euros). Using \cite{van2023estimating} core sample, CRRA estimates, for $\alpha, \beta  \in [-2,2]$ and $\kappa \in [0,1]$. $N = 96$.}
    \label{fig:histUG_w}
\end{figure}
The average rejection threshold is approximately 3\% of the endowment, substantially lower than the 33.35\% reported in \cite{branas2014fair}. This low average cannot be attributed to the inclusion of universalization reasoning given that rejection thresholds are increasing in $\kappa_i$ (see Figure \ref{fig:scatterUG_w}).\footnote{Table \ref{tab:descriptive_stats_donation_DG_x2bar_ab} shows that a model incorporating social preferences but excluding universalization reasoning predicts an average rejection threshold of  1\% of the endowment. } Despite the low average rejection threshold, more than half of the individuals are willing to reject positive amounts. 

We attribute the low rejection thresholds to three main reasons. First, approximately one-third of subjects in our sample exhibit $\alpha < 0$. For these individuals, accepting offers as low as zero is optimal (see Proposition \ref{1} and Figure \ref{fig:scatterUG_w}). Second, it is well-documented that people behave more pro-socially when deciding under role uncertainty (\citealp{iriberri2011role}; \citealp{alger2024doing}), which is the case in \cite{van2023estimating}. Finally, our estimates are computed from games with binary choices. This may reduce the likelihood of rejections, thereby generating parameter estimates consistent with lower rejection thresholds. Indeed, the \textit{implicit rejection thresholds} observed in \cite{van2023estimating} are relatively low, with 81\% of subjects accepting splits as unequal as 85-15, and an overall acceptance rate of 91\% (see Table 1 in \citealp{van2023estimating}). Thus, our predicted behavior is more in line with that observed in \cite{van2023estimating} than in \cite{branas2014fair}.\footnote{Figure \ref{fig:scatter_plot_w} illustrates the relationship between predicted transfers in the DG and rejection thresholds in the UG. For comparison, actual transfers and thresholds reported in \cite{branas2014fair} are presented in Figures \ref{fig:scatter_plot_branas_lab} and \ref{fig:scatter_plot_branas_city}. Figures \ref{fig:histDG_w_ab_all} to \ref{fig:scatter_plot_w_ab_all} present predicted transfers and rejection thresholds assuming $\kappa = 0$. While predicted transfers in the DG are similar, rejection thresholds display substantially less variability and are more concentrated near zero. Finally, Figure \ref{fig:scatter_plot_w_c_all} highlights the crucial role of social preferences in explaining differences in rejection thresholds in the UG.}

\subsection{3-Type predicted behavior} \label{predict}
Previous literature identifies two types of punishers in the UG: \textit{prosocial} punishers, who are willing to sacrifice their own material payoff to enforce fairness norms, and \textit{spiteful} punishers, who assign a negative weight to others’ material payoffs. For example, \cite{branas2014fair} document that individuals who transfer either zero or half of the endowment are most likely to select high rejection thresholds. 

In this section, we use the choice data from \cite{van2023estimating} to structurally estimate the utility parameters ($\alpha$, $\beta$, $\kappa$) and the ``noise'' parameter $\lambda$ of our test sample ($N = 96$).\footnote{This noise parameter $\lambda$ reflects the probability that decisions are made according to the postulated utility function, with a lower $\lambda$ indicating that choices follow closely the posited function.} We use finite mixture models (following \citealp{bruhin2019} and \citealp{van2023estimating}). These models assume that there exists a finite number of types $K \geq 1$ in the population and provide estimates of  $\alpha$, $\beta$, $\kappa$ and $\lambda$ for each type, along with its population share. We estimate a representative agent model ($K = 1$) and the cases with $K = 2$ and $K = 3$ using the expectation-maximization algorithm (\citealp{mclachlan2019}).\footnote{For a detailed discussion of the method, see \cite{van2023estimating}.}

The upper part of Table \ref{tab:paujose_core} reports the model results across different levels of aggregation. Our estimates are remarkably similar to those reported in \cite{van2023estimating} when imposing logarithmic utility (see their Table 5). When assuming a representative agent ($K = 1$), the estimates for the parameters are $\alpha_0 = 0.14$, $\beta_0 = -0.01$, and $\kappa_0 = 0.22$, with the subscript $0$ indicating the representative agent. These estimates are consistent with Assumption \ref{Types}.

When we allow for multiple types, all types exhibit positive degrees of morality ($\kappa > 0$), emphasizing the importance of this concern for understanding individuals’ behavior. In the two-type model, Type 1 (61\% of the sample) exhibits weak spite both when behind and ahead ($\alpha_1 = 0.05$, $\beta_1 = 0.08$) and strong universalization concerns ($\kappa_1 = 0.25$). Type 2 displays strong spite both when behind and ahead ($\alpha_2 = 0.28$, $\beta_2 = 0.28$) and lower universalization concerns ($\kappa_2 = 0.16$). In the three-type model, Type 1 (29\% of the sample) exhibits spite both when behind and ahead ($\alpha_1 = 0.13$, $\beta_1 = 0.22$) and strong universalization concerns ($\kappa_1 = 0.26$). Type 2 (33\% of the sample) exhibits no social preferences, with both coefficients not significantly different from zero, and moderate universalization concerns ($\kappa_2 = 0.22$). Finally, Type 3 is essentially identical to Type 2 in the two-type model and accounts for a similarly sized fraction of subjects (38\% versus 39\% in the two-type model). 

Based on the aggregate preference estimates, we compute the predicted transfers in the DG and the rejection thresholds in the UG. The last two rows of Table \ref{tab:paujose_core} display these predictions for each type. Given that the three-type model is the preferred specification in \cite{van2023estimating}, we focus our analysis on its results. Moreover, when comparing model fit using the \textit{Integrated Completed Likelihood} (ICL) and the \textit{Normalized Entropy Criterion} (NEC), our findings align with those of \cite{van2023estimating}: the NEC favors the two-type model, while the ICL favors the three-type model.\footnote{For both metrics, lower values indicate a better model fit. See \cite{celeux1996} and \cite{biernacki2000} for detailed derivations of the NEC and ICL metrics, respectively, and \cite{mclachlan2019} for a discussion on using these criteria to evaluate finite mixture models..}

\begin{table}[H]
    \begin{center}
    \caption{Aggregate estimates of preference ($\alpha$, $\beta$, $\kappa$) and noise ($\lambda$) parameters, type shares and predicted transfers and rejection thresholds for \cite{van2023estimating} core sample, with individual estimates of $\alpha, \beta \in [-2, 2]$ and $\kappa \in [0, 1]$. $N = 96$.}
        \label{tab:paujose_core}
    \begin{threeparttable}
    \begin{adjustbox}{width=\textwidth}
        \begin{tabular}{lcccccc}
            \toprule
            & \multicolumn{1}{c}{One type} & \multicolumn{2}{c}{Two types} & \multicolumn{3}{c}{Three types} \\
            \cmidrule(lr){2-2} \cmidrule(lr){3-4} \cmidrule(lr){5-7}
            & Type 1 & Type 1 & Type 2 & Type 1 & Type 2 & Type 3 \\
            \midrule
            $\alpha$                                & 0.14 (0.03) & 0.05 (0.03) & 0.28 (0.04) & 0.13 (0.08) & -0.02 (0.06) & 0.28 (0.03) \\
            $\beta$                                 & -0.01 (0.03) & 0.08 (0.03) & -0.30 (0.08) & 0.22 (0.06) & -0.08 (0.07) & -0.30 (0.09) \\
            $\kappa$                                & 0.22 (0.01) & 0.25 (0.02) & 0.19 (0.02) & 0.26 (0.05) & 0.22 (0.04) & 0.19 (0.02) \\
            $\lambda$                               & 0.25 (0.01) & 0.28 (0.02) & 0.16 (0.02) & 0.21 (0.02) & 0.33 (0.05) & 0.16 (0.02) \\
            Type share            & 1.00 & 0.61 (0.05) & 0.39 (0.05) & 0.29 (0.07) & 0.33 (0.08) & 0.38 (0.06) \\
            \midrule
            \textbf{Model fit} \\
            $ln(\mathcal{L})$                                     & \multicolumn{1}{c}{-2,063.28} & \multicolumn{2}{c}{-1,902.76} & \multicolumn{3}{c}{-1,865.90} \\
            EN($\tau$)                                   & \multicolumn{1}{c}{0.00} & \multicolumn{2}{c}{4.00} & \multicolumn{3}{c}{13.50} \\
            ICL                                     & \multicolumn{1}{c}{4,144.82} & \multicolumn{2}{c}{3,850.59} & \multicolumn{3}{c}{3,809.21} \\
            NEC                                     & \multicolumn{1}{c}{---} & \multicolumn{2}{c}{0.02} & \multicolumn{3}{c}{0.07} \\
            \midrule
            \textbf{Predicted behavior} \\
            DG transfer                             & 7.80 (0.13) & 14.90 (0.25) & 0.00 (0.00) & 22.10 (0.38) & 6.17 (0.10) & 0.00 (0.00) \\
            UG rejection threshold                  & 0.84 (0.01) & 0.31 (0.01) & 1.85 (0.03) & 0.85 (0.01) & 0.00 (0.00) & 1.81 (0.03) \\
            \bottomrule
        \end{tabular}
            \end{adjustbox}
        \end{threeparttable}
    \end{center}
        \footnotesize\textit{Notes:}  The table reports parameter estimates for one-, two-, and three-type models. Bootstrapped standard errors (SE) are displayed in parentheses next to the estimates. The DG transfer and UG rejection threshold values are shown as shares of the total endowment in parentheses. $ln(\mathcal{L})$: Log-likelihood; EN($\tau$): Entropy; $ICL$: Integrated completed likelihood; $NEC$: Normalized entropy criterion
\end{table}

In the three-type model, two types of individuals select positive rejection thresholds: one that transfers nearly 40\% of the endowment (Type 1) and another that transfers zero (Type 3). These two types differ primarily in their social preferences. Type 3 exhibits twice the level of spitefulness as Type 1. In contrast, Type 2, characterized solely by universalization reasoning, transfers 10\% of the endowment and accepts any offer in the UG This is in line with the behavior documented in \cite{branas2014fair}.

The predicted behavior reported in Table \ref{tab:paujose_core} aligns with the key insights from Section \ref{sec:theory}: (i) rejections in the UG are driven by spite, and (ii) universalization reasoning amplifies social preferences. Specifically, individuals with small but positive $\alpha_i$ and relatively large $\kappa_i$ (e.g., Type 1 in the two-type model) exhibit high transfers and positive rejection thresholds. In contrast, individuals with high and positive $\alpha_i$ and $\beta_i$ but moderate $\kappa_i$ (e.g., Type 2 in the two-type model or Type 3 in the three-type model) display no transfers in the DG but maintain positive rejection thresholds in the UG.

\section{Conclusions} \label{S5}
In this paper, we study the motivations behind individuals' rejections in the UG. We characterize the optimal offers and rejection thresholds of individuals whose utility function combine social preferences and moral concerns. We derive three main theoretical results. First, spite plays a crucial role in the rejections observed. Specifically, under the preferences considered, a positive degree of spite is both a necessary and sufficient condition for selecting a positive rejection threshold. In contrast, a positive degree of universalization reasoning is not even necessary for doing so. Second, both offers and rejection thresholds are increasing with individuals' universalization reasoning. Intuitively, universalization reasoning amplifies social preferences and is therefore only active when such preferences are present. Third, when individuals assign a sufficiently large weight to universalization reasoning, their offers exceed their rejection thresholds.

These three predictions are consistent with existing empirical evidence. They help explain (i) the rejection of positive offers by individuals exhibiting either selfish or prosocial behavior in the DG, (ii) the correlation between rejection thresholds and measures of morality (\citealp{kimbrough2016norms}; \citealp{capraro2022moral}), (iii) why individuals are more likely to reject offers if those are lower than the offer they would have selected as proposers (\citealp{ccelen2017blame}), and (iv) the positive correlation between offers and (hypothetical) rejection thresholds (\citealp{candelo2019proposer}).

To better understand the motives behind individuals’ rejections, we use the choice data from \cite{van2023estimating} to estimate a finite mixture model that characterizes different behavioral types. This approach allows us to predict each type’s transfer in the DG and rejection threshold in the UG. Focusing on the three-type model, the data clearly identifies two types who reject offers in the UG and differ in their predicted behavior in the DG. The first type is predicted to transfer nearly 40\% of the endowment, while the second type is predicted to transfer nothing. These two types differ primarily in their degree of spite: although both exhibit spiteful concerns, the second type is approximately twice as spiteful as the first. In contrast, the third type, labeled the “non-punisher”, lacks social preferences but exhibits universalization reasoning.

\newpage
\begin{appendices}
\section{Mathematical proofs} \label{AppA}
\subsection{Proofs main text}
When individual $i$ selects strategy $x = (x_1, x_2)$, she gets an expected utility of 
\begin{eqnarray} \label{eq: (6)} 
{u}_{i}(x_1, x_2)&=&
(1 - \kappa_i)v(w - x_1)F^{i}_{y_2}(x_1) \\
&+& \int_{x_2}^{\frac{w}{2}} [(1 - \kappa_i)v(y_1) - \alpha_i (v(w - y_1) - v(y_1))]dF_{i}^{y_1} \nonumber \\
&+&  \kappa_i \mathbf{1}_{\{x_1 \geq x_2\}}[v(w - x_1) + v(x_1)],
\nonumber 
\end{eqnarray} \par 
In the main text, we defined the four following objects. 
\begin{itemize}
    \item \textbf{Definition 1} (\emph{Selfish Offer}): $x_{s} \equiv \argmax_{x \in [0, w]}  \frac{1}{2}v(w - x_1)F^{i}_{y_2}(x_1)$. 

    \item \textbf{Definition 2} (\emph{Constrained optimal offer}): $\tilde{x}_{1}(\kappa_i) \equiv \argmax_{x_1 \in [0, w]} \frac{1}{2}(1 - \kappa_i) v(w - x_1)F^{i}_{y_2}(x_1) + \frac{1}{2}\kappa_i(v(w - x_1) + v(x_1))$.

    \item \textbf{Definition 3} (\emph{Constrained optimal rejection threshold}): $\underline{x_2}(\kappa_i, \alpha_i) \equiv  \argmax_{x_2 \in [0, w]} \int_{x_2}^{\frac{w}{2}} [(1 - \kappa_i)v(y_1) - \alpha_i (v(w - y_1) - v(y_1))]dF_{i}^{y_1}$.

    \item \textbf{Definition 4} (\emph{Symmetric optimal strategy}): $\hat{x}(\kappa_i, \alpha_i) \equiv \argmax_{x \in [0, w]}u_{i}(x, x)$.
\end{itemize} \par 
As in the main text, we omit the parameters from the above-defined objects and refer to them as $x_{s}$, $\tilde{x}_{1}$, $\underline{x_2}$, and $\hat{x}$. \par 
In Assumption \ref{A4}, we assumed that $u_{i}(x_1, x_2)$ is strictly concave in $x_1$ (i.e., increasing in $x_1$ over the interval $[0, \tilde{x}_{1})$ and decreasing in $x_1$ over the interval $(\tilde{x}_{1}, w]$). We now show under which conditions this is the case. To do so, it is sufficient to show that when $x_1 \geq x_2$, $\frac{\partial^2 {u}_{i}(x_1, x_2)}{\partial^2 x_1} < 0$.
\begin{eqnarray}
   \frac{\partial {u}_{i}(x_1, x_2)}{\partial x_1} &=& -\frac{1}{2} (1 - \kappa_i) v'(w - x_1)F^{i}_{y_2}(x_1) + \frac{1}{2} (1 - \kappa_i) v(w - x_1)f_{y_2}^{i}(x_1) \\
   &+& \frac{1}{2}\kappa_i \left[v'(x_1) - v'(w - x_1)\right], \nonumber
\end{eqnarray}
\begin{eqnarray} \label{eq: (7)}
\frac{\partial^2 {u}_{i}(x_1, x_2)}{\partial^2 x_1} & = &
\frac{1}{2} (1 - \kappa_i) v''(w - x_1)F_{y_2}^{i}(x_1) - \frac{1}{2} (1 - \kappa_i) v'(w - x_1)f_{y_2}^{i}(x_1)  \\
&+&  \frac{1}{2} (1 - \kappa_i) v(w - x_1)\frac{\partial f_{y_2}^{i}(x_1)}{\partial x_1} +  \frac{1}{2}\kappa_i \left[v''(w - x_1) + v''(x_1)\right]. \nonumber 
\end{eqnarray}
All terms in (\ref{eq: (7)}), except for the third, are strictly negative. Therefore, $\frac{\partial^2 {u}_{i}(x_1, x_2)}{\partial^2 x_1} < 0$ as long as $\frac{\partial f_{y_2}^{i}(x_1)}{\partial x_1}$ is below a certain positive constant, as discussed in footnote \ref{Convavity}.
\par 
\begin{proof}
\textit{Lemma 1}: Consider two cases: i) $\tilde{x}_{1} \geq \underline{x_2}$ and ii) $ \tilde{x}_{1} < \underline{x_2}$. \par 
For the first case, ${x_1}^* = \tilde{x}_{1}$ and ${x_2}^* = \underline{x_2}$. This follows from the definitions of $\tilde{x}_{1}$ and $\underline{x_2}$, as conditional on having $\tilde{x}_{1} \geq \underline{x_2}$, $\tilde{x}_{1}$ and $\underline{x_2}$ are the offer and rejection threshold that maximizes individual $i$'s utility. \par 
For the second case, we first show the three following claims. In all claims, we fix $\alpha_i$ and $\kappa_i \in [0, 1]$ such that $\underline{x_2} > \tilde{x}_{1}$. \par 
\textbf{Claim 1}: ${x_1}^* \leq {x_2}^*$. \par  
By contradiction, suppose that there exists a strategy $x^*$ with ${x_1}^* > {x_2}^*$. By the definition of $\underline{x_2}$, ${x_2}^* = \underline{x_2}$. Additionally, by the strictly concavity of $u_{i}(x_1, x_2)$ on $x_1$, when ${x_1} > {x_2}$ ${u}_{i}(x_1, x_2)$ is strictly decreasing in $x_1$ for any $x_1 > \tilde{x}_{1}$. But then, when ${x_2}^* = \underline{x_2} > \tilde{x}_{1}$, choosing ${x_1}^* = {x_2}^*$ gives a strictly larger utility than choosing any ${x_1}^* > {x_2}^*$.  \par 
\textbf{Claim 2}: If ${x_1}^* < {x_2}^*$, then ${x_1}^* = x_s$ and ${x_2}^* = \underline{x_2}$. \par
Note that all strategies $x$ that satisfy ${x_1} < {x_2}$ have the same value in the third term in (\ref{eq: (6)}) (as $ \mathbf{1}_{\{x_1 \geq x_2\}} = 0$). Therefore, conditional on choosing a strategy $x$ with ${x_1} < {x_2}$, the individual selects the offer and rejection threshold that maximizes his expected payoff (i.e., ${x_1}^* = x_s$ and ${x_2}^* = \underline{x_2})$. \par 
\textbf{Claim 3}: If ${x_1}^* = {x_2}^*$, then ${x_1}^* = {x_2}^* \in [\tilde{x}_{1}, \underline{x_2}$]. \par 
We need to show that there cannot exist a strategy $x^*$ with ${x_1}^* = {x_2}^* < \tilde{x}_{1}$ or ${x_1}^* = {x_2}^* >\underline{x_2}$. We rule out the first case by contradiction. But if ${x_1}^* = {x_2}^* < \tilde{x}_{1}$, then $x = (\tilde{x}_{1}, {x_2}^*)$ gives a strictly higher utility to the individual as $x_1 = \tilde{x}_{1}$ maximizes his utility when $x_1 \geq x_2$. We also rule out the second case by contradiction. But if ${x_1}^* = {x_2}^* >\underline{x_2}$, then $x = ({x_1}^*, \underline{x_2})$ gives a strictly larger utility to the individual as $x_2 = \underline{x_2}$ maximizes his utility when $x_1 \geq x_2$.  \par 
With the three previous claims, we can conclude that to compute $x^*$, when $\underline{x_2} > \tilde{x}_{1}$, we need to compare i) $u(x_s, \underline{x_2})$ and ii) $u(\hat{x}, \hat{x}) =\argmax_{y \in [\tilde{x}_{1}, \underline{x_2}]} u(y, y)$. When \\ $u(x_s, \underline{x_2}) > u(\hat{x}, \hat{x}$), then ${x_1}^* = x_s$ and ${x_2}^* = \underline{x_2}$, while ${x_1}^* = {x_2}^* = \hat{x}$ otherwise.  
\end{proof}
\begin{proof}
\textit{Lemma 2}: We first show that there exists $\tilde{\kappa}(\alpha_i)$ such that: (i) when $\kappa_i < \tilde{\kappa}(\alpha_i)$, $u(x_s, \underline{x_2}) > u(\hat{x}, \hat{x})$, (ii) when $\kappa_i = \tilde{\kappa}(\alpha_i)$, $u(x_s, \underline{x_2}) = u(\hat{x}, \hat{x})$, and (iii) when $\kappa_i > \tilde{\kappa}(\alpha_i)$, $u(x_s, \underline{x_2}) < u(\hat{x}, \hat{x})$. \par
To show the result, three remarks are important to emphasize. First,
\begin{eqnarray}
u(x_s, \underline{x_2}) &=& \frac{1}{2} (1 - \kappa_i) v(w - x_s)F^{i}_{y_2}(x_1) \\ &+& \frac{1}{2}\int_{\underline{x_2}}^{\frac{w}{2}} [(1 - \kappa_i + \alpha_i)v(y_1) - \alpha_i v(w - y_1)]f^{i}(y_1) \,dy_1  \nonumber  
\end{eqnarray}
is strictly decreasing in $\kappa_i$. By the Envelope theorem,
\begin{equation}
  \frac{\partial u(x_s, \underline{x_2})}{\partial \kappa_i} = -\frac{1}{2} v(w - x_s)F^{i}_{y_2}(x_1) -\frac{1}{2} \int_{\underline{x_2}}^{\frac{w}{2}} v(y_1) f^{i}(y_1) \, dy_1 < 0 
\end{equation}
Second, when $\kappa_i = 0$, $u(x_s, \underline{x_2}) \geq u(\hat{x}, \hat{x})$. Third, if $x_s < \frac{w}{2}$, $u(x_s, \underline{x_2}) < u(\hat{x}, \hat{x})$ when $\kappa_i = 1$.\footnote{When $x_s = \frac{w}{2}$, then $x_1 = \frac{w}{2}$ and $x_2 = \underline{x_2}$ is the optimal strategy for any $\kappa_i$ and $\alpha_i$.} Then, we need to determine how $u(\hat{x}, \hat{x})$ depends on $\kappa_i$. Recall that $u(\hat{x}, \hat{x})$ is defined as follows: 
\begin{eqnarray}
u(\hat{x}, \hat{x}) & = &
\frac{1}{2}(1 - \kappa_i)v(w -  \hat{x})F^{i}_{y_2}(\hat{x}) \\
&+&  \frac{1}{2}\int_{\hat{x}}^{\frac{w}{2}}[(1 - \kappa_i + \alpha_i)v(y_1) - \alpha_i v(w - y_1)]f^{i}(y_1) \,dy_1 \nonumber \\
&+&  \frac{1}{2}\kappa_i \left[v(w - \hat{x}) + v(\hat{x})\right],    \nonumber
\end{eqnarray}
where $\hat{x} = \argmax_{y \in [\tilde{x}_{1}, \underline{x_2}]} u(y, y)$. By the Envelope Theorem, 
\begin{eqnarray}
\frac{\partial u(\hat{x}, \hat{x})}{\partial \kappa_i} & = &
-\frac{1}{2}v(w - \hat{x})F^{i}_{y_2}(\hat{x}) \\ \nonumber
&-&  \frac{1}{2}\int_{\hat{x}}^{\frac{w}{2}} v(y_1)f^{i}(y_1) \,dy_1 \\ \nonumber 
&+& \frac{1}{2}\left[v(w - \hat{x}) + v(\hat{x})\right].  \nonumber 
\end{eqnarray}
To determine the sign of $\frac{\partial u(\hat{x}, \hat{x})}{\partial \kappa_i}$, we examine its derivative:
\begin{eqnarray}
\frac{\partial^2 u(\hat{x}, \hat{x})}{\partial^2 \kappa_i} & = &
\frac{1}{2}\frac{\partial \hat{x}}{\partial \kappa_i}\left[v'(w - \hat{x})F^{i}_{y_2}(\hat{x}) - v(w - \hat{x})f^{i}_{y_2}(\hat{x})\right] \\ \nonumber
&+&  \frac{1}{2}\frac{\partial \hat{x}}{\partial \kappa_i}v(\hat{x})f^{i}(\hat{x}) \\ \nonumber 
&+&  \frac{1}{2}\frac{\partial \hat{x}}{\partial \kappa_i}\left[v'(\hat{x}) - v'(w - \hat{x})\right]. \nonumber 
\end{eqnarray}
The sign of $\frac{\partial^2 u(\hat{x}, \hat{x})}{\partial^2 \kappa_i}$ depends on the sign of $\frac{\partial \hat{x}}{\partial \kappa_i}$. We know that $\hat{x}$ is computed from
\begin{eqnarray}
\max_{y \in [\tilde{x}_{1}, \underline{x_2}(\gamma_i,\alpha_i)]} &&
\frac{1}{2}(1 - \kappa_i)v(w -  y)F^{i}_{y_2}(y) \\
&+&  \frac{1}{2}\int_{y}^{\frac{w}{2}}[(1 - \kappa_i + \alpha_i)v(y_1) - \alpha_i v(w - y_1)]f^{i}(y_1) \,dy_1 \nonumber \\
&+&  \frac{1}{2}\kappa_i \left[v(w - y) + v(y)\right].    \nonumber
\end{eqnarray}
The first-order condition of the previous expression is given by
\begin{eqnarray} \label{11}
\frac{1}{2}(1 - \kappa_i)[-v'(w - \hat{x})F^{i}_{y_2}(\hat{x}) + v(w - \hat{x})f^{i}_{y_2}(\hat{x})]  \\
- \frac{1}{2}f^{i}_{y_1}(\hat{x}) [v(\hat{x})(1 - \kappa_i + \alpha_i) - \alpha_i v(w -\hat{x})]\nonumber \\
+ \frac{1}{2}\kappa_i \left[v'(\hat{x}) - v'(w - \hat{x})\right]. \nonumber
\end{eqnarray}
 
Assumption \ref{A4} guarantees that $\hat{x}$ is a maximum of $u(\hat{x}, \hat{x})$. Following previous derivations, we compute the implicit derivative (\ref{11}) to obtain that $\frac{\partial \hat{x}}{\partial \kappa_i} \geq 0$. This implies that $u(\hat{x}, \hat{x})$ is convex in $\kappa_i$. This, jointly with (i) $u(x_s, \underline{x_2}) \geq u(\hat{x}, \hat{x})$ when $\kappa_i = 0$ and (ii) $u(x_s, \underline{x_2}) < u(\hat{x}, \hat{x})$ when $\kappa_i = 1$ guarantees the existence of a unique $\tilde{\kappa}(\alpha_i) \geq 0$ such that when $\kappa_i < \tilde{\kappa}(\alpha_i)$, $u(x_s, \underline{x_2}) > u(\hat{x}, \hat{x})$, and $u(x_s, \underline{x_2}) < u(\hat{x}, \hat{x})$ otherwise. To see this, note that $u(\hat{x}, \hat{x})$ convex implies that there can only exist a unique global minimum of $u(\hat{x}, \hat{x})$ (e.g. at $x_1 = x_2 = \hat{x}(\tilde{\kappa})$ with $\tilde{\kappa} > 0$). Then, $u(\hat{x}, \hat{x})$ is decreasing for $\kappa_i \in [0, \tilde{\kappa})$ and increasing for $\kappa_i > \tilde{\kappa}$. Thus, $u(x_s, \underline{x_2}) \geq u(\hat{x}, \hat{x})$ when $\kappa_i = 0$ implies that $u(x_s, \underline{x_2}) > u(\hat{x}, \hat{x})$ when $\kappa_i = \tilde{\kappa}$, which shows the result.  \par  
We now show that there exists a function $\tilde{\alpha}(\kappa_i) \equiv  \frac{(1 - \kappa_i)v(\tilde{x}(\kappa_i))}{v(w - \tilde{x}(\kappa_i))- v(\tilde{x}(\kappa_i))} > 0$ such that: (i) when $\alpha_i = \tilde{\alpha}(\kappa_i)$, $\tilde{x}_{1}(\kappa_i) = \underline{x_2}$, (ii) when $\alpha_i > \tilde{\alpha}(\kappa_i)$, $\tilde{x}_{1}(\kappa_i) < \underline{x_2}$, and (iii) when $\alpha_i < \tilde{\alpha}(\kappa_i)$, $\tilde{x}_{1}(\kappa_i) > \underline{x_2}$. Note that $\tilde{x}_{1}(\kappa_i) = \underline{x_2}$ is equivalent to $v(\tilde{x}_{1})(1 + \alpha_i - \kappa_i) - \alpha_i v(w - \tilde{x}_{1}) = 0$, or equivalently $\alpha_i = \frac{(1 - \kappa_i)v(\tilde{x}(\kappa_i))}{v(w - \tilde{x}(\kappa_i))- v(\tilde{x}(\kappa_i))}$. Fixing $\kappa_i$ such that $\tilde{x}_{1}(\kappa_i) = \underline{x_2}$, increasing $\alpha_i$, increases $\underline{x_2}$, but not $\tilde{x}_{1}(\kappa_i)$. Thus, for any $\alpha > \tilde{\alpha}(\kappa_i)$, $\tilde{x}_{1}(\kappa_i) < \underline{x_2}$.\footnote{Note that $\tilde{\alpha}(\kappa_i) > 0$ as for any $\kappa < 1$, $\tilde{x}_{1}(\kappa) \in [x_s, \frac{w}{2})$ (and therefore $v(w - \tilde{x}(\kappa))- v(\tilde{x}(\tilde{\kappa})) > 0$).}    
\end{proof}

\begin{proof}
\textit{Proposition \ref{P1}}: Follows from combining Lemmas 1 and 2.    
\end{proof}
\begin{proof}
\textit{Corollary \ref{C1}}: To show the first result, we fix $\alpha_i = 0$. Individual $i$ maximizes
\begin{eqnarray} \label{10}
    &&\max_{\{x_1, x_2\} \in [0, w]^2} \frac{1}{2}(1 - \kappa_i)v(w - x_1)F_{y_2}^{i}(x_1) + \frac{1}{2}\int_{x_2}^{\frac{w}{2}}(1 - \kappa_i)v(y_1)f^{i}(y_1)\,dy_1 \\ &+& \kappa_i \mathbf{1}_{\{x_1 \geq x_2\}}\frac{1}{2}\left[v(w - x_1) + v(x_1)\right]. \nonumber
\end{eqnarray} \par 
The first term in (\ref{10}) does not depend on $x_2$, the second term is maximized at $x_2 = 0$, and the third term is maximized at any $x_2 \in [0, x_1]$. Thus, when $\alpha_i = 0$ and $\kappa_i \in [0, 1)$, choosing $x_2 = 0$ maximizes individual's utility. When $\kappa_i = 1$, then ${x_1}^* = \frac{w}{2}$ and ${x_2}^* \in [0, \frac{w}{2}]$. Therefore, it is not true that when $\alpha = 0$ and $\kappa = 1$, the optimal strategy cannot have a rejection threshold above 0. \par
We now show that when $\alpha_i > 0$, ${x_2}^* > 0$ for any $\kappa_i \in [0, 1)$. By contradiction, suppose that an individual with $\alpha_i > 0$ and $\kappa_i \in [0, 1)$ has an optimal strategy $x^*$ with ${x_2}^* = 0$. By Definition 3, we know that $\underline{x_2}(\kappa_i, \alpha_i) > 0$ when $\alpha_i > 0$ and $\kappa_i \in [0, 1)$. Therefore, by Proposition \ref{P1}, $x^*$ must be symmetric (as otherwise ${x_2}^* > 0$). However, selecting ${x_1}^* = {x_2}^* = 0$ is not an optimal strategy as would be dominated by, e.g., ${x_1}^* = x_s$ and ${x_2}^* = 0$. \par
\noindent \textit{Corollary \ref{C3}}: We distinguish two cases depending if (i) $\alpha_i \in (\tilde{\alpha}(\kappa_i), \overline{\alpha})$ or if (ii) $\alpha_i \in (0, \tilde{\alpha}(\kappa_i))$. \par
By Proposition \ref{P1}, if $\alpha_i \in (\tilde{\alpha}(\kappa_i), \overline{\alpha})$, then $x^* = (\hat{x}, \hat{x})$. $\frac{\partial \hat{x}}{\partial \kappa_i} \geq 0$ follows from taking the implicit derivative of (\ref{11}). \par 
By Proposition \ref{P1}, if $\alpha_i \in (0, \tilde{\alpha}(\kappa_i))$, $\tilde{x}_{1} \geq \underline{x_2}$. We need to show that (i) $\frac{\partial \tilde{x}_{1}}{\partial \kappa_i} \geq 0$ and (ii) $\frac{\partial \underline{x_2}}{\partial \kappa_i} \geq 0$.
\begin{equation} \footnotesize
\frac{\partial \tilde{x}_{1}}{\partial \kappa_i} = \frac{-\kappa_i \left[v'(\tilde{x}_{1}) - v'(w - \tilde{x}_{1})\right]}{(1 - \kappa_i) \left[v''(w - \tilde{x}_{1})F_{y_2}^{i}(\tilde{x}_{1}) - v'(w - \tilde{x}_{1})f_{y_2}^{i}(\tilde{x}_{1}) + v(w - \tilde{x}_{1})\frac{\partial f_{y_2}^{i}(\tilde{x}_{1})}{\partial x_1}\right] +  \kappa_i\left[v''(w - \tilde{x}_{1}) + v''(\tilde{x}_{1})\right]} \geq 0,
\end{equation}
as both the numerator and denominator are negative.\footnote{The numerator is negative as $v'(\tilde{x}_{1}) - v'(w - \tilde{x}_{1}) > 0$, while the denominator is negative because of Assumption \ref{A4} (see equation (\ref{eq: (7)})).}
\begin{equation}
\frac{\partial \underline{x_2}}{\partial \kappa_i} = \frac{(1 - \kappa_i)\alpha_i\left[v(w - \underline{x_2}) - v(\underline{x_2})\right]}{(1 - \kappa_i)v'(\underline{x_2}) + \alpha_i \left[v'(\underline{x_2}) + v'(w - \underline{x_2})\right]} \geq 0,     
\end{equation}
as both the numerator and denominator are positive.\footnote{The numerator is positive as $v(w - \underline{x_2}) - v(\underline{x_2}) > 0$, while the denominator is positive as $v' > 0$.}
\end{proof}
\begin{proof}
\textit{Corollary \ref{C2}}: By Proposition \ref{P1}, ${x_2}^* > {x_1}^*$ when $\alpha_i > \overline{\alpha}$ and $\kappa_i \leq \tilde{\kappa}(\alpha_i)$. Therefore, for any $\kappa_i > \tilde{\kappa}(\alpha_i)$, ${x_1}^* \geq {x_2}^*$. 
\end{proof}

\section{Nash Equilibrium in the Ultimatum Game} \label{AppB}
We compute the set of symmetric Nash equilibria in the UG when $\alpha_1 = \alpha_2 = \alpha > 0$ and $\kappa_1 = \kappa_2 = \kappa > 0$. This extends \cite{alger2012homo} that do so when $\alpha = 0$ and $\kappa = 0$.
Let $x = (x_1, x_2)$ and $y = (y_1, y_2)$ denote the strategies of individuals $1$ and $2$, respectively. We normalize $v(0) \equiv 0$. Given that we focus on symmetric strategies $x_1 = y_1$ and $x_2 = y_2$, we distinguish when $x_1 > \frac{w}{2}$ or when $x_1 < \frac{w}{2}$. In the former, it is the \textit{proposer} who suffers from behindness aversion, while in the latter it is the \textit{responder}. When $x_1 \geq \frac{w}{2}$, the utility of individual $1$'s is given by: 
\begin{eqnarray}  \label{Case2}
u(x, y) &=&
\mathbf{1}_{\left\{x_1 \geq y_2\right\}} \left[(1 + \alpha - \kappa)v\left(w-x_1\right) - \alpha v(x_1)\right] \\
&+& \mathbf{1}_{\left\{y_1 \geq x_2\right\}} (1 - \kappa) v(y_1) \nonumber \\
&+&  \mathbf{1}_{\left\{x_1 \geq x_2\right\}} \kappa\left[v\left(w-x_1\right) + v\left(x_1\right)\right]. \nonumber
\end{eqnarray}

When $x_1 \leq \frac{w}{2}$, individual $1$'s utility is given by:
\begin{eqnarray} \label{Case1}
u(x, y) &=&
\mathbf{1}_{\left\{x_1 \geq y_2\right\}} (1 - \kappa)v\left(w-x_1\right) \\
&+& \mathbf{1}_{\left\{y_1 \geq x_2\right\}} \left[(1 + \alpha - \kappa)v\left(y_1\right) - \alpha v(w - y_1)\right] \nonumber \\
&+&  \mathbf{1}_{\left\{x_1 \geq x_2\right\}} \kappa \left[v\left(w-x_1\right) + v\left(x_1\right)\right]. \nonumber
\end{eqnarray}

\noindent \cite{alger2012homo} characterize the Nash equilibrium set when $\alpha = 0$ and $\kappa > 0$:
\begin{prop}
(in \Citealp{alger2012homo}): The set of homo-moralis strategies in this ultimatum-bargaining game, $X_\kappa$, is of the form: 
\begin{equation}
    X_\kappa=\left\{x \in[0,w]^2: \text { either } x_2 \leq x_1=\tau(\kappa) \text { or } \tau(\kappa)<x_1=x_2 \leq \rho(\kappa)\right\}
\end{equation}
where $\tau(\kappa) \leq \frac{w}{2}$ is defined by $\tau(\kappa)=\min \left\{x \in[0,w]: v^{\prime}(w - x) \geq \kappa v^{\prime}(x)\right\}$ and $\rho(\kappa) \geq \frac{w}{2}$ is continuous and decreasing in $\kappa$.
\end{prop}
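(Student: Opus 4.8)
The plan is to set $\alpha=0$ in the utility expressions (\ref{Case2}) and (\ref{Case1}); with no behindness aversion the two cases collapse to the single objective
\[
u(x,y)=\mathbf{1}_{\{x_1\ge y_2\}}(1-\kappa)v(w-x_1)+\mathbf{1}_{\{y_1\ge x_2\}}(1-\kappa)v(y_1)+\mathbf{1}_{\{x_1\ge x_2\}}\kappa[v(w-x_1)+v(x_1)].
\]
I would then characterize the symmetric equilibria by computing the best response of a single deviator $(x_1',x_2')$ against an opponent fixed at $(y_1,y_2)=(x_1,x_2)$ and imposing that $(x_1,x_2)$ be a fixed point of this correspondence.

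First I would dispatch the threshold. Since $x_2'$ enters only the second and third terms, each nonnegative and locally constant in $x_2'$ once $x_2'\le y_1$ and $x_2'\le x_1'$ respectively, every $x_2'\le\min(y_1,x_1')$ is optimal. Evaluating this at the equilibrium offer $x_1'=x_1=y_1$ forces $x_2\le x_1$ in any symmetric equilibrium, which already rules out all profiles with $x_1<x_2$ and confirms that $X_\kappa$ lives in $\{x_2\le x_1\}$.

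Next I would solve for the offer. Fixing $x_2'$ low so both indicators are on, the deviator maximizes $G(x_1')=\mathbf{1}_{\{x_1'\ge y_2\}}(1-\kappa)v(w-x_1')+\kappa[v(w-x_1')+v(x_1')]$. On the acceptance region $x_1'\ge y_2$ this is $g_a(x_1'):=v(w-x_1')+\kappa v(x_1')$, strictly concave (by Assumption \ref{concave}) with interior maximizer exactly $\tau(\kappa)$, the point where $v'(w-x)=\kappa v'(x)$; on the rejection region it is $g_b(x_1'):=\kappa[v(w-x_1')+v(x_1')]$, maximized at $w/2$. The gap $g_a(y_2)-g_b(y_2)=(1-\kappa)v(w-y_2)\ge0$ shows acceptance is weakly preferred at the boundary. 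Comparing the constrained maxima of $g_a$ over $[y_2,w]$ and $g_b$ over $[0,y_2)$ gives a best-response offer equal to $\tau(\kappa)$ when $x_2\le\tau(\kappa)$, equal to $x_2$ when $\tau(\kappa)<x_2\le\rho(\kappa)$, and equal to $w/2$ when $x_2>\rho(\kappa)$, where $\rho(\kappa)\ge w/2$ is the unique root of $g_a(\rho)=2\kappa v(w/2)$ on $[w/2,w]$ (existence and uniqueness from the strict decrease of $g_a$ past $\tau$ together with $v(w)\le2v(w/2)$ by concavity). Imposing $x_1=\mathrm{BR}(x_2)$ and $x_2\le x_1$ then yields precisely the two advertised families: $x_1=\tau(\kappa)$ with $x_2\le\tau(\kappa)$, and $x_1=x_2\in(\tau(\kappa),\rho(\kappa)]$; the range $x_2>\rho(\kappa)$ produces a best offer $w/2<x_2$, violating $x_2\le x_1$, so it contributes no equilibrium. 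The stated bound $\tau(\kappa)\le w/2$ follows since $g_a'(w/2)=(\kappa-1)v'(w/2)\le0$, and the properties of $\rho(\kappa)$ follow from the implicit function theorem applied to $H(\rho,\kappa):=v(w-\rho)+\kappa[v(\rho)-2v(w/2)]$, where $\partial_\rho H=g_a'(\rho)<0$ and $\partial_\kappa H=v(\rho)-2v(w/2)\le0$ give $d\rho/d\kappa\le0$ and continuity.

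The hard part will be the offer comparison on the rejection region when $y_2>w/2$: there the deviator's best accepted offer $x_2$ competes against the ``detour'' offer $w/2$ that is rejected but maximizes the universalization term, and pinning down the exact crossover is what defines $\rho(\kappa)$. Care is also needed with the non-attained supremum of $g_b$ as $x_1'\uparrow y_2$ and with the tie-breaking convention that an offer equal to the threshold is accepted, which is what makes the diagonal profiles $x_1=x_2$ genuine best responses rather than near-misses.
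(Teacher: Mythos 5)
This proposition is not proved in the paper at all: it is imported verbatim from \cite{alger2012homo}, and the appendix only proves the extensions to $\alpha>0$ (Lemmas \ref{Proof1} and \ref{Proof2}, Proposition \ref{Papp}). So there is no in-paper proof to compare against, and your blind attempt has to stand on its own. Its core is sound and is the natural argument: you separate the threshold choice (any $x_2'\le\min(y_1,x_1')$ keeps the second and third indicators on), reduce the offer choice to comparing $g_a(x)=v(w-x)+\kappa v(x)$ on the acceptance region against $g_b(x)=\kappa[v(w-x)+v(x)]$ on the rejection region, identify $\tau(\kappa)$ as the maximizer of $g_a$ (its first-order condition is exactly the defining inequality of $\tau$), and define $\rho(\kappa)$ by $g_a(\rho)=2\kappa v(w/2)$, whose existence, uniqueness, continuity and monotonicity you correctly obtain from strict concavity, $v(w)\le 2v(w/2)$, and the implicit function theorem. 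Imposing the fixed point then yields exactly the two families in the statement, and your attention to the tie-breaking convention (offers equal to the threshold are accepted) is what makes the diagonal profiles genuine equilibria.

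One step is genuinely loose: the claim that threshold optimality ``forces $x_2\le x_1$ in any symmetric equilibrium.'' Weak optimality of every $x_2'\le\min(y_1,x_1')$ does not by itself exclude that a threshold above $x_1$ is \emph{also} a best response; you need a strict-improvement argument, and it requires $\kappa>0$, which the paper explicitly assumes when quoting the result. Concretely: if both players use $(x_1,x_2)$ with $x_2>x_1$, all three payoff terms vanish, and deviating to the same offer with threshold $0$ gains at least $\kappa\left[v(w-x_1)+v(x_1)\right]\ge\kappa v(w)$, which is strictly positive only when $\kappa>0$. For $\kappa=0$ the conclusion is in fact false---the no-trade profile $(0,w)$ is a symmetric Nash equilibrium lying outside the stated set---so the hypothesis $\kappa>0$ must be invoked at precisely this point. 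With that one-line repair your argument is complete.
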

 \par 
Therefore, there exist lower and upper bounds on the offers that can be sustained in equilibrium. When $x_1 > \tau(\kappa)$, any symmetric Nash equilibria must satisfy $x_1 = x_2$. On the other hand, when $x_1 = \tau(\kappa)$, any strategy $x = (x_1, x_2)$ with $x_1 = \tau(\kappa)$ and $x_2 \in [0, \tau(\kappa)]$ is a Nash equilibrium.\footnote{Therefore, the larger $\kappa$, the smaller the Nash equilibria set, similar to \cite{juan2024moral}.} \par 
We extend this result by considering the case with $\alpha > 0$. We start by restricting attention to equilibria where offers and rejection thresholds are not higher than $\frac{w}{2}$ (i.e., $x_1 \leq \frac{w}{2}$). In this case, spite only enters individuals' maximization problem in the \textit{responder} role (see \ref{Case1}). When incorporating $\alpha > 0$, individuals do not necessarily want to accept all positive offers, as some offers may give them a negative payoff. Let $\underline{x_2}(\kappa, \alpha) = \min\{x \in [0, w]$ : $v(x)(1 + \alpha - \kappa) \geq \alpha v(w - x)\}$ (this is equivalent to the definition in the main text).  Lemma \ref{Proof1} shows that there cannot exist a Nash equilibrium with ${x_1}^* < \underline{x_2}(\kappa, \alpha)$.
\begin{lemma} \label{Proof1}
    Let $x^*$ be a symmetric Nash equilibrium. Then, ${x_1}^* \geq \underline{x_2}(\kappa, \alpha)$.
\end{lemma}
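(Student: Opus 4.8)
The plan is to argue by contradiction from the best-response property of a symmetric Nash equilibrium. Suppose $x^* = (x_1^*, x_2^*)$ is a symmetric equilibrium with $x_1^* < \underline{x_2}(\kappa, \alpha)$. Writing $g(x) := (1+\alpha-\kappa)v(x) - \alpha v(w-x)$ for the net payoff a responder obtains from accepting an offer $x$, note that $1+\alpha-\kappa \ge \alpha > 0$ makes $g$ strictly increasing, and $g(w/2) = (1-\kappa)v(w/2) \ge 0$ forces $\underline{x_2} \le w/2$; hence $\underline{x_2}$ is the unique zero of $g$ on $[0,w/2]$ and $x_1^* < \underline{x_2}$ is equivalent to $R := g(x_1^*) < 0$. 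The whole argument then reduces to showing that a responder facing the equilibrium offer $x_1^*$ strictly prefers to reject it, and that this rejection can be arranged without forfeiting the universalization payoff.

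First I would evaluate the equilibrium payoff $u(x^*, x^*)$ from (\ref{Case1}). At a symmetric profile one has $y_1 = x_1^*$ and $y_2 = x_2^*$, so all three indicators in (\ref{Case1}) collapse to the single event $\{x_1^* \ge x_2^*\}$. Consequently, if $x_1^* \ge x_2^*$ the three terms combine into $u(x^*, x^*) = (1-\alpha)v(w-x_1^*) + (1+\alpha)v(x_1^*)$, whereas if $x_1^* < x_2^*$ every indicator vanishes and $u(x^*, x^*) = 0$. This simultaneous switching of the proposer, responder and universalization terms on the diagonal is the linchpin of the argument and the one place I expect genuine care to be required, since it is precisely this coupling that makes it non-obvious whether a responder can reject the bad offer without losing the Kantian term.

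Second I would exhibit a profitable deviation in each case. When $x_1^* \ge x_2^*$, consider the symmetric deviation $x' = (x_1', x_1')$ with $x_1' \in (x_1^*, \underline{x_2}) \subseteq (x_1^*, w/2)$: raising the threshold above $x_1^*$ makes the individual reject the equilibrium offer (killing the responder term), while equality of offer and threshold keeps the universalization term active, so a direct substitution into (\ref{Case1}) gives $u(x', x^*) = v(w-x_1') + \kappa v(x_1')$. As $x_1' \downarrow x_1^*$ this tends to $u(x^*, x^*) - R$, so $R < 0$ together with continuity yields $u(x', x^*) > u(x^*, x^*)$ for $x_1'$ close enough to $x_1^*$. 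When $x_1^* < x_2^*$, the equilibrium yields $0$, and the feasible deviation $x' = (x_2^*, x_2^*)$ (note $x_2^* \le w/2$) yields $u(x', x^*) = v(w-x_2^*) + \kappa v(x_2^*) > 0$, since $w - x_2^* \ge w/2 > 0$ and $v$ is increasing with $v(0)=0$. In both cases $x^*$ fails to be a best response to itself, contradicting the equilibrium assumption; therefore $x_1^* \ge \underline{x_2}(\kappa, \alpha)$.
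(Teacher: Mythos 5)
Your proof is correct, and its engine---a small \emph{joint} increase of offer and threshold to a common point just above $x_1^*$, which rejects the opponent's too-low offer while keeping the universalization indicator $\mathbf{1}_{\{x_1\ge x_2\}}$ switched on---is the same $\varepsilon$-deviation the paper uses, but the paper deploys it only for its diagonal case $x_1^*=x_2^*$. The difference lies in the case organization, and yours is actually tighter. The paper splits into $x_1^*>x_2^*$ and $x_1^*=x_2^*$; for the strict case it asserts that the responder can ``simply increase his rejection threshold'' without affecting the first and third terms of the utility (citing (\ref{Case2}) where (\ref{Case1}) is the relevant expression). When $\kappa>0$ that assertion does not hold as stated: rejecting the opponent's offer $y_1=x_1^*$ requires $x_2>x_1^*$, i.e.\ a threshold strictly above one's own unchanged offer, which switches off $\mathbf{1}_{\{x_1\ge x_2\}}$ and forfeits the Kantian payoff $\kappa\left[v(w-x_1^*)+v(x_1^*)\right]$; whether the gain $-g(x_1^*)$ outweighs that loss is exactly the comparison the paper's wording elides. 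Your uniform treatment of the whole region $x_1^*\ge x_2^*$ by the symmetric deviation $(x_1',x_1')$ sidesteps this trap---you correctly flag the simultaneous switching of the three indicators as the linchpin---so your argument repairs a gap in the published case $x_1^*>x_2^*$ rather than merely replicating it. You also dispose of the configuration $x_1^*<x_2^*$ (equilibrium payoff $0$, dominated by $(x_2^*,x_2^*)$), which the paper's two-case split omits altogether. What the paper's version buys is brevity; what yours buys is a complete and airtight case analysis, at the modest cost of the algebraic identity $u(x^*,x^*)=(1-\alpha)v(w-x_1^*)+(1+\alpha)v(x_1^*)$ and an explicit continuity argument.
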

\begin{proof}
    By contradiction, suppose there exists $x^*$ with ${x_1}^* < \underline{x_2} (\kappa, \alpha) < \frac{w}{2}$. We distinguish two cases depending on whether ${x_1}^* > {x_2}^*$ or if ${x_1}^* = {x_2}^* = x^*$. For the former, individual $1$ is better off simply increasing his rejection threshold, as this does not affect the first and third term of equation (\ref{Case2}), but increases the second one. For the latter, if $\kappa = 0$, then the individual sets ${x_2}^* = \underline{x_2} (\kappa, \alpha)$, given that he does not experience disutility when setting $x_1 < x_2$. When $\kappa > 0$, individual $1$ can profitably deviate by selecting $x_1 = x_2 = x^* + \varepsilon$, where $\varepsilon > 0$. That is, the individual slightly increases their offer and rejection threshold. This improves individual $1$'s utility, as while the second term of equation (\ref{Case2}) goes from negative to zero (as the individual 1 rejects individual $2$'s offer), the first and third term only change marginally. For the continuity of $v$, we can always find an $\varepsilon > 0$ sufficiently small such that the improvement of not accepting an offer below $\underline{x_2}(\kappa, \alpha)$ compensates for the marginal changes in the first and third terms.   
\end{proof}
We distinguish between two cases depending on whether (i) $\underline{x_2}(\kappa, \alpha) > \tau(\kappa)$ and (ii) $\underline{x_2}(\kappa, \alpha) \leq \tau(\kappa)$. When $\underline{x_2}(\kappa, \alpha) \leq \tau(\kappa)$, we have the same Nash equilibrium set characterized in \cite{alger2012homo}. The reason is that in such set the lowest offer possible, $\tau(\kappa)$, gives a positive material payoff to the responder. Hence, the individual does not have incentives to deviate from $({x_1}^*, {x_2}^*) = (\tau(\kappa), \underline{x_2}(\kappa, \alpha))$. \par 
On the other hand, when $\underline{x_2}(\kappa, \alpha) > \tau(\kappa)$, we have that the set of (symmetric) Nash equilibria is given by:
\begin{equation}
    X_{\{\kappa, \alpha\}} = \{x \in[0,\frac{w}{2}]^2: \underline{x_2}(\kappa, \alpha) \leq  x_2 = x_1 \leq \frac{w}{2}\}
\end{equation}  \par 
The reason is that when $\underline{x_2}(\kappa, \alpha) > \tau(\kappa)$, we cannot have a symmetric equilibrium with $x_2 < x_1$ (Claim 4 of \Citealp{alger2012homo}). Intuitively, if individuals were to set a rejection threshold strictly below $x_1$, they would be better off by decreasing their offer. \par 
We know consider offers above $\frac{w}{2}$. In that case, let $\overline{x_1}(\kappa, \alpha) = \max\{x \in [0, w]$ : $(1 - \kappa + \alpha)v(w - x) \geq v(x)\}$.

Lemma \ref{Proof2} shows that there cannot exist a (symmetric) Nash equilibrium with ${x_1}^* > \overline{x_1}(\kappa, \alpha)$.
\begin{lemma} \label{Proof2}
    Let $x^*$ be a symmetric Nash equilibrium. Then, ${x_1}^* \leq \overline{x_1}(\kappa, \alpha)$.
\end{lemma}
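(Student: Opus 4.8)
The plan is to argue by contradiction, mirroring the proof of Lemma~\ref{Proof1} but now pushing against the \emph{upper} bound. Suppose $x^\ast=(x_1^\ast,x_2^\ast)$ is a symmetric Nash equilibrium with $x_1^\ast>\overline{x_1}(\kappa,\alpha)$. Since we are in the regime of offers above $\tfrac{w}{2}$, individual $1$'s payoff is given by (\ref{Case2}), and because $x_1^\ast>\tfrac{w}{2}\geq\tau(\kappa)$ the equilibrium-structure result inherited from \cite{alger2012homo} (the analogue of their Claim~4, already invoked just above the lemma) forces $x_2^\ast=x_1^\ast$. First I would record the equilibrium payoff by substituting $x_1=x_2=x_1^\ast$ into (\ref{Case2}): all three indicators equal one, and the expression collapses to $(1+\alpha)v(w-x_1^\ast)+(1-\alpha)v(x_1^\ast)$. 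This is the benchmark that any candidate deviation must beat.

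The core step is to exhibit a profitable downward deviation. The natural candidate keeps the individual's acceptance behaviour intact (they still accept the opponent's generous offer $x_1^\ast$, which contributes $(1-\kappa)v(x_1^\ast)$) but lowers offer and threshold together to $(x_1^\ast-\varepsilon,\,x_1^\ast-\varepsilon)$ for small $\varepsilon>0$. The point of moving along the diagonal is that the offer now falls strictly below the opponent's threshold $x_1^\ast$ and is therefore rejected, so the proposer term vanishes, while the universalization indicator $\mathbf 1_{\{x_1\geq x_2\}}$ stays equal to one because offer and threshold move in lockstep. Comparing the deviation payoff with the benchmark and letting $\varepsilon\to0$ (legitimate by continuity of $v$, exactly as in the closing argument of Lemma~\ref{Proof1}), the net change simplifies to an expression whose sign is controlled by $v(x_1^\ast)-(1-\kappa+\alpha)v(w-x_1^\ast)$. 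By the very definition of $\overline{x_1}(\kappa,\alpha)$, this quantity is strictly positive precisely when $x_1^\ast>\overline{x_1}(\kappa,\alpha)$, so the deviation strictly improves utility, contradicting optimality of $x^\ast$.

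The hard part will be verifying that this is indeed the binding deviation and that its limiting net effect isolates \emph{exactly} the inequality defining $\overline{x_1}$. Two features demand care. First, the universalization term is discontinuous across the diagonal $x_1=x_2$, so I must compare the diagonal move against the competing downward deviations—lowering the offer alone (which switches the indicator off) or withdrawing as responder—and confirm that the diagonal deviation is the one whose $\varepsilon\to0$ limit cleanly produces the surplus governed by $(1-\kappa+\alpha)v(w-x_1^\ast)-v(x_1^\ast)$; reconciling the coefficient on $v(x_1^\ast)$ with the stated threshold is the delicate bookkeeping I expect to absorb most of the work. Second, I would confirm that along accepted offers the proposer payoff $(1+\alpha-\kappa)v(w-x_1)-\alpha v(x_1)$ is strictly decreasing in $x_1$ for $x_1>\tfrac{w}{2}$—which follows from strict concavity of $v$ together with behindness aversion entering through the $-\alpha v(x_1)$ term—so that $x_1^\ast$ is genuinely the best accepted offer and the only route to a higher material payoff is to trigger rejection. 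Once these are in place, the continuity/$\varepsilon$ argument closes the proof in the same fashion as Lemma~\ref{Proof1}.
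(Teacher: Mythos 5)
Your proposal is correct and takes essentially the same approach as the paper: both argue by contradiction and exhibit a downward deviation \emph{along the diagonal}, so that one's own too-greedy offer is rejected (the first term of (\ref{Case2}) rises from negative to zero) while the responder term is untouched and the universalization indicator stays on — the paper simply jumps all the way to $\left(\tfrac{w}{2},\tfrac{w}{2}\right)$, where concavity yields a strict gain in the third term, rather than taking your $\varepsilon$-step, and it skips your (legitimate) extra step of invoking the Claim-4 analogue to rule out $x_2^*<x_1^*$. The ``delicate bookkeeping'' you flag is real but is inherited from the paper rather than a defect of your argument: the limiting net change of your deviation is $\alpha v(x_1^*)-(1+\alpha-\kappa)v(w-x_1^*)$, whose positivity for all $x_1^*>\overline{x_1}(\kappa,\alpha)$ follows only if the definition of $\overline{x_1}$ is read with $\alpha v(x)$ on the right-hand side (or if $\alpha\geq 1$), and the paper's own proof asserts exactly this negativity of the first term under its stated definition, so the two proofs stand or fall together on that point.
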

\begin{proof}
    By contradiction, suppose there exists $x^*$ with ${x_1}^* = {x_2}^* > \overline{x_1}(\kappa, \alpha)$. In that case, the first term in (\ref{Case2}) is negative, while the second and third terms are positive. In that case, individual is better off by selecting, e.g., ${x_1}^* = {x_2}^* = \frac{w}{2}$. This increases the first term in (\ref{Case2}) (it becomes zero instead of negative), does not affect the second term, and increases the their term. 
\end{proof}
 \noindent Proposition \ref{Papp} gives the set of Nash Equilibrium.
\begin{prop} \label{Papp}
The set of (symmetric) Nash Equilibrium in the UG when individuals have both $\alpha > 0$ and $\kappa > 0$ is of the form:
\begin{eqnarray}
X_{\{\kappa, \alpha\}} &=&
\{x \in[0,w]^2: \max[\underline{x_2}(\kappa, \alpha), \tau(\kappa)] \leq  x_2 = x_1 \leq \min[\overline{x_1}(\kappa, \alpha), \rho(\kappa)] \\
&& \text{ and } x_2 \geq x_1 = \tau(\kappa) \text{ when } \tau(\kappa) > \underline{x_2}(\kappa, \alpha) \}\nonumber
\end{eqnarray}     
\end{prop}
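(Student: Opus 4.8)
The plan is to build directly on the homo moralis characterization of \cite{alger2012homo} and on the two bounds already established in Lemmas \ref{Proof1} and \ref{Proof2}, layering the spite-induced constraints onto the $\kappa$-only equilibrium set $X_\kappa$. First I would record that any symmetric Nash equilibrium $x^*$ must satisfy $\underline{x_2}(\kappa,\alpha) \leq x_1^* \leq \overline{x_1}(\kappa,\alpha)$, which is exactly the content of the two lemmas: the lower bound because Lemma \ref{Proof1} rules out equilibrium offers below $\underline{x_2}(\kappa,\alpha)$ (a responder would strictly prefer to reject such offers), and the upper bound because a proposer offering more than $\overline{x_1}(\kappa,\alpha)$ suffers enough from behindness aversion that deviating to $\tfrac{w}{2}$ is profitable (Lemma \ref{Proof2}). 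Intersecting this band with the interval $[\tau(\kappa),\rho(\kappa)]$ inherited from \cite{alger2012homo} already produces the outer limits $\max[\underline{x_2}(\kappa,\alpha),\tau(\kappa)]$ and $\min[\overline{x_1}(\kappa,\alpha),\rho(\kappa)]$ that appear in the statement.

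Next I would split the argument at $x_1 = \tfrac{w}{2}$, since spite enters the responder's problem when $x_1 \leq \tfrac{w}{2}$ (via equation (\ref{Case1})) and the proposer's problem when $x_1 \geq \tfrac{w}{2}$ (via equation (\ref{Case2})). On the low side I would distinguish the two cases flagged in the text. When $\underline{x_2}(\kappa,\alpha) \leq \tau(\kappa)$, the smallest sustainable homo moralis offer $\tau(\kappa)$ already yields the responder a nonnegative net payoff, so $\alpha$ introduces no new deviation and the equilibrium set coincides with $X_\kappa$ up to the upper truncation; in particular the corner strategies with $x_1 = \tau(\kappa)$ and $x_2$ on the relevant side of $\tau(\kappa)$ survive, giving the second clause of the stated set. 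When $\underline{x_2}(\kappa,\alpha) > \tau(\kappa)$, those low offers can no longer be supported, and invoking Claim 4 of \cite{alger2012homo}---a proposer with $x_2 < x_1$ strictly gains by shaving his offer---forces $x_1 = x_2$ along the diagonal, leaving only $\underline{x_2}(\kappa,\alpha) \leq x_1 = x_2 \leq \tfrac{w}{2}$.

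For the high side I would show that candidate equilibria with $x_1 \geq \tfrac{w}{2}$ are again diagonal and are cut off at $\min[\overline{x_1}(\kappa,\alpha),\rho(\kappa)]$, combining Lemma \ref{Proof2} with the upper bound $\rho(\kappa)$ from \cite{alger2012homo}. The remaining task is the sufficiency direction: for each $x^*$ in the claimed set I would confirm that neither a change in the offer nor in the rejection threshold is profitable, using the strict concavity of $u_i$ on each side of the diagonal (Assumption \ref{A4}) to rule out interior deviations and the definitions of $\underline{x_2}$ and $\overline{x_1}$ to rule out the jumps across $x_1 = x_2$. Finally I would collect the low-side and high-side characterizations into the single expression for $X_{\{\kappa,\alpha\}}$.

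I expect the main obstacle to be the bookkeeping around the role switch at $x_1 = \tfrac{w}{2}$ together with the discontinuity of the universalization indicator $\mathbf{1}_{\{x_1 \geq x_2\}}$: one must check that a deviation crossing the diagonal (switching the indicator on or off) is never profitable, and that the two one-sided concavity regions knit together consistently at the boundaries $\underline{x_2}(\kappa,\alpha)$, $\tau(\kappa)$, $\overline{x_1}(\kappa,\alpha)$ and $\rho(\kappa)$. Handling the boundary case $\underline{x_2}(\kappa,\alpha) = \tau(\kappa)$ cleanly, so that the two clauses of the set agree there, is the most delicate point.
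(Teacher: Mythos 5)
Your proposal is correct and takes essentially the same route as the paper: it combines Lemmas \ref{Proof1} and \ref{Proof2} with the \cite{alger2012homo} characterization (the bounds $\tau(\kappa)$ and $\rho(\kappa)$, plus Claim 4 forcing $x_1 = x_2$ above $\tau(\kappa)$), split into the cases $\underline{x_2}(\kappa,\alpha) \leq \tau(\kappa)$ and $\underline{x_2}(\kappa,\alpha) > \tau(\kappa)$, exactly as the paper's appendix does before its one-line ``follows from combining the previous lemmas.'' If anything, you are more careful than the paper in explicitly flagging the sufficiency direction (checking that no deviation, including ones crossing the diagonal $x_1 = x_2$ or the role switch at $\tfrac{w}{2}$, is profitable), which the paper asserts but does not spell out.
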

\begin{proof}
    Follows from combining the previous lemmas.
\end{proof}
Thus, the set of (symmetric) Nash equilibria shrinks when $\kappa$ and $\alpha$ increase, and hence the lowest offers and rejection thresholds that can be sustained in equilibrium are increasing in $\kappa$ and $\alpha$. Figure \ref{F2} displays the set of (symmetric) Nash equilibria when (i) $\alpha = 0$ (in blue) and (ii) $\alpha > 0$ and $\underline{x_2}(\kappa, \alpha) > \tau(\kappa)$ (in red). When $\alpha = 0$, we have that the equilibrium can be either (i) ${x_1}^* = {x_2}^* \in [\tau(\kappa), \rho(\kappa)]$, or (ii) ${x_1}^* = \rho(\kappa)$ and ${x_2}^* \in [0, \rho(\kappa)]$. On the other hand, when $\alpha > 0$ and $\underline{x_2}(\kappa, \alpha) > \tau(\kappa)$, then we have that the equilibrium is ${x_1}^* = {x_2}^* \in [\underline{x_2}(\kappa, \alpha), \overline{x_1}(\kappa, \alpha)]$. 
\begin{figure}[H]
  \centering
  \includegraphics[width=0.8\textwidth]{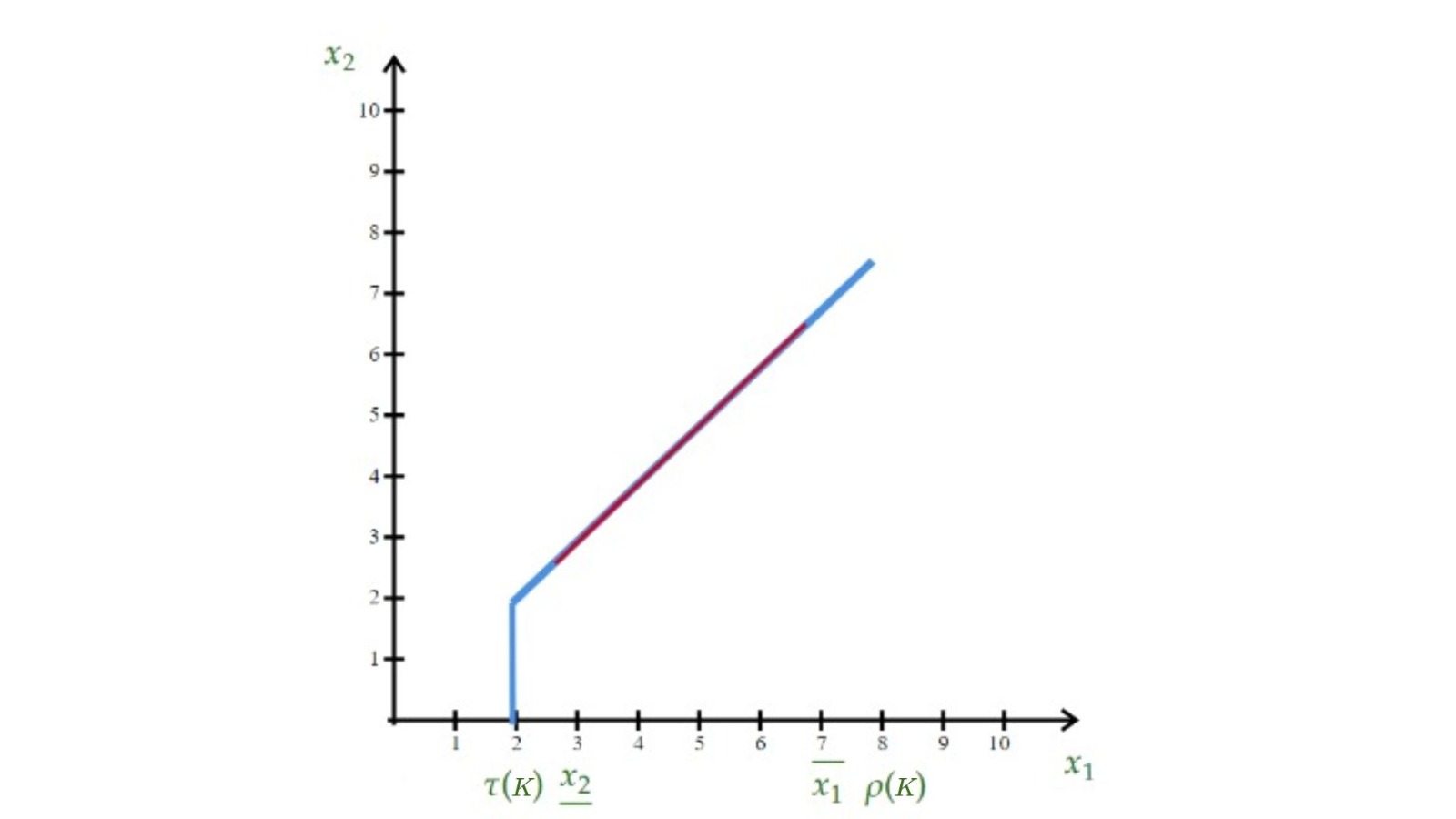}
  \caption{Set of (symmetric) Nash equilibria when (i) $\alpha = 0$ (in blue) and (ii) $\alpha > 0$ and $\underline{x_2}(\kappa, \alpha) > \tau(\kappa)$ (in red). $\tau(\kappa) = 2$, $\underline{x_2}(\kappa, \alpha) = 3$, $\overline{x_1}(\kappa, \alpha) = 7$, and $\rho(\kappa) = 8$.}
  \label{F2}
\end{figure}

\section{Simulations and Figures} \label{AppC}
\begin{figure}[H]
    \begin{center}
    \includegraphics[width= 0.7\textwidth]{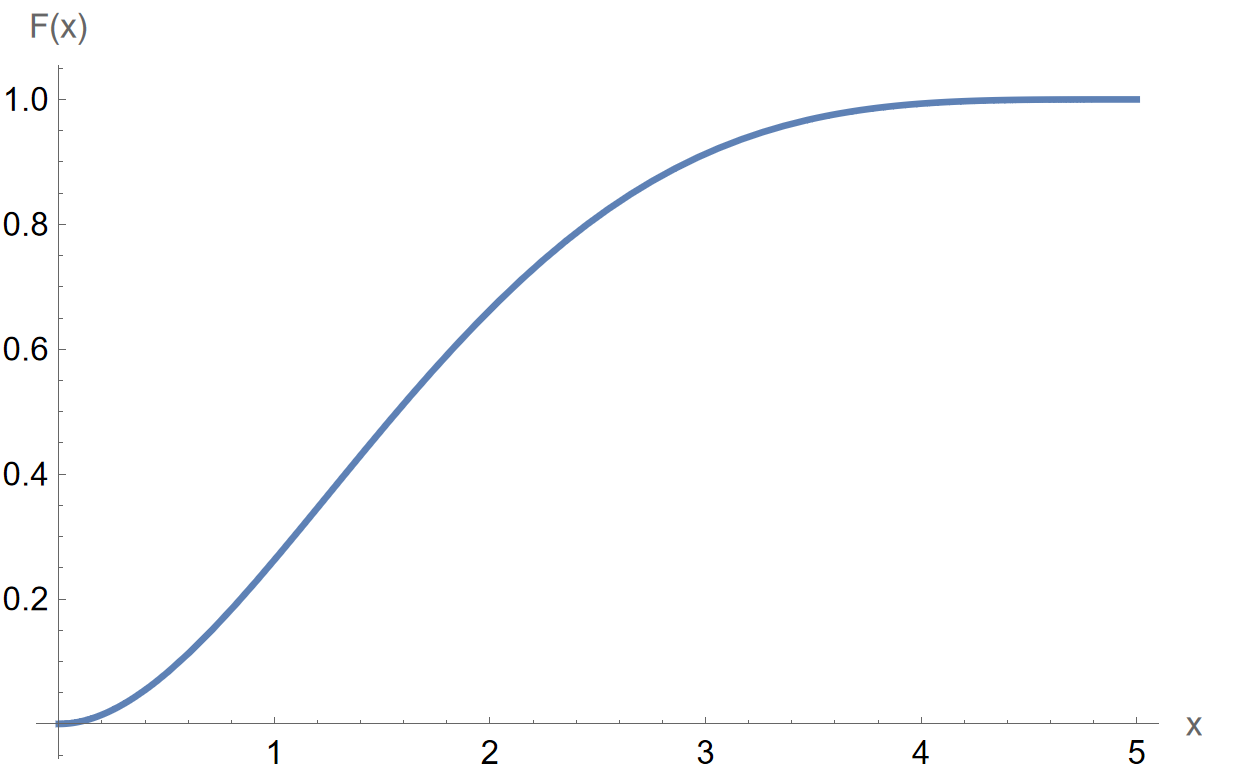}
     \captionsetup{labelformat=empty}
    \end{center}
    \caption{Distribution offers and rejection thresholds $F(x) = F_{[0,w/2]}(2, 4)$.}
    \label{Fig8}
\end{figure}

\begin{figure}[H]
    \centering
    \includegraphics[width=0.5\linewidth]{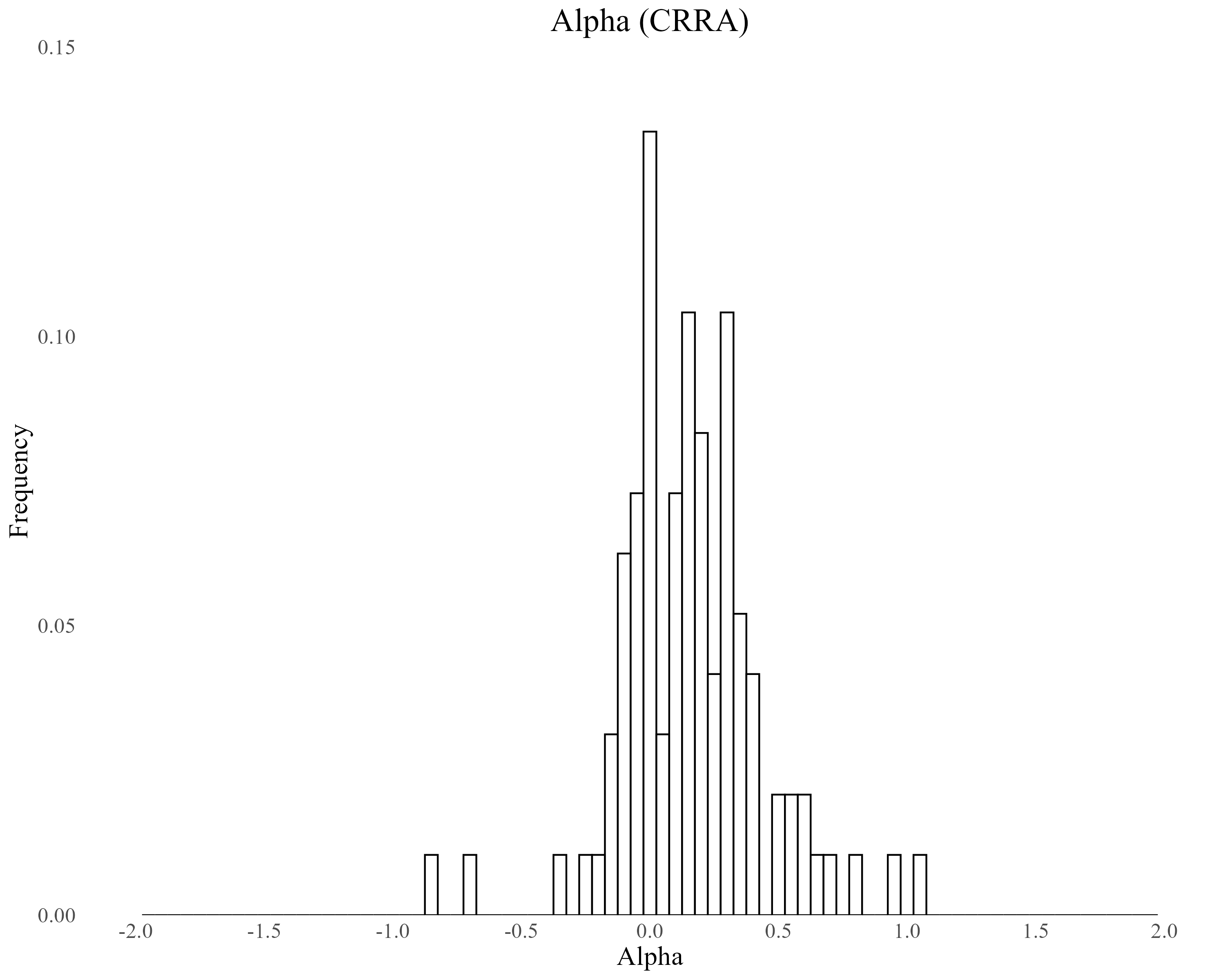}
    \caption{Distribution of individual estimates of $\alpha_i$ in \cite{van2023estimating}, CRRA estimates, for $\alpha, \beta \in [-2,2]$ and $\kappa \in [0,1]$.}
    \label{fig:alpha_hist_crra}
\end{figure}

\begin{figure}[H]
    \centering
    \includegraphics[width=0.5\linewidth]{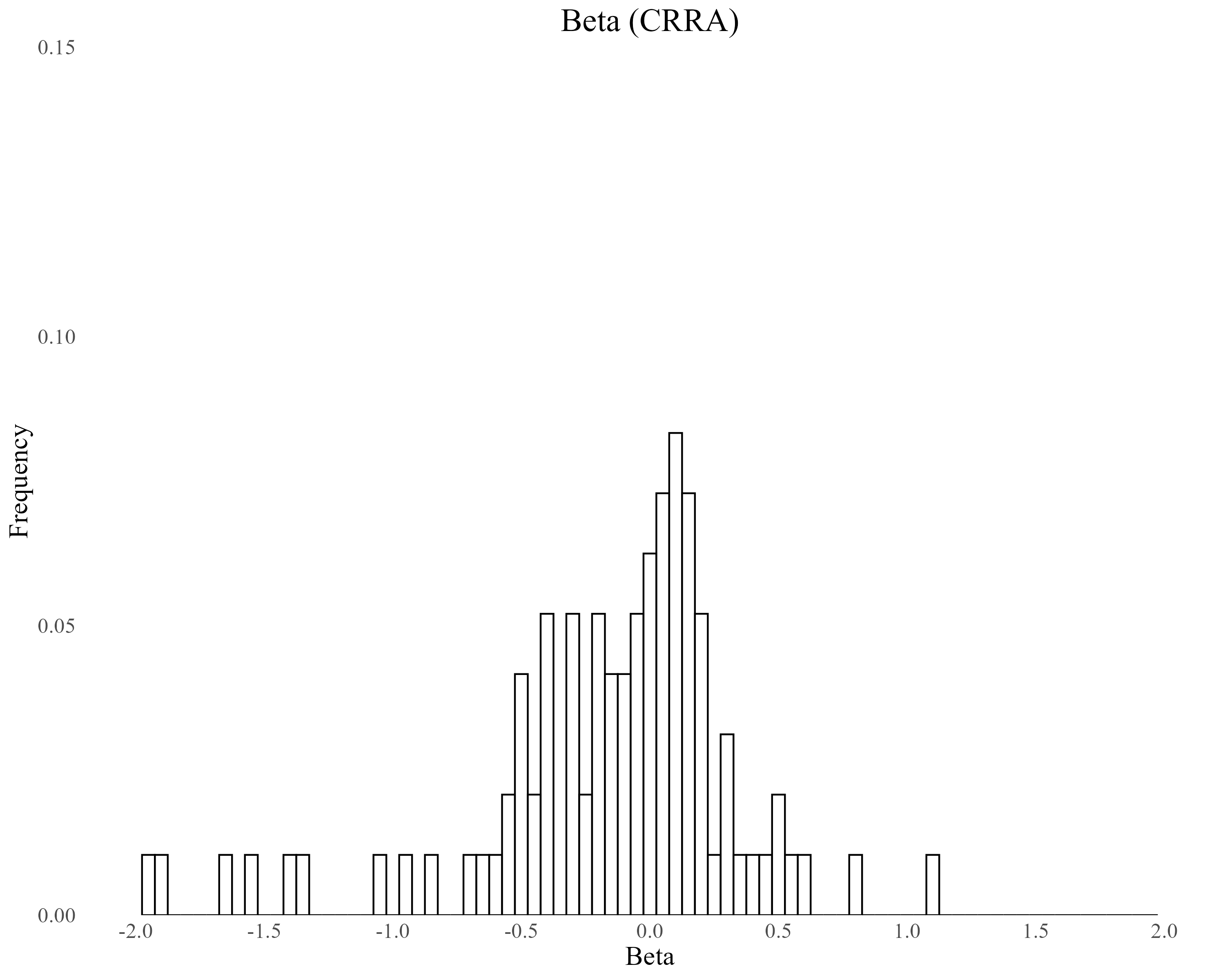}
    \caption{Distribution of individual estimates of $\beta_i$ in \cite{van2023estimating}, CRRA estimates, for $\alpha, \beta \in [-2,2]$ and $\kappa \in [0,1]$.}
    \label{fig:beta_hist_crra}
\end{figure}

\begin{figure}[H]
    \centering
    \includegraphics[width=0.5\linewidth]{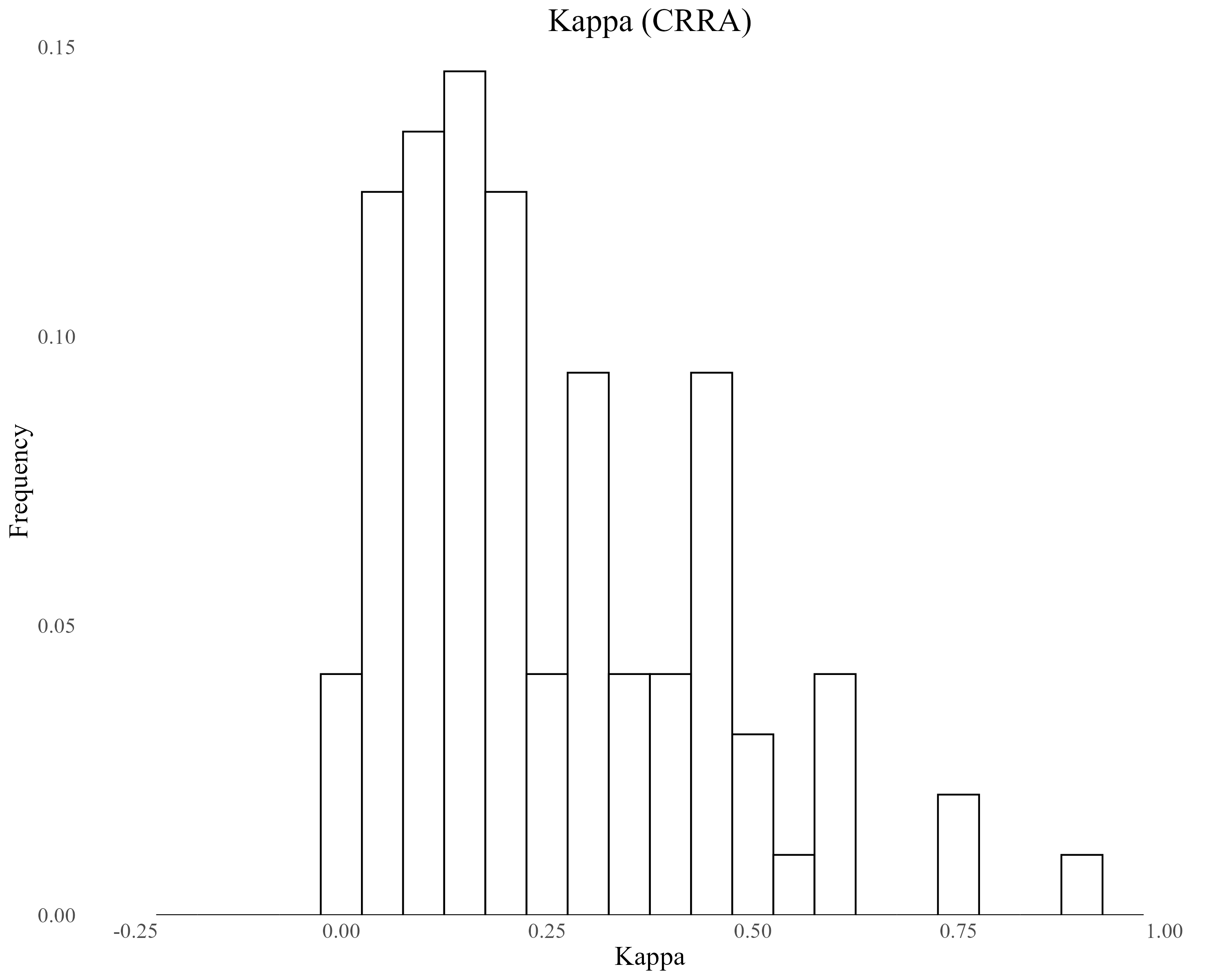}
    \caption{Distribution of individual estimates of $\kappa_i$ in \cite{van2023estimating}, CRRA estimates, for $\alpha, \beta \in [-2,2]$ and $\kappa \in [0,1]$.}
    \label{fig:kappa_hist_crra}
\end{figure}

\begin{figure}[H]
    \centering
    \includegraphics[width=0.5\linewidth]{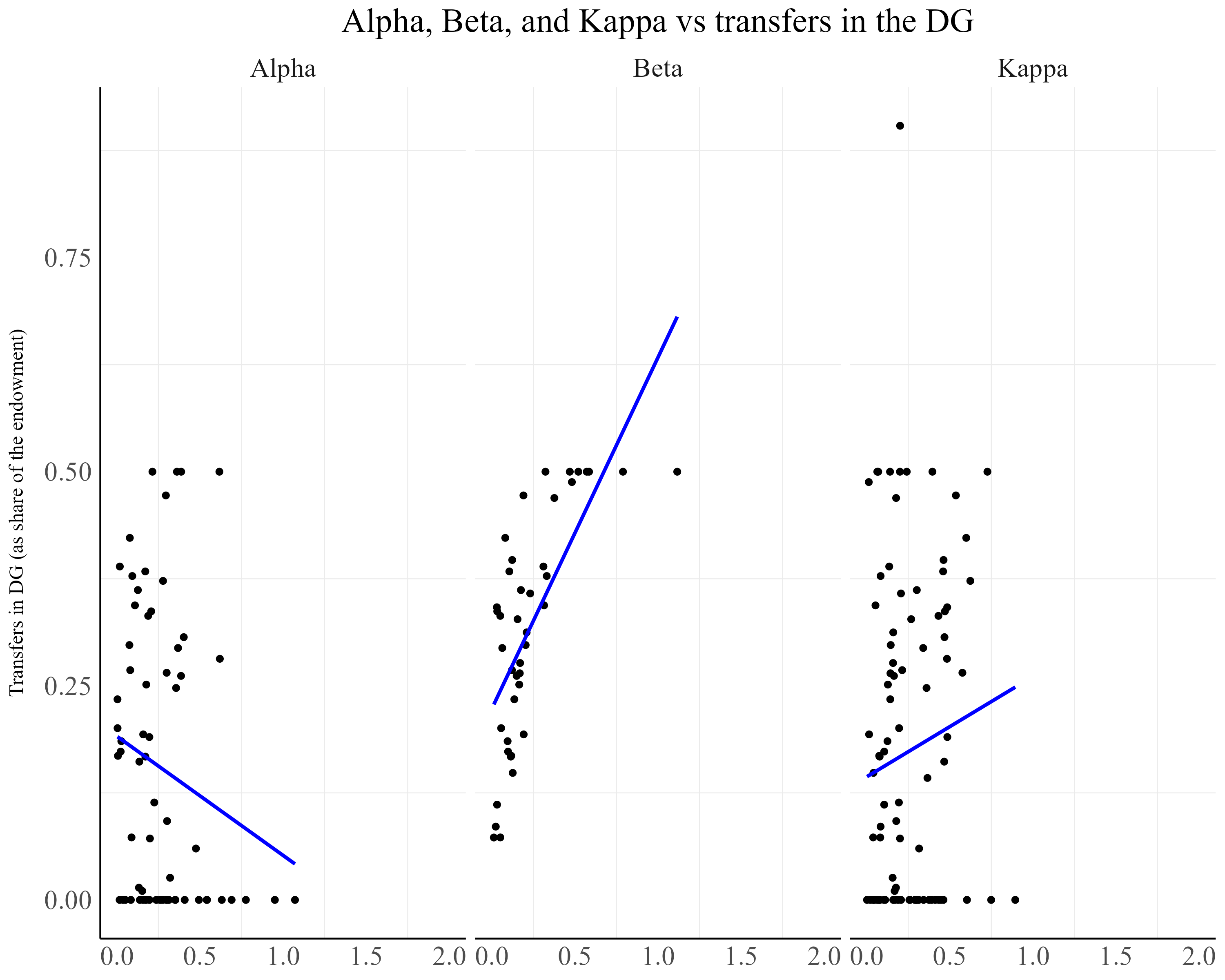}
    \caption{Transfers in the DG and parameter estimates. \cite{van2023estimating} CRRA estimates, for $\alpha, \beta \in [-2,2]$ and $\kappa \in [0,1]$.}
    \label{fig:scatterDG_w}
\end{figure}
\begin{figure}[H]
    \centering
    \includegraphics[width=0.75\linewidth]{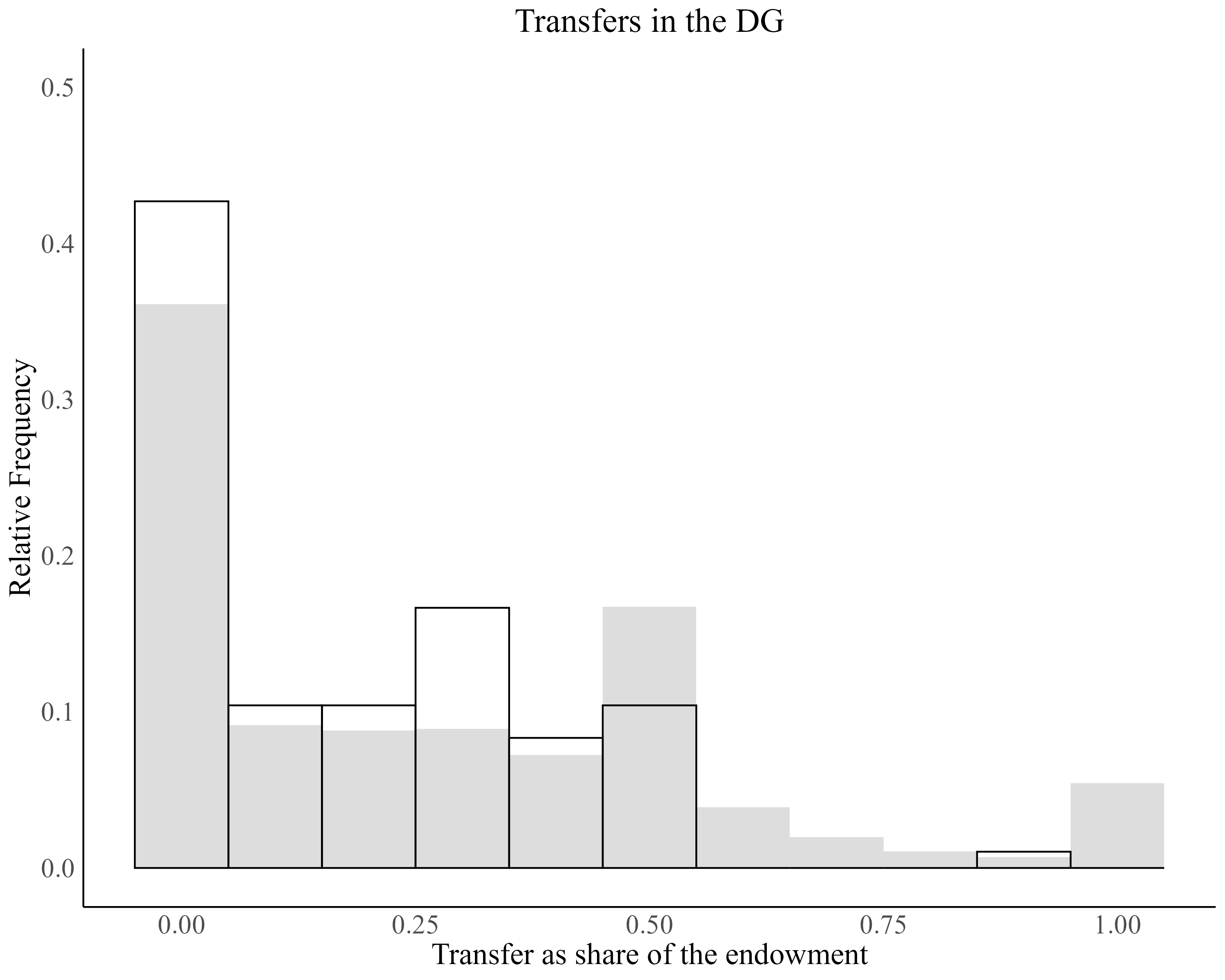}
    \caption{DG transfers reported in \cite{engel2011dictator} --in gray-- and transfers based on  \cite{van2023estimating} CRRA estimates, for $\alpha, \beta \in [-2,2]$ and $\kappa \in [0,1]$.}
    \label{fig:DG_engel}
\end{figure}

\begin{figure}[H]
    \centering
    \includegraphics[width=0.5\linewidth]{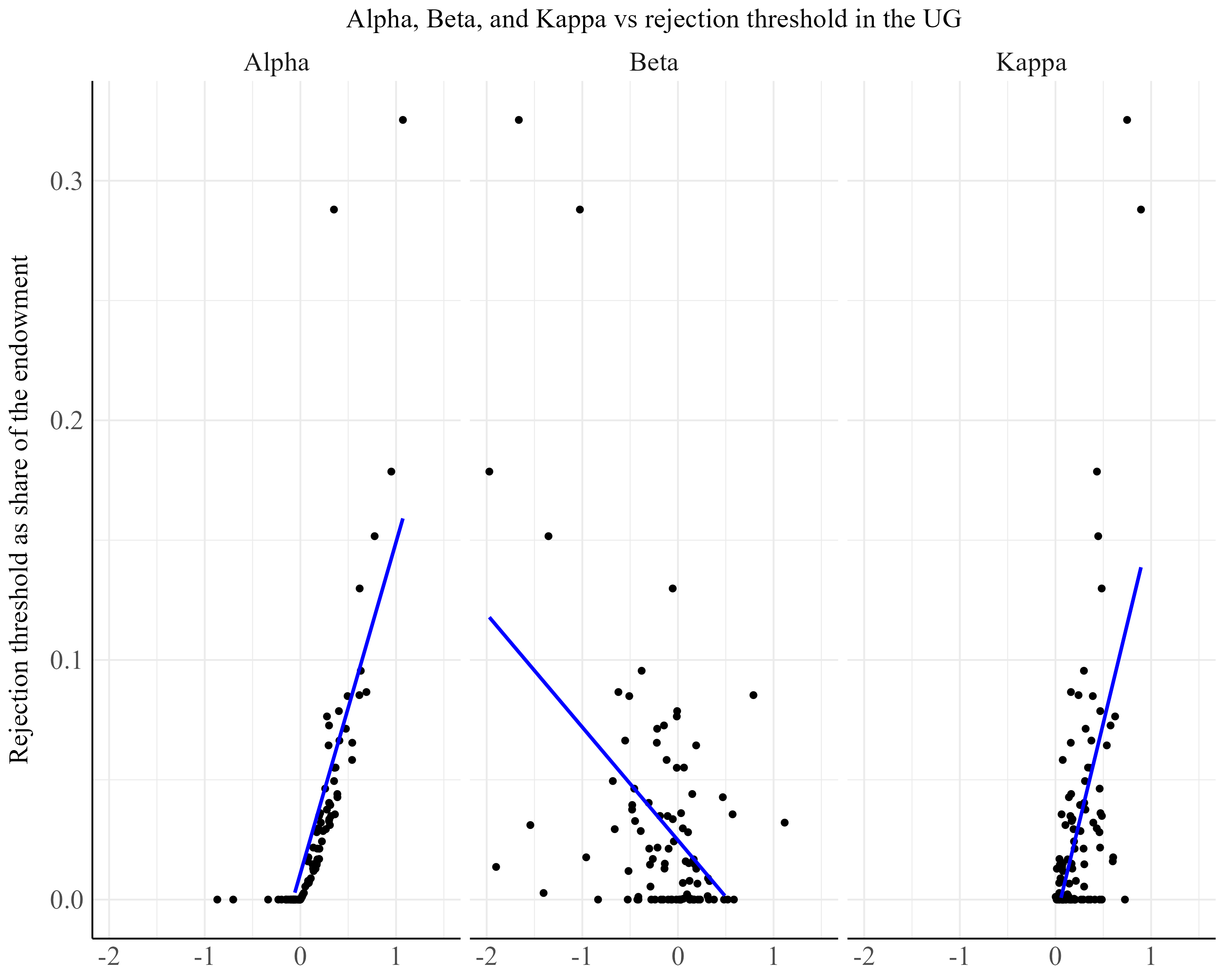}
    \caption{Rejection thresholds and parameter estimates. \cite{van2023estimating} CRRA estimates, for $\alpha, \beta \in [-2,2]$ and $\kappa \in [0,1]$.}
    \label{fig:scatterUG_w}
\end{figure}

\begin{figure}[H]
    \centering
    \includegraphics[width=0.8\linewidth]{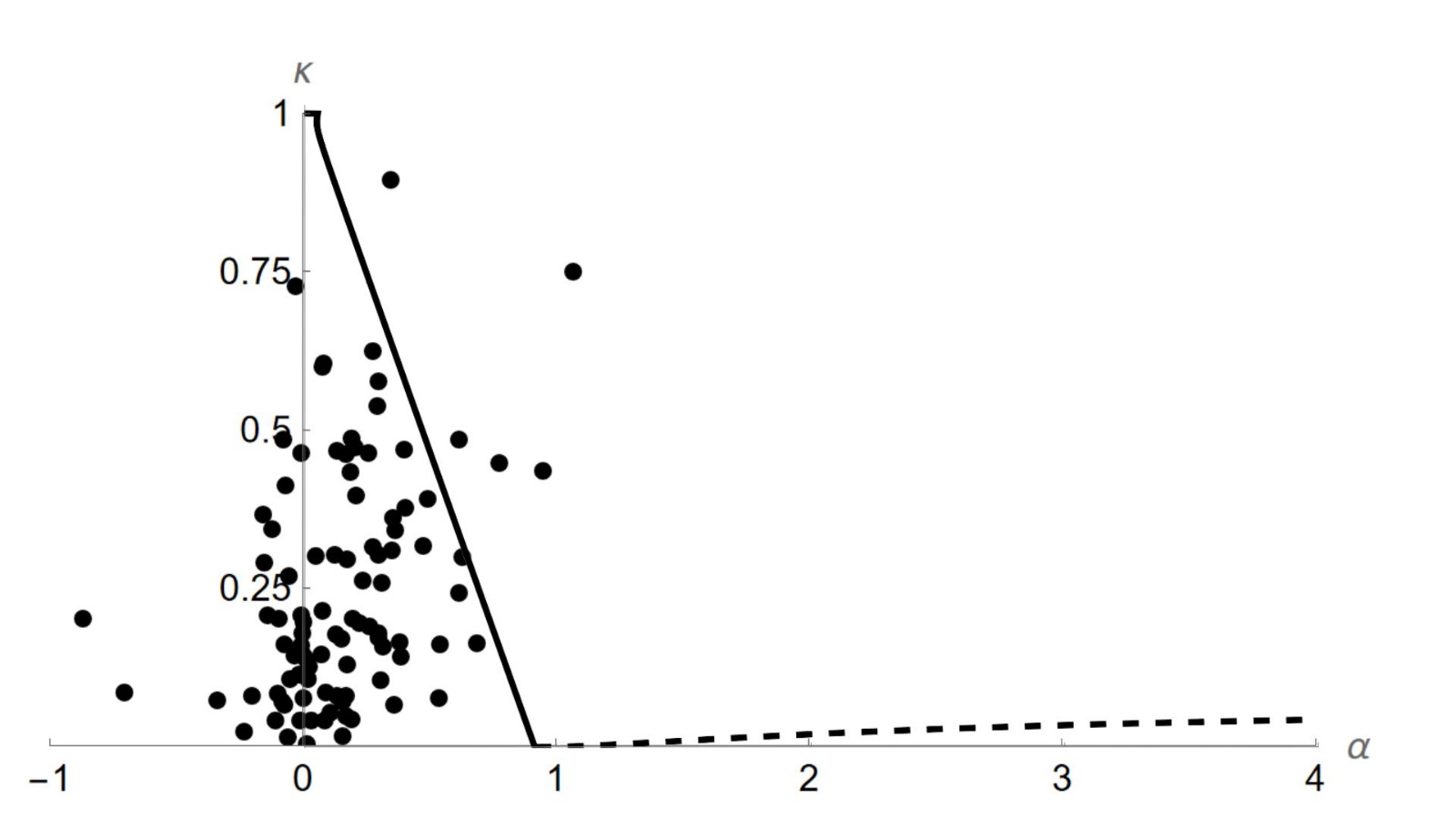}
    \caption{Optimal strategy in Figure 1 with individual estimates.}
    \label{fig:dots}
\end{figure}

          

\begin{table}[ht] \small
\centering
\begin{tabular}{lccccccc}
\toprule
Variable                              & Min         & Q1          & Median      & Q3           & Max          & Mean         & Obs. \\
\midrule
Donation DG                           & 0.00 (0.00) & 0.00 (0.00) & 6.62 (0.11) & 18.60 (0.32) & 53.20 (0.90) & 10.20 (0.17) & 96 \\
Threshold UG & 0.00 (0.00) & 0.00 (0.00) & 0.87 (0.01) & 2.34 (0.04)  & 19.10 (0.33) & 1.90 (0.03)  & 96 \\
\bottomrule
\end{tabular}
\caption{Descriptive Statistics of transfers in DG and Rejection Threshold in UG. \cite{van2023estimating} core sample, CRRA estimates, for $\alpha,\beta \in [-2,2]$ and $\kappa \in [0,1]$. Share of the endowment in parentheses.}
\label{tab:descriptive_stats_donation_DG_x2bar}
\end{table}

\begin{table}[ht] \small
\centering
\begin{tabular}{lccccccc}
\toprule
Variable                              & Min         & Q1          & Median      & Q3           & Max          & Mean         & Obs. \\
\midrule
Donation DG                           & 0.00 (0.00) & 0.00 (0.00) & 5.17 (0.09) & 19.90 (0.34) & 58.80 (1.00) & 10.40 (0.18) & 96 \\
Threshold UG & 0.00 (0.00) & 0.00 (0.00) & 0.00 (0.00) & 0.41 (0.01)  & 2.62 (0.04)  & 0.32 (0.01)  & 96 \\
\bottomrule
\end{tabular}
\caption{Descriptive Statistics of transfers in DG and Rejection Threshold in UG. \cite{van2023estimating} core sample, CRRA estimates with only $\alpha$ and $\beta$, for $\alpha,\beta \in [-2,2]$. Share of the endowment in parentheses.}
\label{tab:descriptive_stats_donation_DG_x2bar_ab}
\end{table}

\begin{figure}[H]
    \centering
    \includegraphics[width=0.7\linewidth]{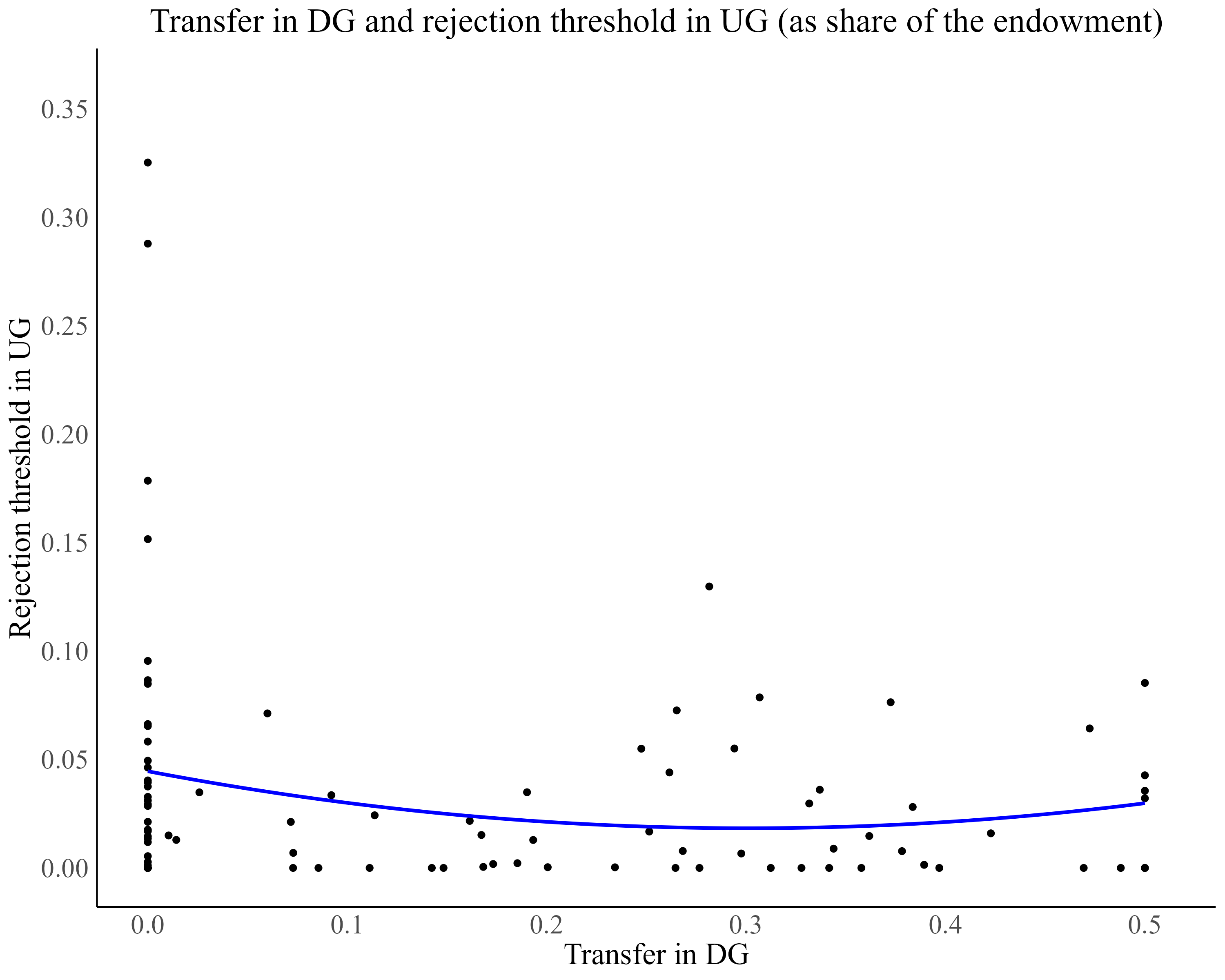}
    \caption{Transfer in DG and constrained optimal rejection threshold in UG. Both expressed as share of the endowment. \cite{van2023estimating} core sample, CRRA estimates, for $\alpha, \beta  \in [-2,2]$ and $\kappa \in [0,1]$. $N = 96$.}
    \label{fig:scatter_plot_w}
\end{figure}

\begin{figure}[H]
    \centering
    \includegraphics[width=0.6\linewidth]{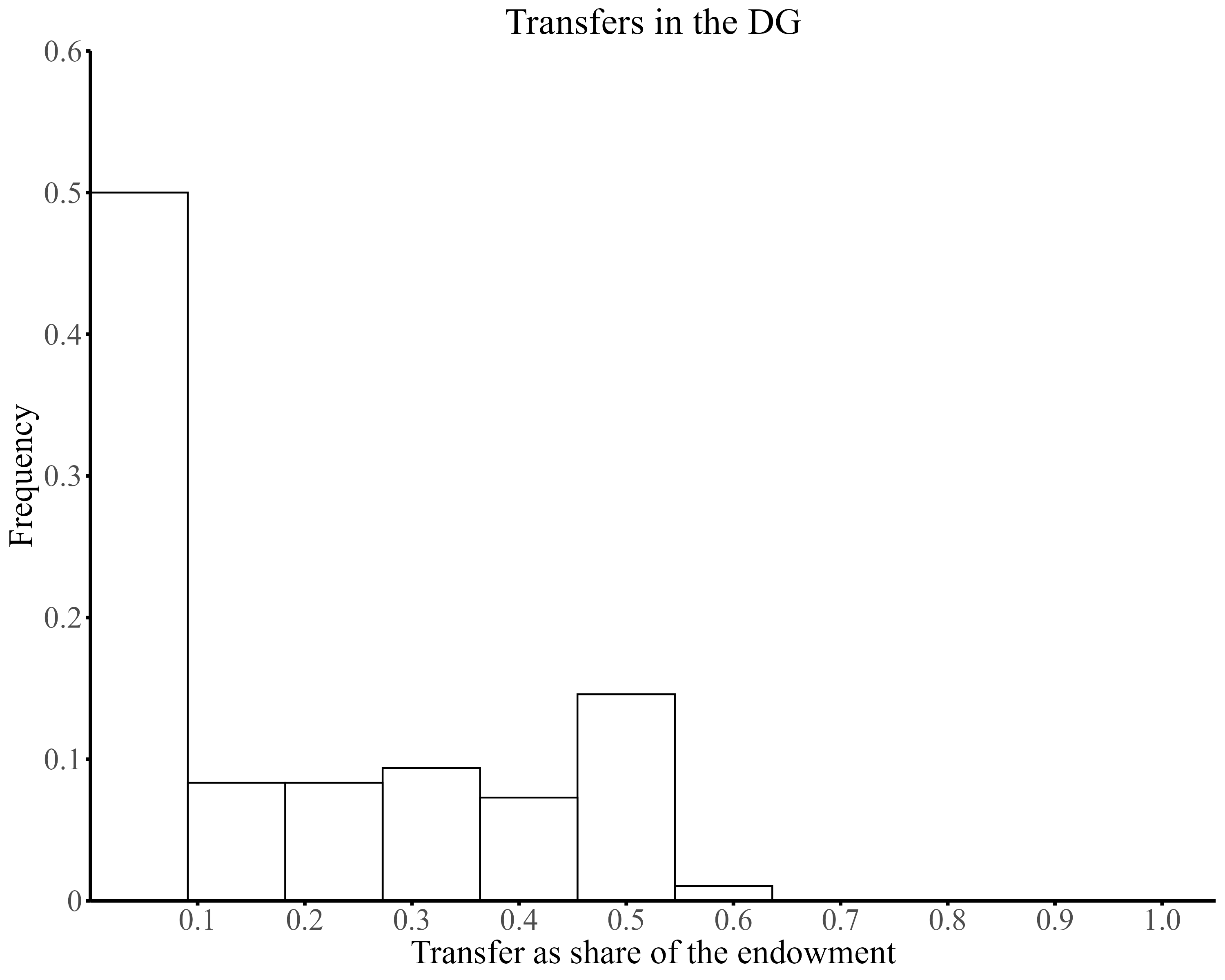}
    \caption{Predicted share transferred in the Dictator Game.  \cite{van2023estimating} CRRA estimates with only $\alpha$ and $\beta$; $\alpha, \beta \in [-2,2]$ and $w = 58.8$ (10 euros). $N = 96$.}  
    \label{fig:histDG_w_ab_all}
\end{figure}

\begin{figure}[H]
    \centering
    \includegraphics[width=0.6\linewidth]{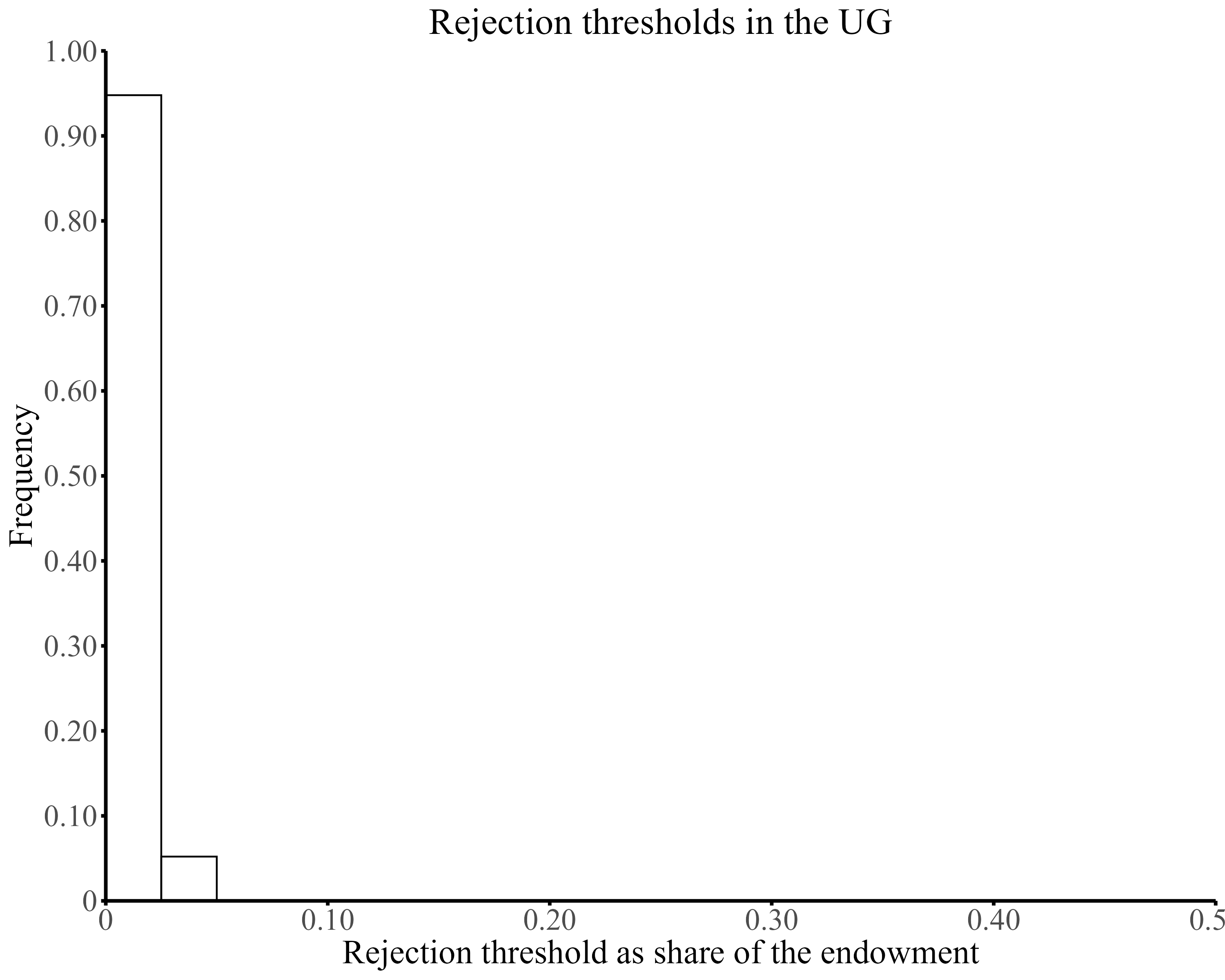}
    \caption{Constrained optimal rejection threshold $\underline{x_2}$ in the UG.\cite{van2023estimating} CRRA estimates with only $\alpha$ and $\beta$; $\alpha, \beta \in [-2,2]$ and $w = 58.8$ (10 euros). $N = 96$.}
    \label{fig:histUG_w_ab_all}
\end{figure}

\begin{figure}[H]
    \centering
    \includegraphics[width=0.6\linewidth]{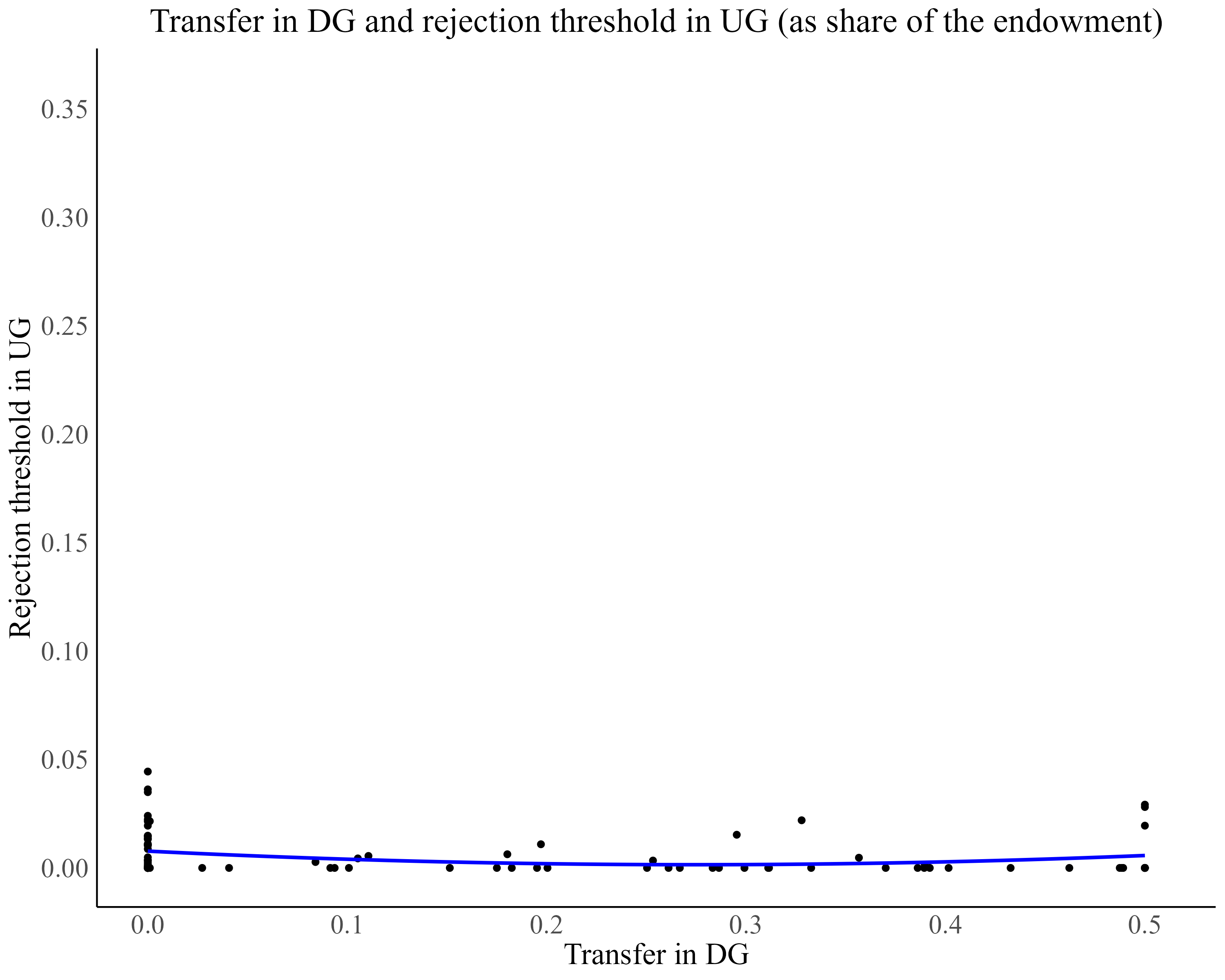}
    \caption{Transfer in DG and constrained optimal rejection threshold in UG. \cite{van2023estimating} CRRA estimates with only $\alpha$ and $\beta$; $\alpha, \beta \in [-2,2]$ and $w = 58.8$ (10 euros). $N = 96$.}
    \label{fig:scatter_plot_w_ab_all}
\end{figure}

\begin{figure}
    \centering
    \includegraphics[width=0.5\linewidth]{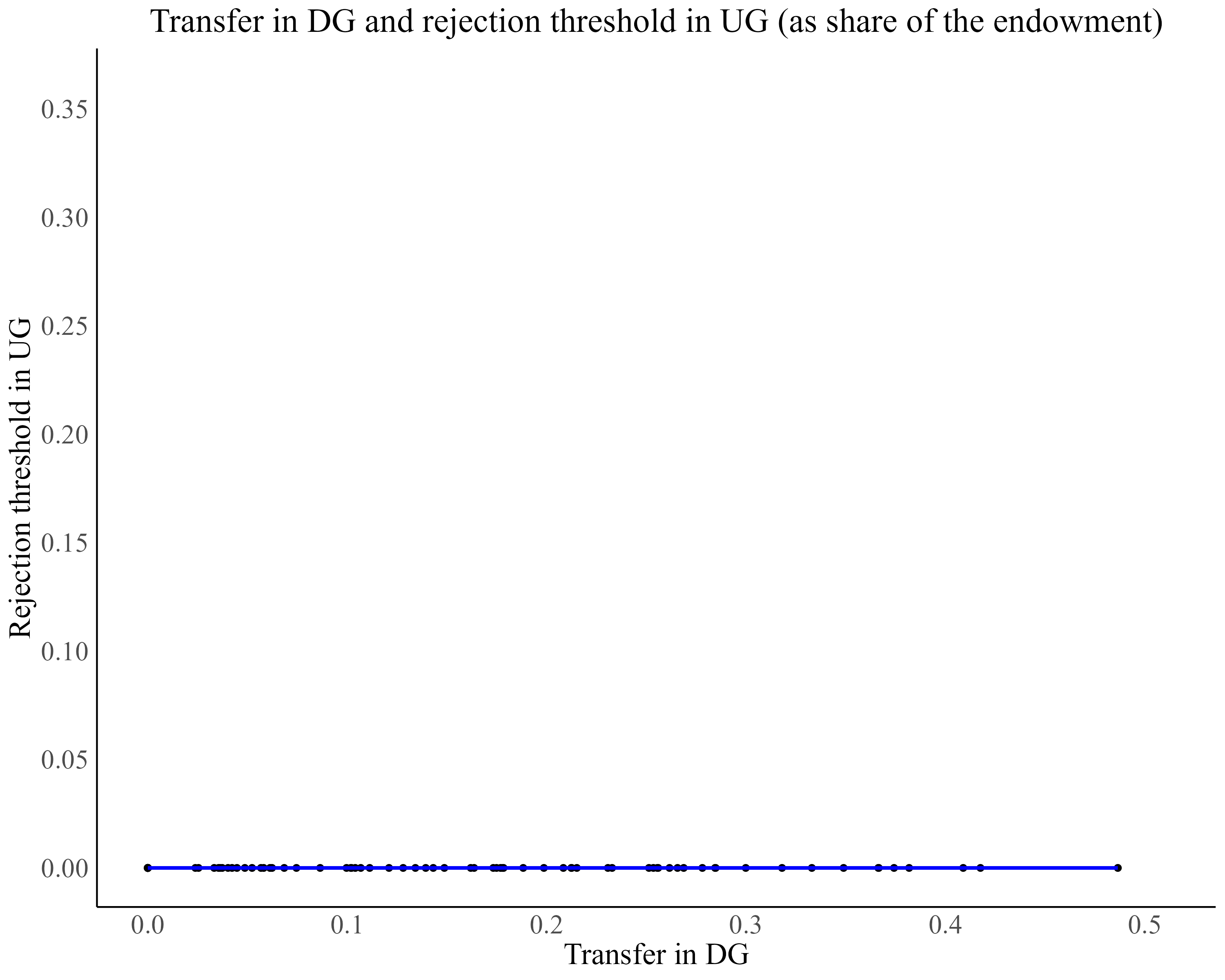}
    \caption{Transfer in DG and constrained optimal rejection threshold in UG. \cite{van2023estimating} CRRA estimates with only $\kappa\in [0,1]$, when $w = 58.8$ (10 euros). }
    \label{fig:scatter_plot_w_c_all}
\end{figure}

\section{Comparison with \cite{branas2014fair}}

\begin{figure}[H]
    \centering
    \includegraphics[width=0.5\linewidth]{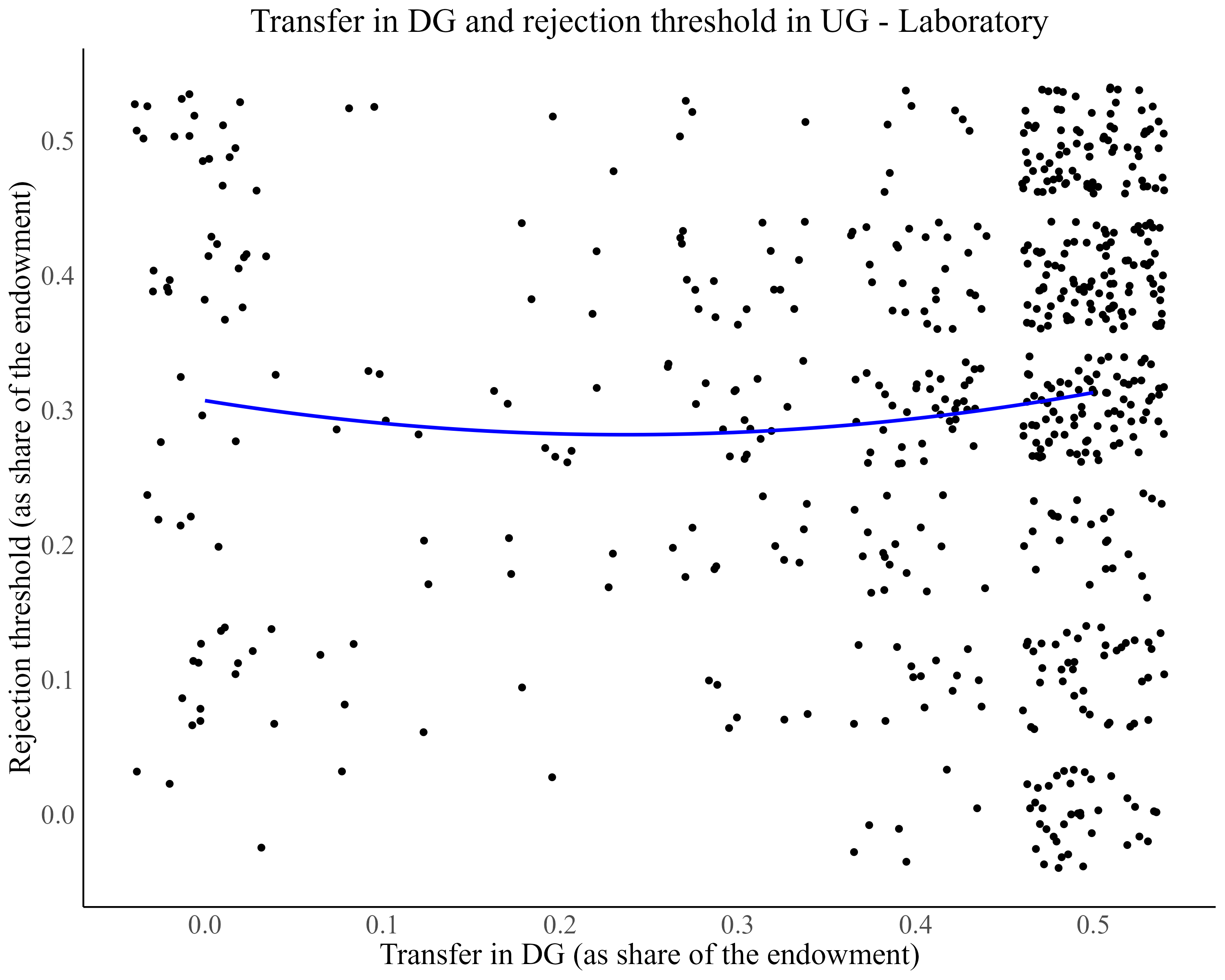}
    \caption{Transfer in DG and rejection threshold in UG in ``Laboratory'' study reported in \cite{branas2014fair}}
    \label{fig:scatter_plot_branas_lab}
\end{figure}

\begin{figure}[H]
    \centering
    \includegraphics[width=0.5\linewidth]{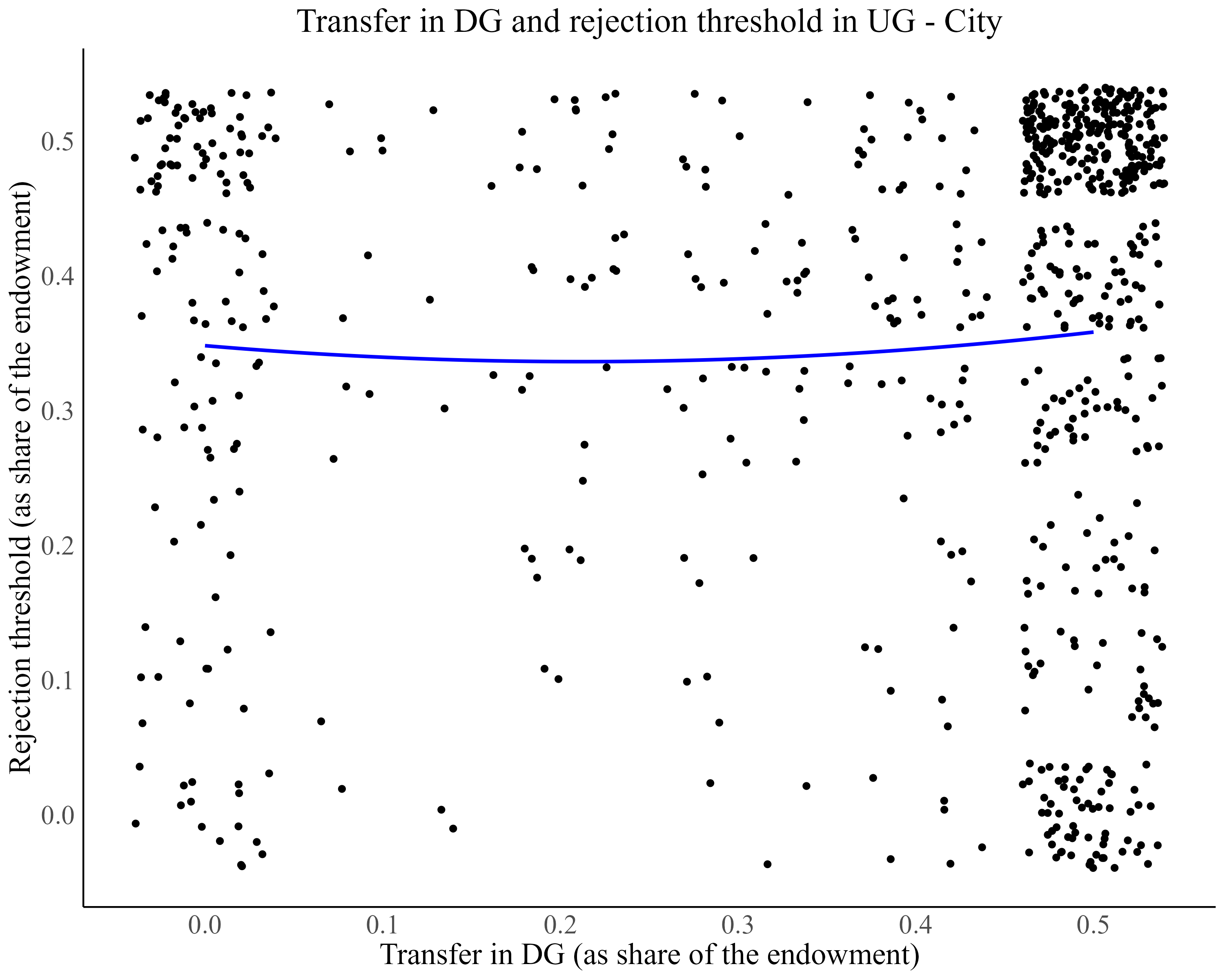}
    \caption{Transfer in DG and rejection threshold in UG in ``City'' study reported in \cite{branas2014fair}}
    \label{fig:scatter_plot_branas_city}
\end{figure}
\end{appendices}
\newpage

\bibliographystyle{chicago}
\bibliography{sample1}

\end{document}